\newtheorem*{rep@theorem}{\rep@title}
\newcommand{\newreptheorem}[2]{%
\newenvironment{rep#1}[1]{%
 \def\rep@title{#2 \ref{##1}}%
 \begin{rep@theorem}}%
 {\end{rep@theorem}}}
\newcommand{\mydriver}{hypertex}
 \renewcommand{\mydriver}{pdftex}
\newcommand{\ie}{\text{i.\,e.}\ }
\newcommand{\eg}{\emph{e.\,g.}\ }
\newcommand{\ar}{\operatorname{ar}}
\newcommand{\Npos}{\mathbb N_{> 0}}
\newcommand{\rot}{\operatorname{ROT}}
\newcommand{\dist}{\operatorname{dist}}
\newcommand{\indexSetRotation}{[D]^2}
\newcommand{\indexSetH}{([D]^2)^2}
\renewcommand{\phi}{\varphi}
\newcommand*\circled[1]{\tikz[baseline=(char.base)]{
		\node[shape=circle,draw,inner sep=1pt] (char) {#1};}}
\DeclareMathOperator{\zigzag}{\circled{{\rm z}}}
\newcommand{\other}[1]{\tilde{#1}}
\newcommand{\struc}[1]{#1}
\newcommand{\univ}[1]{U(#1)}
\newcommand{\classStruc}[1]{\mathcal{#1}}
\newcommand{\rel}[2]{#1(#2)}
\newcommand{\gaifman}[1]{G(#1)}
\newcommand{\graphProp}{\mathcal{P}_{\operatorname{graph}}}
\newcommand{\vet}[1]{\ensuremath{\overline{#1}}}
\theoremstyle{plain}
\newtheorem{theorem}{Theorem}[section]
\newtheorem{lemma}[theorem]{Lemma}
\newtheorem{claim}[theorem]{Claim}
\newtheorem{observation}[theorem]{Observation}
\newtheorem{definition}[theorem]{Definition}
\newtheorem{question}[theorem]{Open Question}
\theoremstyle{definition}
\newtheorem{example}[theorem]{Example}
\newcommand{\junk}[1]{{}}
\title{GSF-locality is not sufficient for proximity-oblivious testing}
\author{Isolde Adler
	\thanks{School of Computing, University of Leeds, UK. Email: \url{I.M.Adler@leeds.ac.uk}. }
	\and
	Noleen K\"ohler
	\thanks{School of Computing, University of Leeds, UK. Email: 	\url{scnk@leeds.ac.uk}.}
	\and
	Pan Peng
	\thanks{Department of Computer Science, University of Sheffield, UK. Email: \url{p.peng@sheffield.ac.uk}.}
}
\date{}
\begin{document}
	
	
\begin{titlepage}
		
\maketitle
		
\thispagestyle{empty}
\begin{abstract}
	In Property Testing, \emph{proximity-oblivious testers (POTs)} form a class of particularly simple testing algorithms, where a basic test is performed a number of times that may depend on the proximity parameter, but the basic test itself is independent of the proximity parameter.

	In their seminal work, Goldreich and Ron [STOC 2009; SICOMP 2011] show that
	the graph properties that allow constant-query proximity-oblivious testing in the bounded-degree model are precisely the properties that can be expressed as a \emph{generalised subgraph freeness (GSF)} property that satisfies the \emph{non-propagation} condition.
	It is left open whether the non-propagation condition is necessary. Indeed, calling
	properties expressible as a generalised subgraph freeness property \emph{GSF-local properties}, they ask
	whether all GSF-local properties are non-propagating.
	We give a negative answer by exhibiting a property of graphs that is GSF-local 
	and propagating. Hence in particular, our property does not
	admit a POT, despite being GSF-local.
	We prove our result by exploiting a recent work of the authors which constructed a first-order (FO) property that is not testable [SODA 2021], and a new connection between FO properties and GSF-local properties via neighbourhood profiles.   

\end{abstract}
		
\end{titlepage}
	
\section{Introduction}

Graph property testing is a framework for studying sampling-based graph algorithms. Given a graph property $\mathcal{P}$, the goal is to design a (randomised) algorithm, called \emph{tester}, that distinguishes between graphs that satisfy $\mathcal{P}$ from those that are `far' from satisfying $\mathcal{P}$, where the notion `being far' depends on the underlying query access model and  is always parametrised by a \emph{proximity parameter $\varepsilon>0$}. The query model also specifies the class of graphs and the types of queries allowed by the algorithm. The two most  well known models for graph property testing are the \emph{dense graph model} and the \emph{bounded-degree graph model} (see \cite{goldreich2017introduction}). Towards an understanding of which graph properties are testable with a constant number of queries in each model, much progress has been made since the framework of property testing was introduced \cite{rubinfeld1996robust,goldreich1998property}. To illustrate, a full characterization of the properties that are testable with a constant number of queries in the dense graph model has been obtained by Alon, Fischer, Newman, and Shapira \cite{alon2009combinatorial}. 

Typical property testers make decisions regarding the global property of the graph from the local views. In the extreme case, a tester could make local views independent of the distance to a predetermined set of graphs. Motivated by this, Goldreich and Ron \cite{goldreich2011proximity} initiated the study of (one-sided error) \emph{proximity-oblivious testers (POTs)} for graphs, where a tester simply repeats a basic test for a number of times that depends on the proximity parameter, and the basic tester is oblivious of the proximity parameter. They gave characterizations of graph properties that can be tested with constant query complexity by a POT in both dense graph model and the bounded-degree model. In each model, it is known that the class of properties that have constant-query POTs is a strict subset of the class of properties that are testable (by standard testers). 

In this paper, we focus on the bounded-degree graph model \cite{GoldreichRon2002}. In this model, the algorithm is given query access to an input graph with maximum degree bounded by $d$, where $d$ is some constant. For any specified query $v$ and an index $i\leq d$, the algorithm can obtain the $i$-th neighbor of $v$ if it exists, and a special symbol $\bot$ otherwise. Given a {proximity parameter} $\varepsilon>0$, an $n$-vertex graph with maximum degree at most $d$ is said to be $\varepsilon$-far from a property $\mathcal{P}$ if one needs to add and/or delete more than $\varepsilon dn$ edges to make it satisfy $\mathcal{P}$. A property is said to be \emph{testable} if there exists a tester that makes only a \emph{constant} number of queries to the input graph $G$, and distinguishes if $G$ satisfies the property $\mathcal{P}$ or is $\varepsilon$-far from satisfying $\mathcal{P}$, with success probability at least $\frac23$. Here the constant is a number that might depend on $\varepsilon$ and $d$, but is independent of the size of the input graph. It has been known that many properties are testable, such as subgraph-freeness, $k$-edge connectivity, cycle-freeness, being Eulerian, degree-regularity~\cite{GoldreichRon2002}, minor-freeness~\cite{benjamini2010every,hassidim2009local,kumar2019random}, hyperfinite properties \cite{NewmanSohler2013}, $k$-vertex connectivity~\cite{yoshida2012property,forster2019computing}, and subdivision-freeness~\cite{kawarabayashi2013testing}.

Turning to POTs, informally, 
a (one-sided error) POT for a property $\mathcal{P}$ is a tester that always accepts a graph $G$ if it satisfies $\mathcal{P}$, and rejects $G$ with probability that is a monotonically increasing function of the distance of $G$ from the property $\mathcal{P}$. We say $\mathcal{P}$ is \emph{proximity-oblivious testable} if such a tester exists for $\mathcal{P}$ with constant query complexity. To characterise the class of proximity-oblivious testable properties in the bounded-degree model, Goldreich and Ron \cite{goldreich2011proximity} introduced a notion of generalized subgraph freeness (GSF), that extends the notions of induced subgraph freeness and (non-induced) subgraph freeness. A graph property is called a \emph{GSF-local} property if it is expressible as a GSF property. It has been shown in \cite{goldreich2011proximity} that a graph property is constant-query proximity-oblivious testable if and only if it is a GSF-local property that satisfies a so-called \emph{non-propagation} condition. 
Informally, a GSF-local property $\mathcal{P}$ is non-propagating if repairing a graph $G$ that does not satisfy $\mathcal{P}$ does not trigger a global ``chain reaction'' of necessary modifications.  
We refer Section \ref{sec:gsf_preliminaries} for formal definitions. 

A major question that is left open is whether every GSF-local property satisfies the non-propagation condition. 

\subsection{Our contribution}

In this paper, we resolve the aforementioned open question raised in \cite{goldreich2011proximity}
by showing the following negative result.  
\begin{theorem}[Main result]\label{thm:existenceLocalNonTestableProperty}
	There exists a GSF-local property that is not testable in the bounded-degree graph model. Thus, not all GSF-local properties are non-propagating.  
\end{theorem}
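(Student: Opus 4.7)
The plan is to derive Theorem~\ref{thm:existenceLocalNonTestableProperty} by combining two ingredients. Since every GSF-local non-propagating property admits a constant-query POT by Goldreich and Ron, and every POT-testable property is in particular testable in the usual sense, it suffices to exhibit a single GSF-local property that is \emph{not} testable in the bounded-degree model; any such property must then violate non-propagation. This gives the second sentence of the theorem for free once the first is established.

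The second ingredient is the authors' SODA 2021 construction of a first-order sentence $\Phi$ whose class of finite models, restricted to graphs of maximum degree $d$, is not testable in the bounded-degree model. The strategy is to show that (a suitable variant of) this FO property is GSF-local. The bridge is the notion of \emph{neighbourhood profile}. On bounded-degree graphs an FO sentence of quantifier rank $r$ is determined, via Hanf/Gaifman locality, by the multiset of isomorphism types of $r$-balls that appear in the graph. On the GSF side, a property is prescribed by specifying, for every subgraph of some fixed radius, which ``closures'' into the rest of the graph are allowed. I would introduce an explicit correspondence between the Hanf-invariants of $\Phi$ and a family of allowed/forbidden local configurations in the GSF template, and verify that the two resulting classes of bounded-degree graphs coincide.

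Concretely, I would proceed as follows. Step one: analyse the SODA 2021 construction to extract the finite list of local neighbourhood types that control satisfaction of $\Phi$. Step two: encode these types as the allowed closures of a GSF specification, possibly after passing to a mild signature enrichment that records boundary information so that every constraint becomes a statement about a small subgraph and its admissible extensions. Step three: prove the equivalence of the GSF specification and the FO sentence on all graphs of maximum degree at most $d$, so that non-testability transfers. Step four: apply Goldreich--Ron to conclude that this GSF-local property must be propagating.

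The main obstacle is step two. Arbitrary FO sentences are strictly more expressive than GSF specifications, whose syntax demands a rigid combinatorial shape: lists of permissible configurations on small subgraphs, closed under certain extensions. The non-testable sentence of SODA 2021 was engineered to separate proximity from local statistics, not to fit the GSF template, so the translation is not automatic. The technical heart of the proof is therefore the neighbourhood-profile correspondence itself, which must be strong enough to convert the relevant FO property into a genuine GSF property while preserving the propagation phenomenon responsible for non-testability.
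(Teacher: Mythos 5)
Your high-level skeleton is correct and matches the paper's: exhibit a GSF-local property that is not testable, then apply Goldreich--Ron (Theorem~\ref{thm:charOfPOT}) to conclude it cannot be non-propagating. You also correctly identify the SODA~2021 non-testable FO property as the starting point and neighbourhood profiles as the bridge between Hanf locality and GSF-locality. However, your "step two'' --- encoding the FO sentence as a GSF specification --- contains a genuine gap, and it is precisely the technical heart the paper has to fill.

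The gap is this: it is \emph{not} the case that every bounded-degree FO property, or even every property defined by a single neighbourhood profile, is GSF-local. Forbidding marked subgraphs can only impose \emph{upper} bounds on occurrences of local configurations; it cannot force a configuration to \emph{appear} (Observation~\ref{obs:expressingExOfType}, and Example~\ref{exa:FONotContainedInGSF}: triangle-containment is FO but not GSF-local). The original SODA~2021 sentence $\varphi_{\zigzag}$ requires the existence of exactly one root element, which is a positive lower bound and therefore \emph{not} a $0$-profile, so it is not directly GSF-local. The paper's resolution has two specific ingredients your proposal lacks. First, it isolates the subclass of $0$-profiles (all lower bounds equal to $0$) and proves in Theorem~\ref{thm:subsetOfFOIsLocal} that finite unions of $0$-profile properties are exactly the right FO fragment that is GSF-local. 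Second, it \emph{modifies} the formula, replacing $\exists^{=1}x\,\varphi_{\operatorname{root}}(x)$ by $\exists^{\leq 1}x\,\varphi_{\operatorname{root}}(x)$; Lemma~\ref{lem:relaxedFormula} then shows this only adds the empty structure (using connectivity of models under the $F$-relation), so non-testability is preserved while making the property $0$-profile definable (Lemma~\ref{lem:neighbouhoodProfilOfPZigZag}). Without this weakening, your step three equivalence would fail.

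A second, smaller deviation: your "mild signature enrichment'' points in roughly the wrong direction. The SODA~2021 property $\classStruc{P}_{\zigzag}$ already lives over a rich relational signature, not over plain graphs. What the paper actually needs is a \emph{local reduction} in the other direction --- replacing coloured directed edges by arrow-graph gadgets (Section~\ref{sec:localreduction}, Lemma~\ref{lem:local_reduction}) --- to land in the bounded-degree graph model, where GSF-locality is defined. One must then verify that this gadget reduction preserves $0$-profile definability (Lemma~\ref{lemma:graphproperty_gsf_local}), which requires a careful re-indexing of the neighbourhood types at a larger radius ($4\ell+2$ rather than $2$). So the correct picture is: weaken the FO sentence to make it $0$-profile, prove $0$-profile $\Rightarrow$ GSF-local, then locally reduce from relational structures down to graphs while keeping both non-testability and the $0$-profile structure.
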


We expect our result would shed some light on a full characterization of testable properties in the bounded degree model. Indeed, in the recent work by Ito, Khoury and Newman \cite{ito2019characterization}, the authors gave a characterization of testable \emph{monotone}
graph properties and testable \emph{hereditary}
graph properties with one-sided error in the bounded-degree graph model; and they asked the open question ``\emph{is every property that is defined by a set of forbidden configurations testable?}'' Since their definition of a property defined by a set of ``forbidden configuration'' is equivalent to a GSF-local property, 
our main result also gives a negative answer to their question. 

\subsection{Proof outline}
The starting point of our proof is a recent result of the authors that there exists a first-order (FO) property that is not testable in the bounded degree graph model \cite{testingFO}, where a property $\mathcal{P}$ is said to be an FO property if it can be expressed by an FO formula, \ie  a quantified formula whose variables represent graph vertices, with predicates for equality and adjacency. 
Intuitively, each structure in the property given in \cite{testingFO} is a hybridization of a sequence of expander graphs and a tree structure, where the expander graphs are recursively constructed by the zig-zag product introduced by Reingold et al.~\cite{ZigZagProductIntroduction}. Here each level of the tree structure forms one member of the recursive sequence of expander graphs. It was shown that this property is both an FO property and a family of expanders, and the latter implies it is not testable (see \eg \cite{fichtenberger2019every}). We refer to Section \ref{sec:proofMainThm} and \cite{testingFO} for a detailed description of the property. 

By Gaifman's locality theorem \cite{gaifman1982local}, it is known that FO  can only express local properties. 
Indeed, Hanf's Theorem \cite{Hanf1965} implies that we can understand this locality as prescribing upper and lower bounds for the occurrence of certain local neighbourhood (isomorphism) types. 

On the other hand, a GSF-local property as defined in \cite{goldreich2011proximity} refers to the freeness of some constant-size \emph{marked} graphs, where a mark graph $F$ specifies an induced subgraph and how it `interacts' with the rest of the graph (see Definition \ref{def:gsf}). Intuitively, such a property just specifies a condition that the local neighbourhoods of a graph $G$ should satisfy, i.e., certain types of local neighbourhoods cannot not occur in $G$, or equivalently, these types have $0$ occurrences. 

Building upon the above observations, we establish a formal connection between FO properties and GSF-local properties. We first encode the possible bounds on occurrences of local neighbourhood types into what we call \emph{neighbourhood profiles}, and characterise FO definable properties of bounded degree relational structures as finite unions of properties defined by neighbourhood profiles (Lemma~\ref{lemma:FO-neighbourhood}). We then show that every FO formula defined by a non-trivial finite union of properties which in turn is defined by a so-called \emph{$0$-profiles}, \ie the prescribed lower bounds are all $0$, is GSF-local (Theorem~\ref{thm:subsetOfFOIsLocal}). Given the fundamental roles of local properties in graph theory, graph limits \cite{LovaszBook2012}, we believe this new connection is of independent interest. 

For technical reasons, we make use of a property $\classStruc{P}_{\zigzag}$ of \emph{relational structures} that can be expressed by some FO formula while it is \emph{not} testable in the bounded-degree model, instead of directly using the non-testable graph property from \cite{testingFO}. We further prove that a minor variant of the {relational structure} property $\classStruc{P}_{\zigzag}$, which we denote by $\classStruc{P}_{\zigzag}'$, can be defined by $0$-profiles (Lemma \ref{lem:neighbouhoodProfilOfPZigZag}). Finally, we construct a non-testable \emph{graph} property $\graphProp$ by a \emph{local} reduction from the $\sigma$-structure property $\classStruc{P}_{\zigzag}'$ (Lemma~\ref{lem:local_reduction}). In the reduction we maintain being definable by $0$-profiles which proves GSF-locality of the graph property $\graphProp$ (Lemma~\ref{lemma:graphproperty_gsf_local}). Intuitively, the property $\graphProp$ encodes the property $\classStruc{P}_{\zigzag}$ in undirected graphs. Again, $\graphProp$ is a family of expanders (which guarantees non-testability), where in addition the local neighbourhoods satisfy the aforementioned features which guarantee that it is an FO property and also GSF-local.

\subsection{Other related work} The notion of POT was implicitly defined in \cite{blum1993self}. Goldreich and Shinkar \cite{goldreich2016two}  studied two-sided error POTs for both dense graph and bounded-degree graph models. Goldreich and Kaufman \cite{goldreichkauf2011proximity} investigated the relation between local conditions that are invariant in an adequate sense and properties that have a constant-query proximity-oblivious testers. 
Fichtenberger et al.~\cite{fichtenberger2019every} showed that every testable property is either finite or contains an infinite hyperfinite subproperty. 
\section{Preliminaries}\label{sec:preliminaries} 
\subsection{Graphs, relational structures and first-order logic}
We will briefly introduce structures and first-order logic and point the reader to~\cite{EF95} for a more detailed introduction.
A  (relational) \textit{signature} is a finite set $\sigma =\{R_1,\dots,R_\ell\}$ of relation symbols $R_i$. Every relation symbol $R_i$  has an arity  $\ar(R_i)\in \Npos$.
A \textit{$\sigma$-structure} is a tuple $\struc{A}=(\univ{A},\rel{R_1}{\struc{A}},\dots,\rel{R_\ell}{\struc{A}})$, where $\univ{A}$ is a \emph{finite} set, called the \emph{universe} of $\struc{A}$ and $\rel{R_i}{\struc{A}}\subseteq \univ{A}^{\ar(R_i)}$ is an $\ar(R_i)$-ary relation on $\univ{A}$. Note that if $\sigma=\{E_1,\ldots,E_{\ell}\}$ is a signature where each $E_i$ is a binary relation
symbol, then $\sigma$-structures are directed graphs 
with $\ell$ edge-colours. Let $\sigma_{\operatorname{graph}}:=\{E\}$ be a signature with one binary relation symbol $E$. Then we can understand undirected graphs as $\sigma_{\operatorname{graph}}$-structures for which the relation $E$ is symmetric (every undirected edge is represented by two tuples). Using this we can transfer all notions defined below for graphs. Typically we name graphs $G,H,F$, we denote the set of vertices of a graph $G$ by $V(G)$, the set of edges by $E(G)$ and vertices are typically named $u,v,w,u',v',w',\dots$. In contrast when we talk about a general relational structure we use $A,B$ and $a,b,a',b',\dots$ to denote elements from the universe.

In the following we let $\sigma$ be a relational signature.
 Two $\sigma$-structures $\struc{A}$ and $\struc{B}$ are \emph{isomorphic} if there is a bijective map from $\univ{A}$ to $\univ{B}$ that preserves all relations. 
For a $\sigma$-structure $\struc{A}$ and a subset $S\subseteq \univ{A}$, we let $ \struc{A}[S]$
denote the \emph{substructure} of $ \struc{A}$ \emph{induced} by $S$, i.\,e.\ $ \struc{A}[S]$ has universe $S$ and $\rel{R}{\struc{A}[S]}:=\rel{R}{ \struc{A}}\cap S^{\text{ar}(R)}$ for all $R\in \sigma$.
The \emph{degree} of an element $a\in \univ{A}$ denoted by $\deg_{\struc{A}}(a)$ is defined to be the number of tuples in $\struc{A}$ 
 containing $a$.
We define the \textit{degree} of $\struc{A}$, denoted by $\deg(\struc{A})$, to be the maximum degree of its elements. 
Given a signature $\sigma$ and a constant $d$, we let $\classStruc{C}_{\sigma,d}$ be the class of bounded-degree $d$ $\sigma$-structures and $\mathcal{C}_d$ the set of all bounded-degree $d$ graphs. Note that the degree of a graph differs by exactly a factor $2$ from the degree of the corresponding $\sigma_{\operatorname{graph}}$-structure.

Syntax and semantic of FO is defined in the usual way (see \eg \cite{EF95}). 
We use $\exists^{\geq m}x\,\phi$ (and $\exists^{= m}x\,\phi$, $\exists^{\leq m}x\,\phi$, respectively)
as a shortcut for the FO formula expressing that  the number of witnesses $x$ satisfying $\phi$
is at least $m$ (exactly $m$, at most $m$, respectively).
We say that a variable occurs \emph{freely} in an FO formula if at least one of its occurrences is not bound by any quantifier.
We use $\varphi(x_1,\dots,x_k)$ to express that the set of variables which occur  freely in the FO formula $\varphi$ is a subset of $\{x_1,\dots,x_k\}$. For a formula $\varphi(x_1,\dots,x_k)$, a $\sigma$-structure $\struc{A}$ and $a_1,\dots,a_k\in \univ{A}$ we write $\struc{A}\models \varphi(a_1,\dots,a_k)$ if $\varphi$ evaluates to true after assigning $a_i$ to $x_i$, for $1\leq i\leq k$. A \emph{sentence} of FO is a formula with no free variables. For an FO sentence $\varphi$ we say that $\struc{A}$ is a \emph{model} of $\varphi$ or $\struc{A}$ satisfies $\varphi$ if $\struc{A}\models \varphi$.

The \textit{Gaifman graph} of a $\sigma$-structure $\struc{A}$ is the undirected graph $\gaifman{\struc{A}}=(\univ{A},E)$, where $\{v,w\}\in E$, if $v\not=w$ and there is an $R\in \sigma$ and a tuple $\overline{a}=(a_1,\dots,a_{\ar(R)})\in \rel{R}{\struc{A}}$, such that $v=a_j$ and $w=a_k$ for some $1\leq k,j\leq \ar(R)$. We use $\gaifman{\struc{A}}$ to apply graph theoretic notions to relational structures. Note that for any graph the Gaifman graph of the corresponding  symmetric $\sigma_{\operatorname{graph}}$-structure is the graph itself.
For two elements $a,b\in \univ{A}$, we define the \emph{distance} between $a$ and $b$ in $\struc{A}$, denoted by
$\dist_{\struc{A}}(a,b)$, as the length of a shortest path form $a$ to $b$ in $\gaifman{\struc{A}}$, or $\infty$ if there is no such path. 
For $r\in \mathbb{N}$ and   $a\in \univ{A}$, the \textit{$r$-neighbourhood} of $a$ is the set $N_r^{\struc{A}}(a):=\{b\in \univ{A}: \dist_{\struc{A}}(a,b)\leq r\}$. We define $\mathcal{N}_r^{\struc{A}}(a):=\struc{A}[N_r^{\struc{A}}(a)]$ to be the substructure of $\struc{A}$ induced by the $r$-neighbourhood of $a$. 
For $r\in \mathbb{N}$ an \emph{$r$-ball} is a tuple $(\struc{B},b)$, where $\struc{B}$ is a $\sigma$-structure, $b\in \univ{B}$ and $\univ{B}=N_r^{\struc{B}}(b)$, \ie $\struc{B}$ has radius $r$ and $b$ is the centre.
Note that by definition $(\mathcal{N}_r^{\struc{A}}(a),a)$ is an $r$-ball for any $\sigma$-structure $\struc{A}$ and $a\in \univ{A}$. Two $r$-balls $(\struc{B},b),(\struc{B}',b')$ are isomorphic if there is an isomorphism of $\sigma$-structure from $\struc{B}$ to $\struc{B}'$ that maps $b$ to $b'$.
We call the isomorphism classes of $r$-balls \emph{$r$-types}. 
For an $r$-type $\tau$ and an element $a\in \univ{A}$ we say that $a$ \emph{has} ($r$-)type $\tau$  if $(\mathcal{N}_r^{\struc{A}}(a),a)\in \tau$. 
Moreover, given such an $r$-type $\tau$, there is a formula $\phi_{\tau}(x)$ such that 
for every $\sigma$-structure $\struc{A}$ and for every $a\in \univ{A}$, $\struc{A}\models\phi_{\tau}(a)$ iff
$(\mathcal{N}_r^{\struc{A}}(a),a)\in \tau$.
A \emph{Hanf-sentence} is a sentence of the form $\exists ^{\geq m} x \phi_{\tau}(x)$, for some $m\in\Npos$, where $\tau$ is an 
$r$-type. 
An FO sentence is in \emph{Hanf normal form}, if it is a Boolean combination\footnote{By Boolean combination we 
	always mean \emph{finite} Boolean combination.} of Hanf sentences.
Two formulas $\phi(x_1,\dots,x_k)$ and $\psi(x_1,\dots,x_k)$ of signature $\sigma$ are called
\emph{$d$-equivalent}, if they are equivalent on $\classStruc{C}_{\sigma,d}$, i.\,e.\ for all $\struc{A}\in \classStruc{C}_{\sigma,d}$ and
$(a_1,\dots,a_k) \in \univ{A}^{k}$  we have 
$\struc{A}\models\phi( a,\dots,a_k)$ iff $\struc{A}\models\psi(a_1,\dots,a_k)$.
Hanf's locality theorem for first-order logic~\cite{Hanf1965} implies the following.

\begin{theorem}[Hanf~\cite{Hanf1965}]\label{thm:Hanf}
	Let $d\in\mathbb N$. Every sentence of first-order logic is $d$-equivalent to a 
	sentence in Hanf normal form.
\end{theorem}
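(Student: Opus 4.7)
The plan is to reduce the theorem to a local-to-global indistinguishability principle: for every quantifier rank $q$ there exist a radius $r = r(q)$ and a threshold $m = m(q)$ such that any two $\sigma$-structures $\struc{A}, \struc{B} \in \classStruc{C}_{\sigma,d}$ whose \emph{$r$-type profiles} (\ie the numbers of elements of type $\tau$ for each $r$-type $\tau$, each capped at $m$) coincide are indistinguishable by FO sentences of quantifier rank $q$. Once this is granted, any FO sentence $\varphi$ of quantifier rank $q$ is equivalent on $\classStruc{C}_{\sigma,d}$ to the finite disjunction, over all profiles realised by some model of $\varphi$, of a Boolean combination of Hanf sentences encoding ``this profile is realised''; that formula is in Hanf normal form. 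The disjunction is finite because in $\classStruc{C}_{\sigma,d}$ any $r$-ball has size bounded by a constant depending only on $d$ and $r$, so the set of $r$-types is finite and each coordinate of a profile takes at most $m+1$ values.

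I would establish the local-to-global principle by induction on $q$ using Ehrenfeucht--Fra\"iss\'e games. Set $r := 3^q$ and take $m$ to be $q$ times the maximum size of an $r$-ball in $\classStruc{C}_{\sigma,d}$, so that the chosen threshold can absorb $q$ rounds of losses. The inductive claim is that agreement of $r$-type profiles capped at $m$ is a winning condition for Duplicator in the $q$-round EF game. In the inductive step, when Spoiler picks $a \in \univ{A}$, Duplicator picks some $b \in \univ{B}$ of the same $r$-type as $a$, which exists because the thresholded counts agree, and which automatically gives an isomorphism between the $r$-balls around the two pebbles. The key calculation is that, after removing the $(r/3)$-neighbourhoods of $a$ and $b$, the profiles of $(r/3)$-types on the complements still agree up to the reduced threshold $m - s$, where $s$ bounds the number of elements whose $(r/3)$-type is affected by the removal; the factor $3$ in the recursion $r \mapsto r/3$ guarantees that subsequent $(r/3)$-balls are ``seen'' correctly by the already-chosen pair, while the padding of $m$ absorbs the cumulative loss over the remaining $q-1$ rounds.

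Assembling the Hanf normal form for $\varphi$ is then a matter of bookkeeping: for each profile $P$ compatible with $\varphi$, write down the conjunction of clauses $\exists^{\geq j_\tau} x\,\phi_{\tau}(x) \wedge \neg \exists^{\geq j_\tau + 1} x\,\phi_\tau(x)$ over all $r$-types $\tau$ (with the upper clause dropped when $j_\tau = m$), and disjoin over the finitely many such profiles $P$. The main obstacle I anticipate is the EF inductive step: one must formulate the profile notion carefully enough that subtracting the contribution of a single freshly chosen pebble genuinely preserves threshold equality on both sides, and that Duplicator's answer of the ``same $r$-type'' meshes with a Duplicator strategy coming from the smaller game on the complements. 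Getting the constants $r$ and $m$ to fit together across the levels of the induction, and verifying that the remaining $(q-1)$-round game on the complements truly falls within the scope of the inductive hypothesis, is the delicate part of the classical Hanf--Ehrenfeucht argument.
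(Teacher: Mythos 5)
The paper cites Hanf's Theorem from Hanf (1965) without proof, so there is no in-paper argument to compare against; I evaluate your sketch on its own terms. The overall plan is the standard textbook proof and is sound: prove a local-to-global indistinguishability lemma (for suitable $r,m$ as functions of the quantifier rank $q$, agreement of thresholded $r$-type counts implies agreement on all quantifier-rank-$q$ sentences over $\mathcal{C}_{\sigma,d}$), then read off the Hanf normal form as a finite disjunction, over thresholded profiles realised by models of $\varphi$, of conjunctions of Hanf clauses. The geometric-in-$q$ radius and the threshold padded by a multiple of the $r$-ball volume are of the right order, and finiteness of the disjunction follows from the degree bound exactly as you observe.

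The part that needs work is the Ehrenfeucht--Fra\"iss\'e inductive step, which you rightly flag as delicate. Your description handles only the case in which Spoiler's new element is far from the existing pebbles: there, thresholded counts on a punctured structure do supply a fresh element of the right type. But you also need the near case, in which the new element falls inside the $r_{i-1}$-ball of an already-pebbled element; here Duplicator cannot pick ``any $b$ of the same $r$-type,'' she must respond using the isomorphism between the $r_{i-1}$-balls carried from the previous round, and your complement-based profile (which discards exactly the region where the near case lives) does not record the information needed. The standard invariant --- after round $i$, an isomorphism between the $r_i$-neighbourhoods of the pebbled $i$-tuples, with $r_{i-1}\geq 3r_i+1$, together with agreement of thresholded $r_i$-type counts up to a budget that shrinks by at most one $r$-ball volume per round --- makes both cases go through. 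Your sketch can likely be repaired along these lines (and you would also need to check that deleting one ball perturbs only boundedly many type counts), but as written the inductive step has a gap.
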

\subsection{Property testing}
In the following, we give definitions of two models for property testing - the bounded-degree model for graphs and the bounded-degree model for relational structures. For notational convenience, $\mathcal{C}$ will either denote a class of graphs of bounded-degree $d$, or a class of $\sigma$-structures of bounded-degree $d$ for some signature $\sigma$ and some $d\in \mathbb{N}$. We will further refer to both graphs and $\sigma$-structures as structures. A \emph{property} $\mathcal{P}$ in $\mathcal{C}$ is a subset of $\mathcal{C}$ which is closed under isomorphism. We say that a structure $\struc{A}$ has property $\mathcal{P}$ if $\struc{A}\in \mathcal{P}$. For $\epsilon\in (0,1)$ we say that a structure $\mathcal{A}$ on $n$ vertices/elements is \emph{$\epsilon$-close} to $\mathcal{P}$  if there is a structure $\struc{A}'\in \mathcal{P}$ such that $\struc{A}$ and $\struc{A}'$ differ in at most $\epsilon d n$ edges/tuples. We say that $\struc{A}\in \mathcal{C}$ is $\epsilon$-far from $\mathcal{P}$ if $\struc{A}$ is not $\epsilon$-close to $\mathcal{P}$.

A property tester accesses a structure via oracle queries. A \emph{query} to a $\sigma$-structure  $\struc{A}$ of bounded-degree $d$ has the form $(a,i)$ for an element $a\in \univ{A}$, $i\in \{1,\dots,d\}$ and is answered by $\operatorname{ans}(a,i):=(R,a_1,\dots,a_{\ar(R)})$ where $(a_1,\dots,a_{\ar(R)})$ is the $i$-th tuple containing $a$ and $(a_1,\dots,a_{\ar(R)})\in \rel{R}{\struc{A}}$.  A \emph{query} to a graph $G$ of bounded-degree $d$ has the form $(v,i)$ for  $v\in V(G)$, $i\in \{1,\dots,d\}$ and is answered by $\operatorname{ans}(v,i):=w$ where $w$ is the $i$-th neighbour of $v$.   

Let $\mathcal{P}_n$ be the subset of $\mathcal{P}$ with $n$ vertices/elements. Thus $\mathcal{P}=\cup_{n\in \mathbb{N}}\mathcal{P}_n$. We give the formal definitions of standard property testing and proximity-oblivious testing in Appendix \ref{app:preliminaries}.  

\subsection{Generalised subgraph freeness}\label{sec:gsf_preliminaries}
Now we present the formal definition of generalised subgraph freeness, GSF-local properties and the notion of non-propagation, which were introduced in \cite{goldreich2011proximity}.  
\begin{definition}[Generalized subgraph freeness (GSF)
	]\label{def:gsf}
	A \emph{marked} graph is a graph with each vertex marked as either \emph{`full'} or \emph{`semifull'} or \emph{`partial'}. An \emph{embedding} of a marked graph $F$ into a graph $G$ is an injective map $f:V(F)\rightarrow V(G)$ such that for every $v\in V(F)$ the following three conditions hold.
	\begin{enumerate}
		\item If $v$ is marked `full', then  
		$N_1^G(f(v))=f(N_1^F(v))$.
		\item If $v$ is marked `semifull', then 
		$N_1^G(f(v))\cap f(V(F))=f(N_1^F(v))$.
		\item If $v$ is marked `partial', then 
		$N_1^G(f(v))\supseteq f(N_1^F(v))$.
	\end{enumerate}
	The graph $G$ is called $F$-free if  there is no embedding of $F$ into $G$. For a set of marked graphs $\mathcal{F}$, a graph $G$ is called $\mathcal{F}$-free if it is $F$-free for every $F\in \mathcal{F}$. 
\end{definition}
Based on the above definition of GSF, we can define GSF-local properties. 
\begin{definition}[GSF-local properties]
	Let $\mathcal{P}=\cup_{n\in \mathbb{N}}\mathcal{P}_n$ be a graph property where $\mathcal{P}_n=\{G\in \mathcal{P}\mid |V(G)|=n\}$ and  $\overline{\mathcal{F}} = (\mathcal{F}_n)_{n \in \mathbb{N}}$ a sequence of sets of
	marked graphs. $\mathcal{P}$ is called \emph{$\overline{\mathcal{F}}$-local} if there exists an integer $s$ such that for every $n$ the following conditions hold.
	\begin{enumerate}
		\item $\mathcal{F}_n$ is a set of marked graphs, each of size at most $s$.
		\item $\mathcal{P}_n$ equals the set of $n$-vertex graphs that are $\mathcal{F}_n$-free. 
	\end{enumerate}
	$\mathcal{P}$ is called \emph{GSF-local} if there is a sequence $\overline{\mathcal{F}} = (\mathcal{F}_n)_{n \in \mathbb{N}}$  of sets of
	marked graphs such that $\mathcal{P}$ is $\overline{\mathcal{F}}$-local.
\end{definition}
The following notion of non-propagating condition of a sequence of sets of marked graphs was introduced to study constant-query POTs.
\begin{definition}[Non-propagating]
	Let $\overline{\mathcal{F}} = (\mathcal{F}_n)_{n \in \mathbb{N}}$ be a sequence of sets of
	marked graphs.
	\begin{itemize}
		\item For a graph $G $, a subset $B \subset V(G)$ \emph{covers} $\mathcal{F}_n$ in $G$ if for every marked
		graph $F \in  \mathcal{F}_n$ and every embedding of $F$ in $G$, at least one vertex of $F$ is mapped to a vertex
		in $B$.
		\item The sequence $\overline{\mathcal{F}}$ is \emph{non-propagating} if there exists a (monotonically non-decreasing) function
		$\tau: (0, 1] \rightarrow (0, 1]$ such that the following two conditions hold.
		\begin{enumerate}
			\item For every $\epsilon > 0$ there exists $\beta > 0$ such that $\tau(\beta) < \epsilon$.
			\item For every graph $G$ and every $B \subset V(G)$ such that $B$ covers $\mathcal{F}_n$ in $G$, either $G$
			is $\tau(|B|/n )$-close to being $\mathcal{F}_n$-free or there are no $n$-vertex graphs that are $\mathcal{F}_n$-free. 
		\end{enumerate}
			A GSF-local property $\mathcal{P}$ is \emph{non-propagating} if there exists a non-propagating sequence $\overline{\mathcal{F}}$ such
		that $\mathcal{P}$ is $\overline{\mathcal{F}}$-local.
	\end{itemize}
\end{definition}
In the above definition, the set $B$ can be viewed as the set involving necessary modifications for repairing a graph $G$ that does not satisfy the property $\mathcal{P}$ that is $\overline{\mathcal{F}}$-local, and the second condition says we do not need to modify $G$ ``much beyond'' $B$. In particular, it implies we can repair 
$G$ without triggering a global ``chain reaction''. 
Goldreich and Ron gave the following characterization for the proximity-oblivious testable properties in the bounded-degree graph model. 
\begin{theorem}[Theorem 5.5 in \cite{goldreich2011proximity}]\label{thm:charOfPOT}
	A graph property $\mathcal{P}$ has a
	constant-query proximity-oblivious tester if and only if $\mathcal{P}$ is GSF-local and non-propagating.
\end{theorem}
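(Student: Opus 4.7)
The plan is to prove the two implications separately. The ``if'' direction constructs a POT directly from a non-propagating GSF-local description, while the ``only if'' direction extracts such a description from a given POT and then verifies both GSF-locality and non-propagation of the extracted family.

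For sufficiency, assume $\mathcal{P}$ is $\overline{\mathcal{F}}$-local with size bound $s$ and that $\overline{\mathcal{F}}$ is non-propagating with witness function $\tau$. I would define the basic test as: sample a uniformly random vertex $v\in V(G)$, expose the radius-$s$ ball around $v$ using at most $d^{s+1}$ queries, and reject iff the exposed substructure witnesses an embedding of some $F\in\mathcal{F}_n$ whose image contains $v$. Perfect completeness is immediate from $\mathcal{F}_n$-freeness of graphs in $\mathcal{P}$. For soundness, let $B(G)\subseteq V(G)$ be the set of vertices lying in at least one embedding of a marked graph in $\mathcal{F}_n$. By construction $B(G)$ covers $\mathcal{F}_n$ in $G$, so non-propagation gives: if $G$ is $\epsilon$-far from $\mathcal{P}$, then $\tau(|B(G)|/n)\geq \epsilon$, hence $|B(G)|/n\geq\beta(\epsilon)$ for some positive $\beta(\epsilon)$. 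The rejection probability of the basic test equals $|B(G)|/n$, so it is already a POT of the required shape (with detection rate $\beta$).

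For necessity, assume $\mathcal{P}$ admits a one-sided POT $T$ with constant query complexity $q$ and detection probability at least $\rho(\mathrm{dist}_\mathcal{P}(G))$ for a monotone function $\rho$. A standard normalization argument shows $T$ may be assumed to sample a constant number $k$ of ``seed'' vertices uniformly at random, to inspect the BFS ball around each seed up to some constant radius $r=r(q,d)$, and then to decide deterministically on the isomorphism type of the inspected substructure with seeds marked. For each rejecting transcript I would form a marked graph $F$ whose vertex set is the union of the inspected balls, with interior vertices (those whose full neighbourhood was revealed) marked \emph{full}, boundary vertices at distance exactly $r$ from their seed marked \emph{partial}, and seeds marked \emph{semifull}, arranged so that an embedding of $F$ into a graph $G$ corresponds exactly to a seed-placement producing this transcript. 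Let $\mathcal{F}_n$ be the collection of all such rejection-witnessing marked graphs; it has size bound $s=kd^{r+1}$. One-sided error forces $\mathcal{P}\subseteq\{G:G\text{ is }\mathcal{F}_n\text{-free}\}$, and the detection guarantee gives the converse ($\mathcal{F}_n$-free graphs are never rejected, so by contrapositive of the detection inequality they must satisfy $\mathcal{P}$).

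The main obstacle is verifying non-propagation for the extracted family $\overline{\mathcal{F}}$. The key observation I would push through is that if $B\subseteq V(G)$ covers $\mathcal{F}_n$ in $G$, then every rejecting run of $T$ must place at least one seed within distance $r$ of $B$, an event of probability at most $k\cdot d^{r+1}\cdot|B|/n$. Combining this with the POT guarantee $\rho(\mathrm{dist}_\mathcal{P}(G))\leq kd^{r+1}|B|/n$ yields a distance bound $\mathrm{dist}_\mathcal{P}(G)\leq\rho^{-1}(kd^{r+1}|B|/n)$, so setting $\tau(\beta):=\rho^{-1}(kd^{r+1}\beta)$ is the required non-propagation function; the two technical clauses of the definition correspond to the monotonicity/inverse behaviour of $\rho$ and to the degenerate case when no $\mathcal{F}_n$-free $n$-vertex graphs exist. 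The delicate point is calibrating the ``full/semifull/partial'' marks precisely so that embeddings of the marked graphs biject with seed placements realising a rejecting transcript, since any slack here breaks both the characterization $\mathcal{P}=\{\mathcal{F}_n\text{-free graphs}\}$ and the cover-to-rejection counting used for non-propagation.
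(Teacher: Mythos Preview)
The paper does not contain a proof of this theorem at all: it is quoted verbatim as ``Theorem~5.5 in \cite{goldreich2011proximity}'' and used as a black box in the final proof of Theorem~\ref{thm:existenceLocalNonTestableProperty}. There is therefore nothing in the paper to compare your proposal against.

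That said, your sketch is broadly along the lines of the original Goldreich--Ron argument, but the ``if'' direction as you wrote it has a genuine gap. You sample a \emph{single} vertex $v$ and explore its radius-$s$ ball, then claim the rejection probability equals $|B(G)|/n$. This is only correct when every $F\in\mathcal{F}_n$ is connected: if $F$ is a disjoint union of two components, an embedding of $F$ can place one component near $v$ and the other arbitrarily far away in $G$, so the radius-$s$ ball around $v$ cannot certify the embedding, and your tester will fail to reject even when $v\in B(G)$. The definition of GSF-locality imposes no connectivity constraint on the marked graphs, and such disconnected $F$ do arise naturally (e.g.\ in the proof of Claim~\ref{claim:GSFProfileIsGSFLocal} in this very paper, the $k$-realisations of a type are typically disconnected). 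The fix in the original proof is to sample $s$ vertices rather than one and explore a ball around each; the rejection probability is then lower-bounded by a polynomial in $|B(G)|/n$ rather than equal to it, and the detection function $\eta$ absorbs the polynomial loss.

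For the ``only if'' direction, your outline is closer to correct, though your proposed marking scheme (seeds semifull, interior full, boundary partial) is not quite what is needed: the marks must encode exactly which adjacencies the tester's queries have verified, and a seed vertex whose entire neighbour list was queried should be marked full, not semifull. You correctly flag this calibration as the delicate point; getting it wrong breaks the bijection between embeddings and rejecting transcripts, and with it both the equality $\mathcal{P}_n=\{\mathcal{F}_n\text{-free graphs of size }n\}$ and the covering bound you use for non-propagation.
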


The following open question was raised in \cite{goldreich2011proximity}. 
\begin{question}[Are all GSF-local properties non-propagating?]
Is it the case that for every GSF-local property $\mathcal{P}=\cup_{n\in \mathbb{N}}\mathcal{P}_n$,  there is a sequence $\overline{\mathcal{F}} = (\mathcal{F}_n)_{n \in \mathbb{N}}$ that is non-propagating and $\mathcal{P}$ is $\overline{\mathcal{F}}$-local?
\end{question}
\section{Relating different notions of locality}\label{sec:relatingNotionsOfLocality}
In this section we define properties by prescribing upper and lower bounds on the number of occurrence of  neighbourhood types. These bounds are given by \emph{neighbourhood profiles} which we will define formally below. 
We use these properties to give a natural characterization of FO properties of bounded-degree structures in Lemma~\ref{lemma:FO-neighbourhood}, which is a straightforward consequence of Hanf's Theorem (Theorem~\ref{thm:Hanf}). We use this characterization to establish links between FO definability and GSF-locality. This connection is  the key ingredient in the proof of our main theorem.\\

Observe that for fixed $r,d\in \mathbb N$ and $\sigma$, there are only finitely many $r$-types in structures in $\classStruc{C}_{\sigma,d}$.
For any signature $\sigma$ and  $d,r\in \mathbb{N}$ we let $n_{d,r,\sigma}\in \mathbb{N}$ be the number of different $r$-types of $\sigma$-structures of degree at most $d$. Assuming that for all $d,r\in \mathbb{N}$ the $r$-neighbourhood-types of $\sigma$-structures of degree at most $d$ are ordered, we let $\tau_{d,r,\sigma}^i$ denote the $i$-th 
such neighbourhood type, for $i\in \{1,\dots,n_{d,r,\sigma}\}$. 
With each $\sigma$-structure $\struc{A}\in \classStruc{C}_{\sigma,d}$ we associate its 
\emph{$r$-histogram vector} $\vet{v}_{d,r,\sigma}(\struc{A})$, given by 
\begin{displaymath}
	(\vet{v}_{d,r,\sigma}(\struc{A}))_i:=|\{a\in \univ{A}\mid \mathcal{N}_{r}^{\struc{A}}(a)\in \tau_{d,r,\sigma}^i\}|.
\end{displaymath} 
We let 
\begin{displaymath}
	\mathfrak{I}:=\{[k,l],[k,\infty)\mid k\leq l\in \mathbb{N}\}
\end{displaymath}
be the set of all closed or half-closed, infinite intervals with natural lower/upper bounds.
\begin{definition}
	Let $\sigma$ be a signature and $d,r\in \mathbb{N}$.
	\begin{enumerate}
		\item An \emph{$r$-neighbourhood profile} 
		of degree $d$ is a function $\rho:\{1,\dots,n_{d,r,\sigma}\}\rightarrow \mathfrak{I}$. 
		\item For a structure $\struc{A}\in \classStruc{C}_{\sigma,d}$, we say $\struc{A}$ obeys $\rho$, denoted by $\struc{A}\sim \rho$, if
		\[
		(\vet{v}_{d,r,\sigma}(\struc{A}))_i\in \rho(i) \text{ for all }i\in \{1,\dots,n_{d,r,\sigma}\}.
		\]
		Let $\classStruc{P}_\rho$ be the set of structures $\struc{A}$ that obey $\rho$, i.e., $\classStruc{P}_\rho=\{\struc{A}\in \classStruc{C}_{\sigma,d}\mid \struc{A}\sim \rho\}$.
	\item We say that a property $\classStruc{P}$ is \emph{defined by a finite union of neighbourhood profiles} if there is $k\in \mathbb{N}$ such that  $\classStruc{P}=\bigcup_{1\leq i \leq k}\classStruc{P}_{\rho_i}$ where $\rho_i$ is an $r_i$-neighbourhood profile and $r_i\in \mathbb{N}$ for every $i\in \{1,\dots,k\}$. 
	\end{enumerate}
\end{definition}

We let $n_{d,r}:=n_{d,r,\sigma_{\operatorname{graph}}}$ denote the total number of $r$-type of 
undirected graphs of degree at most $d$, and let $\tau_{d,r}^i:=\tau_{d,r,\sigma_{\operatorname{graph}}}^i$ be the $i$-th $r$-type of bounded degree $d$, for any $i\in \{1,\dots,n_{d,r}\}$. Further, for a graph $G$ let $\vet{v}_{d,r}(G)$ denote the $r$-histogram vector of $G$.  Note that for any type $\tau_{d,r}^i$ where the edge relation is not 
symmetric we have that $(\vet{v}_{d,r}(G))_i=0$ and therefore 
in any $r$-neighbourhood profile $\rho$ for graphs we have $\rho(i)=[0,0]$ for any type $\tau_{d,r}^i$ which is not symmetric.   

We now give a lemma showing that bounded-degree FO properties can be equivalently defined as finite unions of properties defined by neighbourhood profiles. Here the technicalities  that arise are due to Hanf normal form not requiring the locality-radius  of all Hanf-sentences to be the same. The proof of Lemma~\ref{lemma:FO-neighbourhood} is deferred to Appendix~\ref{app:C}.

\begin{lemma}\label{lemma:FO-neighbourhood}
	For every non-empty property $\classStruc{P}\subseteq \classStruc{C}_{\sigma,d}$, $\classStruc{P}$ is FO definable on $\classStruc{C}_{\sigma,d}$ if and only if
	$\classStruc{P}$ can be obtained as a finite union of properties defined by neighbourhood profiles.
\end{lemma}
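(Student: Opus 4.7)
The plan is to prove both directions of the equivalence, with Hanf's theorem (Theorem~\ref{thm:Hanf}) as the central tool. The backward direction is straightforward; the forward direction requires a small technical step to reconcile Hanf sentences of different locality radii into a single common radius, after which a disjunctive normal form (DNF) argument packages the constraints as neighbourhood profiles.

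For the backward direction, suppose $\classStruc{P}=\bigcup_{i=1}^k\classStruc{P}_{\rho_i}$ with each $\rho_i$ an $r_i$-neighbourhood profile. I would use the formulas $\phi_{\tau_{d,r_i,\sigma}^j}(x)$ that define the $r_i$-types, and for each $j$ translate the interval $\rho_i(j)=[k_j,l_j]$ into the sentence $(\exists^{\geq k_j}x\,\phi_{\tau_{d,r_i,\sigma}^j}(x))\wedge\neg(\exists^{\geq l_j+1}x\,\phi_{\tau_{d,r_i,\sigma}^j}(x))$, with the upper bound conjunct omitted when $\rho_i(j)=[k_j,\infty)$. Conjoining these sentences over $j$ defines $\classStruc{P}_{\rho_i}$ in FO, and taking the disjunction over $i$ defines $\classStruc{P}$ in FO.

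For the forward direction, let $\varphi$ be an FO sentence defining $\classStruc{P}$. By Hanf's theorem, $\varphi$ is $d$-equivalent on $\classStruc{C}_{\sigma,d}$ to a Boolean combination $\psi$ of Hanf sentences $\exists^{\geq m}x\,\phi_{\tau}(x)$, whose locality radii may differ. The key step is to rewrite $\psi$ so that all Hanf sentences share a single radius $r=\max_i r_i$. For every $r'$-type $\tau$ with $r'<r$ and every element $a$, the $r'$-neighbourhood of $a$ is determined by its $r$-neighbourhood, so there is a finite set $T_\tau$ of $r$-types that project to $\tau$, and the $r'$-count of $\tau$ in $\struc{A}$ equals $\sum_{\tau'\in T_\tau}(\vet v_{d,r,\sigma}(\struc{A}))_{j(\tau')}$. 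Hence, on $\classStruc{C}_{\sigma,d}$,
\[
\exists^{\geq m}x\,\phi_\tau(x)\ \equiv\ \bigvee_{\substack{(m_{\tau'})_{\tau'\in T_\tau}\in\mathbb N^{T_\tau}\\ \sum_{\tau'} m_{\tau'}=m}}\ \bigwedge_{\tau'\in T_\tau}\exists^{\geq m_{\tau'}}x\,\phi_{\tau'}(x),
\]
a finite Boolean combination of radius-$r$ Hanf sentences. Applying this rewriting to every sub-radius Hanf sentence in $\psi$ yields an equivalent Boolean combination $\psi'$ of radius-$r$ Hanf sentences.

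Now I would put $\psi'$ into DNF. Each disjunct is a conjunction of radius-$r$ Hanf sentences and their negations; a positive conjunct $\exists^{\geq m_j}x\,\phi_{\tau_{d,r,\sigma}^j}(x)$ imposes the lower bound $(\vet v_{d,r,\sigma}(\struc{A}))_j\geq m_j$, and a negated one imposes the upper bound $(\vet v_{d,r,\sigma}(\struc{A}))_j\leq m_j-1$. Intersecting all such constraints per index $j$ gives, for each disjunct, an $r$-neighbourhood profile $\rho$ with $\rho(j)\in\mathfrak{I}$ (or an empty intersection, in which case the disjunct defines the empty property and can be discarded). The structures satisfying the disjunct are exactly those in $\classStruc{P}_\rho$, and non-emptiness of $\classStruc{P}$ ensures at least one disjunct survives, exhibiting $\classStruc{P}$ as a finite union of properties defined by neighbourhood profiles.

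The main obstacle is the radii-unification step, because Hanf normal form does not guarantee a uniform radius and a naive DNF reading of $\psi$ would mix constraints of incompatible radii. Once handled by the finite-refinement argument above---which works only because, for fixed $d,r,\sigma$, both the set of $r$-types and each set $T_\tau$ are finite---the remainder is straightforward bookkeeping.
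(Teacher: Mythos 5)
Your proof is correct and follows essentially the same route as the paper's: invoke Hanf's theorem, unify the locality radii of the Hanf sentences by refining each sub-radius type into the finite set of radius-$r$ types that restrict to it and distributing the count over compositions, pass to disjunctive normal form, and read off one neighbourhood profile per satisfiable disjunct while discarding unsatisfiable ones. Your handling of the upper bound from a negated Hanf sentence ($\leq m_j-1$) is in fact slightly more careful than the paper's displayed formula for $\ell_i$, which as written has an off-by-one.
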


\subsection{Relating FO properties to GSF-local properties}   
We now prove that FO properties which arise as unions of neighbourhood profiles of a particularly simple form are GSF-local. 
For this let 
\begin{displaymath}
\mathfrak{I}_{0}:=\{[0,\infty),[0,k]\mid k\in \mathbb{N} \}\subset \mathfrak{I}. 
\end{displaymath}
We call any neighbourhood profile $\rho$ with codomain $\mathfrak{I}_{0}$ 
 a \emph{$0$-profile}, as all lower bounds for the occurrence of types are $0$.
 \begin{observation}\label{obs:expressingExOfType}
	 Let $\rho$ be a $0$-profile. If two structures $\struc{A},\struc{A}'\in \classStruc{C}_{\sigma,d}$ satisfy $(\vet{v}_{d,r,\sigma}(\struc{A}))_i\leq (\vet{v}_{d,r,\sigma}(\struc{A}'))_i$ for every $i\in\{1,\dots,n_{d,r,\sigma}\}$ and $\struc{A}'\sim\rho$, then $\struc{A}\sim \rho$. \\
	 In particular, the existence of an $r$-type cannot be expressed by a $0$-profile. 
 \end{observation}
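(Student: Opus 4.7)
The plan is to unwind the definitions and notice that the first assertion is a coordinatewise monotonicity check, while the ``in particular'' clause is an immediate contradiction using the empty structure. Since $\rho$ is a $0$-profile, for every $i\in\{1,\dots,n_{d,r,\sigma}\}$ the interval $\rho(i)$ lies in $\mathfrak{I}_{0}$, so it has the form $[0,\infty)$ or $[0,k_i]$ for some $k_i\in\mathbb{N}$; crucially, the lower endpoint is always $0$. Hence to show $\struc{A}\sim\rho$ it suffices, for each $i$, to check that $(\vet{v}_{d,r,\sigma}(\struc{A}))_i$ does not exceed the upper endpoint of $\rho(i)$. If $\rho(i)=[0,\infty)$, there is nothing to prove; if $\rho(i)=[0,k_i]$, I would chain the hypothesis $(\vet{v}_{d,r,\sigma}(\struc{A}))_i\leq (\vet{v}_{d,r,\sigma}(\struc{A}'))_i$ with $\struc{A}'\sim\rho$, which forces $(\vet{v}_{d,r,\sigma}(\struc{A}'))_i\leq k_i$, to conclude $(\vet{v}_{d,r,\sigma}(\struc{A}))_i\leq k_i$, as required.

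For the final sentence, I would argue by contradiction: assume some $0$-profile $\rho$ satisfies $\classStruc{P}_\rho=\{\struc{A}\in\classStruc{C}_{\sigma,d}\mid \struc{A}\text{ contains an element of }r\text{-type }\tau\}$ for some fixed $r$-type $\tau$. Consider the empty $\sigma$-structure $\struc{A}_\emptyset$ with $\univ{A_\emptyset}=\emptyset$. Its $r$-histogram vector is the zero vector, and because $0$ lies in every interval in $\mathfrak{I}_{0}$, we immediately obtain $\struc{A}_\emptyset\sim\rho$. However, $\struc{A}_\emptyset$ contains no elements of any type whatsoever, so $\struc{A}_\emptyset\notin\classStruc{P}_\rho$, a contradiction.

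There is essentially no real obstacle to this proof: the entire content of the observation is the remark that a $0$-profile imposes only upper bounds on type counts, so $\classStruc{P}_\rho$ is automatically closed under arbitrary coordinatewise decrease of histogram vectors. The one convention worth flagging is the use of the empty structure as a witness in the second half, which is legitimate because the paper defines the universe of a $\sigma$-structure to be a finite set and the empty set is finite; if one wished to avoid this, the same contradiction can be obtained from any structure in $\classStruc{C}_{\sigma,d}$ whose histogram is pointwise dominated by that of some witness in $\classStruc{P}_\rho$ while not realising $\tau$.
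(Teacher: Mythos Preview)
Your proposal is correct. The paper states this as an observation without proof, treating both assertions as immediate from the definition of $\mathfrak{I}_0$; your argument spells out precisely the intended one-line verification (coordinatewise monotonicity for the first part, and the empty structure---which the paper itself uses later under the name $\struc{A}_\emptyset$---as the witness for the second).
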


\begin{theorem}\label{thm:subsetOfFOIsLocal}
	Every finite union of  properties defined by $0$-profiles is GSF-local.
\end{theorem}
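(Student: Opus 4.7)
My plan is, for a graph property $\classStruc{P} = \bigcup_{j=1}^{m} \classStruc{P}_{\rho_j}$ with each $\rho_j$ an $r_j$-neighbourhood $0$-profile of degree $d$, to build a single finite set $\mathcal{F}$ of bounded-size marked graphs such that an $n$-vertex bounded-degree-$d$ graph $G$ lies in $\classStruc{P}$ iff $G$ is $\mathcal{F}$-free. Setting $\mathcal{F}_n := \mathcal{F}$ for every $n$ then matches the definition of GSF-local directly.

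The basic gadget encodes a single $r$-type $\tau$: take a representative $r$-ball $(B_\tau, b_\tau)$, mark every vertex at distance strictly less than $r$ from $b_\tau$ as \emph{full}, and mark every vertex at distance exactly $r$ as \emph{partial}. A short induction on distance from $b_\tau$, using the equality $N_1^G(f(v)) = f(N_1^F(v))$ forced by each full mark, shows that an embedding of this gadget into $G$ exists iff $G$ has a vertex of type $\tau$. To encode the upper bound ``at most $k$ vertices of type $\tau$'', I cannot use $k+1$ disjoint copies: the full marks force the images of the $r$-balls to be pairwise vertex-disjoint in $G$, whereas $G$ may have $k+1$ vertices of type $\tau$ whose $r$-balls overlap. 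Instead I enumerate, up to isomorphism, all bounded-degree-$d$ graphs $H$ on at most $(k+1)|B_\tau|$ vertices with $k+1$ distinguished vertices $u_1, \dots, u_{k+1}$, each of $r$-type $\tau$ in $H$ and satisfying $V(H) = \bigcup_i N_r^H(u_i)$. Marking $w \in V(H)$ as \emph{full} when $w \in \bigcup_i N_{r-1}^H(u_i)$ and \emph{partial} otherwise turns each such $H$ into a marked graph $F_H$. Applying the induction around each $u_i$ shows that any embedding of $F_H$ into $G$ produces $k+1$ distinct vertices of type $\tau$ in $G$; conversely, any $k+1$ such vertices in $G$ yield, via restriction to the union of their $r$-balls, one of these $H$'s embedded by inclusion. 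This gives a finite set $\mathcal{G}_{\tau, k+1}$ of bounded-size marked graphs whose collective avoidance is equivalent to ``$G$ has at most $k$ vertices of type $\tau$''.

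To combine these gadgets for $\classStruc{P}$, observe that the naive union $\bigcup_{j=1}^m \mathcal{F}_{\rho_j}$ is conjunctive across its members and would test $G \in \bigcap_j \classStruc{P}_{\rho_j}$ instead of $G \in \bigcup_j \classStruc{P}_{\rho_j}$. By De Morgan and finite choice, $G \notin \classStruc{P}$ iff there is a selection $(i_1, \dots, i_m)$, with each $i_j$ a finite-bound index of $\rho_j$, such that $G$ simultaneously has more than $k_{j,i_j}$ vertices of type $\tau^{i_j}$ for every $j$. For each such selection I enumerate joint structures carrying, for each type $\tau$ occurring as some $\tau^{i_j}$, at least $c_\tau := \max\{k_{j,i_j}+1 : \tau^{i_j}=\tau\}$ distinguished vertices of type $\tau$, mark them as above, and let $\mathcal{F}$ be the finite union of the resulting sets over all selections. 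The main obstacle is precisely the overlapping-balls phenomenon from the previous step, already anticipated by Observation~\ref{obs:expressingExOfType}: GSF-freeness is inherently monotone and can only encode upper bounds on occurrences of bounded-size configurations, which is exactly what $0$-profiles prescribe. The resolution is to enumerate all possible joint structures around nearby centres rather than rely on a single disjoint-copies gadget, and verifying that the full/partial marks propagate correctly through the overlapping neighbourhoods is the most delicate step, though it reduces to applying the single-centre induction simultaneously at each distinguished vertex.
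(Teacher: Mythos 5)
Your overall architecture mirrors the paper's---encode \emph{at most $k$ occurrences of type $\tau$} by a finite family of $(k+1)$-fold overlapping-ball marked graphs, then combine across the union---but the marks you assign to boundary vertices break the basic gadget, and this propagates through the whole construction. You mark the distance-$r$ vertices of $(B_\tau,b_\tau)$ as \emph{partial}, which under Definition~\ref{def:gsf} only demands $N_1^G(f(v))\supseteq f(N_1^F(v))$. Your induction correctly gives $N_r^G(f(b_\tau))=f(V(B_\tau))$ as a \emph{set}, but it does not control the \emph{induced} subgraph on that set: with partial boundary marks nothing forbids an edge $\{f(u),f(u')\}$ in $G$ between images of two boundary vertices $u,u'$ that are nonadjacent in $B_\tau$. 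In that case $(\mathcal{N}_r^G(f(b_\tau)),f(b_\tau))$ belongs to a different $r$-type $\tau'$ with extra boundary edges, so an embedding of your gadget does \emph{not} certify a vertex of type $\tau$, and the ``iff'' you claim for the single-type gadget fails in the forward direction. Concretely, take $r=1$, $d\geq 2$, and let $\tau$ be the $1$-type of a degree-$2$ vertex whose two neighbours are nonadjacent; your gadget is a path $P_3$ with the centre full and both endpoints partial, and it embeds into the triangle $K_3$ even though $K_3$ has no vertex of type $\tau$. Thus if $\rho(i)=[0,0]$ for $\tau=\tau_{d,r}^i$ but $\rho$ places no bound on the triangle type, $K_3\in\classStruc{P}_\rho$ yet $K_3$ is not $\mathcal{F}$-free for your $\mathcal{F}$, contradicting the equality you need. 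The same defect infects the $(k+1)$-copy gadgets $F_H$: an embedding only produces $k+1$ vertices whose $r$-neighbourhood \emph{sets} match, not $k+1$ vertices of type $\tau$.

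The fix is precisely the paper's choice in Claim~\ref{claim:GSFProfileIsGSFLocal}: mark the distance-$r$ boundary vertices as \emph{semifull}, not partial. Semifull forces $N_1^G(f(v))\cap f(V(F))=f(N_1^F(v))$, so no extra edges can appear among the embedded vertices, while boundary vertices remain free to have additional neighbours outside the image; with semifull marks the induced subgraph on $N_r^G(f(b_\tau))$ is isomorphic to $B_\tau$ and the gadget embeds exactly when $G$ has a vertex of type $\tau$. Once that is repaired, your De Morgan step---choose, for each $j$, one finite bound of $\rho_j$ that $G$ violates, and enumerate all bounded-size joint configurations realizing every such violation simultaneously---is a workable alternative to the paper's Claim~\ref{claim:GSFLocalClosedUnderUnion}, which instead proves closure of ``forbidding a finite set of marked graphs'' under finite unions by enumerating all (not necessarily disjoint) unions of one forbidden marked graph per property. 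Both routes accommodate differing radii $r_j$ by construction; the essential content is the same.
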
 
\begin{proof}
We prove this in two parts (Claim~\ref{claim:GSFProfileIsGSFLocal} and Claim~\ref{claim:GSFLocalClosedUnderUnion}). We first argue that every property $\classStruc{P}_{\rho}$  
defined by some $0$-profile $\rho:\{1,\dots,n_{d,r,\sigma}\}\rightarrow\mathfrak{I}_{0}$ 
is GSF-local. For this it is important to note that we can express a forbidden $r$-type $\tau$ by a forbidden generalised subgraph. 
For $(B,b)\in \tau$, the set of all graphs with no vertex of neighbourhood type $\tau$ is the set of all $B$-free graphs where every vertex in $V(B)$ of distance less than $r$ to $b$ is marked `full' and every vertex in $V(B)$ of distance $r$ to $b$ is marked `semifull'. 
	Since a profile of the form $\rho:\{1,\dots,n_{d,r,\sigma}\}\rightarrow\mathfrak{I}_{0}$ can express that some neighbourhood type $\tau$ can appear at most $k$ times for some fixed $k\in \mathbb{N}$, we need to forbid all marked graphs in which type $\tau$ appears $k+1$ times. We will formalise this in the following claim. 
\begin{claim}\label{claim:GSFProfileIsGSFLocal}
	For every $r$-neighbourhood profile $\rho: \{1,\dots,n_{d,r}\}\rightarrow \mathfrak{I}_{0}$, there is a finite set $\cal{F}$ of marked graphs such that $\classStruc{P}_\rho$ is exactly the property of $\mathcal{F}$-free graphs. 
\end{claim}
\begin{proof}
	Assume $\tau$ is an $r$-type and $k\in \mathbb{N}_{>0}$. Then we say that a marked graph $F$ is a \emph{$k$-realisation} of $\tau$ if $F$ has the following properties.
	\begin{enumerate}
		\item There are $k$ distinct vertices $v_1,\dots,v_k$ in $F$ such that $(\mathcal{N}_r^F(v_i),v_i)\in \tau$ for every $i=1,\dots,k$.
		\item Every vertex $v$ in $F$ has distance less  or equal to $r$ to at least one vertex $v_i$.
		\item Every vertex $v$ in $F$ of distance less than $r$ to at least one $v_i$ is marked as `full'.
		\item Every vertex $v$ in $F$ of distance greater or equal to $r$ to every $v_i$ is marked as `semifull'.
	\end{enumerate}  
	We denote by $S^k(\tau)$ the set of all $k$-realisations of $\tau$.
	
	Now we can define the set $\mathcal{F}$  of forbidden subgraphs to be
	\[
	{\cal F}:=\bigcup_{k\in \mathbb{N}, 1\leq i\leq n_{d,r,\sigma}: \rho(i)=[0,k]} 
	S^{k+1}(\tau_{d,r}^i).
	\]
	
	Let $\mathcal{P}$ be the property of all $\mathcal{F}$-free graphs. We first prove that the property $\mathcal{P}$ is contained in  $\classStruc{P}_\rho$. Towards a contradiction  assume that $G\in \mathcal{C}_d$ is ${\cal F}$-free  but not contained in $\classStruc{P}_{\rho}$. As $G$ is not contained in $\classStruc{P}_{\rho}$ there must be an index $i\in \{1,\dots,n_{d,r}\}$ such that $(\vet{v}_{d,r}(G))_i\notin \rho(i)$. Since $\rho(i) \in \mathfrak{I}_{0}$ there is $k\in \mathbb{N}$ such  that $\rho(i)=[0,k]$ and hence $(\vet{v}_{d,r}(G))_i> k$. Hence there must be $k+1$ vertices $v_1,\dots,v_{k+1}$ in $G$ such that $(\mathcal{N}_r^G(v_i),v_i)\in \tau_{d,r}^i$. 
	We define the  marked graph  $F$ to be the subgraph of $G$ induced by the $r$-neighbourhoods 
	of $v_1,\dots,v_{k+1}$, \ie $G[\cup_{1\leq i\leq k+1}N_r^G(v_i)]$, in which every vertex of distance less than $k$ to at least one of the $v_i$ is marked as `full' and every other vertex is marked as `semifull'.  
	Then $F$ is by definition a $(k+1)$-realisation of $\tau_{d,r}^i$ and hence $F\in {\cal F}$.
	We now argue that $F$ can be embedded into $G$. Since $F$ is an induced subgraph of $G$ the identity map gives us a natural embedding $f:F\rightarrow G$. Let $v$ be any vertex marked `full' in $F$. Then by construction of $F$, there is $i\in \{1,\dots,k+1\}$ such that $f(v)$ is of distance less than $r$ to $v_i$ in $G$. But then $N_1^G(f(v))$ is a subset of 
	$N_r^G(v_i)$. As $F$ without the marking is the subgraph of $G$ 
	induced by $\cup_{1\leq i\leq k+1}N_r^G(v_i)$ this implies that $f(N_1^F(v))=N_1^G(f(v))$.  Furthermore, assume $v$ is a vertex marked `semifull' in $F$. Then $f(N_1^F(v))= N_1^G(f(v))\cap f(V(F))$ holds as $F$ without the markings is an induced subgraph of $G$.
	This proves that $G$ is not $F$-free by Definition~\ref{def:gsf}. This is a contradiction to our assumption that $G$ is $\mathcal{F}$-free and $F\in \mathcal{F}$. 
		
	Similarly, we can show that $\classStruc{P}_\rho\subseteq \mathcal{P}$ by assuming $G\in \mathcal{C}_d$ is in $\classStruc{P}_\rho$  but not ${\cal F}$-free, and showing that the embedding of any graph of $\mathcal{F}$ into $G$ yields an amount of vertices of a certain type contradicting containment in $\classStruc{P}_\rho$. 
\end{proof}
Next we prove that 
classes defined by excluding finitely many marked graphs are closed under finite unions.	
\begin{claim}\label{claim:GSFLocalClosedUnderUnion}
	Let $\mathcal{F}_1, \mathcal{F}_2$ be two finite sets of marked graphs. 
	For $i\in\{1,2\}$, let $\mathcal{P}_i$ be the property of $\mathcal{F}_i$-free graphs. 
	Then there is a set $\mathcal{F}$ of generalised subgraphs such that $\mathcal{P}_1\cup\mathcal{P}_2$ is the property of $\mathcal{F}$-free graphs. 
\end{claim}
\begin{proof}
	We say that a marked graph $F$ is a (not necessarily disjoint) union of  marked graphs $F_1, F_2$ if 
	\begin{enumerate}
		\item  there is an embedding $f_i$ of $F_i$ into the graph $F$ without its markings as in Definition~\ref{def:gsf} for every $i\in \{1,2\}$.
		\item for every vertex $v$ in $F$ there is $i\in\{1,2\}$ and a vertex $w$ in $F_i$ such that $f_i(w)=v$.
		\item every vertex $v$ in $F$ is marked `full', if there is  $i\in\{1,2\}$ and a `full' vertex $w$ in  $F_i$ such that $f_i(w)=v$.
		\item every vertex $v$ in $F$ is marked `semifull', if there is  $i\in \{1,2\}$ and a `semifull' vertex $w$ in  $F_i$ such that $f_i(w)=v$ and $f_i(u)\not=v$ for every $i\in\{1,2\}$ and every `full' vertex $u$.
		\item every vertex $v$ in $F$ is marked `partial' if $f_i(u)\not=v$ for every $i\in\{1,2\}$ and every `full' or `semifull' vertex $u$.
	\end{enumerate}  
	We define $S(F_1,F_2)$ to be the set of all possible (not necessarily disjoint) unions of $F_1,F_2$. 
	We can now define the set $\mathcal{F}$ to be 
	\[\mathcal{F}:=\bigcup_{F_1\in \mathcal{F}_1,F_2\in \mathcal{F}_2} S(F_1,F_2).\] 
	
	Let $\mathcal{P}$ be the property of all $\mathcal{F}$-free graphs.
	Now we prove $\mathcal{P}\subseteq \mathcal{P}_1\cup\mathcal{P}_2$. Towards a contradiction assume $G$ is $\mathcal{F}$-free but $G$ is in neither $\mathcal{P}_1$ nor in $\mathcal{P}_2$.  
	Then for every $i\in \{1,2\}$ there is a graph $F_i\in \mathcal{F}_i$ such that $G$ is not $F_i$-free. It is easy to see that  there is a  union $F_\cup$ of $F_1$ and $F_2$    
	such that $G$ is not $F_\cup$-free, which contradicts that $G$ is $\mathcal{F}$-free. 
	
	Conversely, in order to prove $\mathcal{P}_1\cup \mathcal{P}_2 \subseteq \mathcal{P}$, if $G$ is $\mathcal{F}_i$ free for some 
	$i\in\{1,2\}$ then $G$ must be $\mathcal{F}$-free by construction of $\mathcal{F}$.
\end{proof}
Combining the two claims above proves the  Theorem~\ref{thm:subsetOfFOIsLocal}.
\end{proof}
	\paragraph{Further discussion of the relation between FO and GSF-locality} First let us remark that it 
	is neither true that every FO definable property
	is GSF-local, nor that every GSF-local property is FO definable. 
	\begin{example}\label{exa:FONotContainedInGSF}
		The property of bounded-degree  graphs containing a triangle is FO definable but not GSF-local.
	\end{example}
	Indeed, the existence  of 
	a fixed number of vertices of certain neighbourhood types can be expressed in FO, while in general, this cannot be expressed by forbidding generalised subgraphs. 
	If a formula has a $0$-profile 
	(and hence does not require the existence of any types) 
	then the property defined by that formula is GSF-local, as shown in Theorem~\ref{thm:subsetOfFOIsLocal}.
	\begin{example}\label{exa:GSFNotContainedInFO}
		The class of all bounded-degree graphs with an even number of vertices is GSF-local but not FO definable.
	\end{example}
	Let us remark that Theorem~\ref{thm:subsetOfFOIsLocal} combined with Lemma~\ref{lemma:FO-neighbourhood} proves that every finite union of properties definable by $0$-profiles is both FO definable and GSF-local. 
	Hence it is natural to ask whether the intersection of FO definable properties and GSF-local properties is precisely the set of finite unions of properties definable by $0$-profiles. However, this is not the case.
	The following example shows that there are properties which are both FO definable and GSF-local but cannot be expressed by $0$-profiles.
	\begin{figure*}
		\centering
		\begin{tikzpicture}
			\tikzstyle{ns1}=[line width=1.2]
			\tikzstyle{ns2}=[line width=1.2]
			\definecolor{C1}{RGB}{1,1,1}
			\definecolor{C2}{RGB}{0,0,170}
			\definecolor{C3}{RGB}{251,86,4}
			\definecolor{C4}{RGB}{50,180,110}
			\def \dist {0.7}
			\def \heightWriting {0.4}
			\def \heightName {-1}
			\node[circle,fill=white,inner sep=0pt, minimum width=0pt] (15) at (0,3) {};
			\node[draw,circle,fill=black,inner sep=0pt, minimum width=5pt] (0) at (0,0) {};
			\node[draw,circle,fill=black,inner sep=0pt, minimum width=5pt] (5) at (3,0) {};
			\node[draw,circle,fill=black,inner sep=0pt, minimum width=5pt] (6) at (4,0) {};
			\node[draw,circle,fill=black,inner sep=0pt, minimum width=5pt] (7) at (5,0) {};
			\node[draw,circle,fill=black,inner sep=0pt, minimum width=5pt] (9) at (8,0) {};
			\node[draw,circle,fill=black,inner sep=0pt, minimum width=5pt] (10) at (9,0) {};
			
			\draw[ns1] (5)--(6)--(7);

			\node[minimum height=10pt,inner sep=0,font=\scriptsize] at (0,\heightWriting) {partial};
			\node[minimum height=10pt,inner sep=0,font=\scriptsize] at (3,\heightWriting) {partial};
			\node[minimum height=10pt,inner sep=0,font=\scriptsize] at (4,-\heightWriting) {partial};
				\node[minimum height=10pt,inner sep=0,font=\scriptsize] at (5,\heightWriting) {partial};
			\node[minimum height=10pt,inner sep=0,font=\scriptsize] at (8,\heightWriting) {full};
			\node[minimum height=10pt,inner sep=0,font=\scriptsize] at (9,\heightWriting) {full};
			\node[minimum height=10pt,inner sep=0] at (0,\heightName) {$G_1$};
			\node[minimum height=10pt,inner sep=0] at (4,\heightName) {$G_2$};
			\node[minimum height=10pt,inner sep=0] at (8.5,\heightName) {$G_3$};

			\end{tikzpicture} 
		\caption{Marked graphs for Example~\ref{ex:0ProfilesNotEntireIntersecion}.}\label{fig:setOfMarkedGraphs}
		
	\end{figure*}
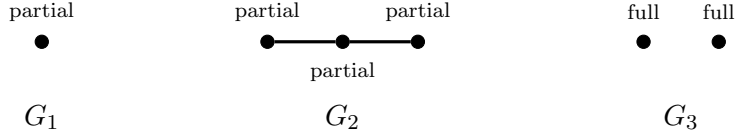
	\begin{example}\label{ex:0ProfilesNotEntireIntersecion} We let $d\geq 2$ and let  $B_1:=(\{v\},\{\})$, $B_2=(\{v,w\},\{\{v,w\}\})$ be two graphs. We further let $\tau_1,\tau_2$ be the $1$-types of degree $d$ such that $(B_1,v)\in \tau_1$ and $(B_2,v)\in \tau_2$. Consider the property $\mathcal{P}$ defined by the following FO formula
		\begin{displaymath}
			\varphi:=\lnot \exists x (x=x)\lor \exists^{=1}x\big(\varphi_{\tau_1}(x)\land \forall y (x\not=y\rightarrow \varphi_{\tau_2}(y))\big). 	
		\end{displaymath}
		$\mathcal{P}$ contains, besides the empty graph, unions of an arbitrary amount  of disjoint edges and one isolated vertex. To define a sequence of forbidden subgraphs we let $G_1,G_2,G_3$ be the marked graphs in Figure~\ref{fig:setOfMarkedGraphs}. Let $\mathcal{F}_{\operatorname{even}}:=\{G_1\}$ and $\mathcal{F}_{\operatorname{odd}}:=\{G_2,G_3\}$ and let $\overline{\mathcal{F}}=(\mathcal{F}_n)_{n \in \mathbb{N}}$ where $\mathcal{F}_i=\mathcal{F}_{\operatorname{even}}$ if $i$ is even and $\mathcal{F}_i=\mathcal{F}_{\operatorname{odd}}$ if $i$ is odd. Note that every graph on more than one 
		vertex with an odd number of vertices which is $\mathcal{F}_{\operatorname{odd}}$-free must contain a vertex of neighbourhood type $\tau_1$, 
		and that the set of $\mathcal{F}_{\operatorname{even}}$-free graphs contains only the empty graph. Hence $\mathcal{P}$ is $\overline{\mathcal{F}}$-local. Now assume towards a contradiction that $\mathcal{P}=\cup_{1\leq i\leq k}\mathcal{P}_{\rho_i}$ for $0$-profiles $\rho_i$. Let $G_m$ be the graph consisting of $m$ disjoint edges and one isolated vertex and $H_m$ the graph consisting of $m$ disjoint edges. Since $G_m\in\mathcal{P}$ there is $i\in \{1,\dots,k\}$ such that $G_m\sim \rho_i$. By choice of $G_m$ and $H_m$ we have $0\leq (\vet{v}_{d,r}(H_m))_j\leq (\vet{v}_{d,r}(G_m))_j\in \rho_i(j)$ for every $j\in \{1,\dots,n_{d,r}\}$. Since additionally $\rho_i(j)\in \mathfrak{I}_0$ this implies that $(\vet{v}_{d,r}(H_m))_j\in \rho_i(j)$. But then $H_m\sim \rho_i$ which yields a contradiction as $H_m\notin \mathcal{P}$. Hence $\mathcal{P}$ can not be defined as a finite union of $0$-profiles. 
	\end{example}
 Figure~\ref{fig:overview} gives a schematic overview of all classes of properties discussed here and their relationship.
	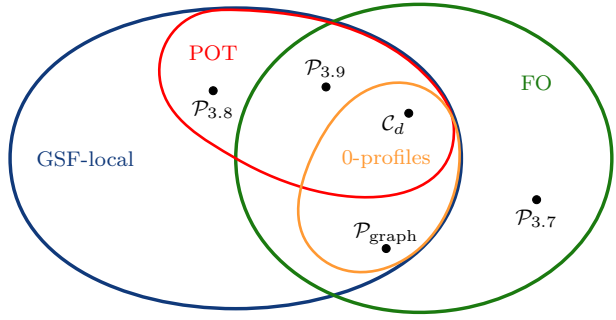
\begin{figure}
		\centering
		\begin{tikzpicture}[scale=1,use Hobby shortcut,closed=true]
		\tikzstyle{LW1}=[line width=1.3]
		\tikzstyle{LW3}=[line width=1]
		\definecolor{C4}{RGB}{255,153,51}
		\definecolor{C3}{RGB}{255,0,0}
		\definecolor{C2}{RGB}{27, 122, 16}
		\definecolor{C1}{RGB}{16, 57, 122}

		\draw[LW1,C1] ([closed]0,0)..(3,2)..(6,0)..(3,-2);
		\draw[LW1,C2] ([closed]3,0)..(6,2)..(8,0)..(6,-2);
		\draw[LW3,C3] ([closed]2,1)..(2.5,1.92)..(3,1.975)..(5.62,1)..(5.9,0.4)..(3,0);
		\draw[LW3,C4] ([closed]4,0)..(5.05,1)..(5.7,0.8)..(5.97,-0.1)..(4,-1.2);

		\node[C1,minimum height=10pt,inner sep=0,font=\scriptsize] at (1,0) {GSF-local};
		\node[C2,minimum height=10pt,inner sep=0,font=\scriptsize] at (7,1) {FO};
		\node[C3,minimum height=10pt,inner sep=0,font=\scriptsize] at (2.7,1.4) {POT};
		\node[C4,minimum height=10pt,inner sep=0,font=\scriptsize] at (5,0) {$0$-profiles};
		\node[draw,circle,fill=black,inner sep=0pt, minimum width=3pt] (1) at (5,-1.2) {};
		\node[minimum height=10pt,inner sep=0,font=\scriptsize] at (5,-1) {$\graphProp$};
		\node[draw,circle,fill=black,inner sep=0pt, minimum width=3pt] (1) at (4.2,0.95) {};
		\node[minimum height=10pt,inner sep=0,font=\scriptsize] at (4.2,1.2) {$\mathcal{P}_{\ref{ex:0ProfilesNotEntireIntersecion}}$};
		\node[draw,circle,fill=black,inner sep=0pt, minimum width=3pt] (1) at (2.7,0.9) {};
		\node[minimum height=10pt,inner sep=0,font=\scriptsize] at (2.7,0.65) {$\mathcal{P}_{\ref{exa:GSFNotContainedInFO}}$};
		\node[draw,circle,fill=black,inner sep=0pt, minimum width=3pt] (1) at (7,-0.55) {};
		\node[minimum height=10pt,inner sep=0,font=\scriptsize] at (7,-0.8) {$\mathcal{P}_{\ref{exa:FONotContainedInGSF}}$};
		\node[draw,circle,fill=black,inner sep=0pt, minimum width=3pt] (1) at (5.3,0.6) {};
		\node[minimum height=10pt,inner sep=0,font=\scriptsize] at (5.1,0.45) {$\mathcal{C}_d$};

		\end{tikzpicture}
		\caption{Overview of the classes of properties, here $\mathcal{P}_i$ refers to the property from Example $i$, $\mathcal{C}_d$ refers to the property of all graphs of bounded degree $d$ and $\graphProp$ is the property defined in Section~\ref{sec:localreduction}.}\label{fig:overview}
	\end{figure}

\section{Proof of the main theorem}\label{sec:proofMainThm}

In this section we prove Theorem~\ref{thm:existenceLocalNonTestableProperty}. We start by describing a property of relational structures, similar to a property 
in~\cite{testingFO}, which is not testable. We then show that the property 
can be expressed by a union of $0$-profiles, and hence by Theorem~\ref{thm:subsetOfFOIsLocal} it is GSF-local. 

Let $\sigma$ be the signature, $d\in \mathbb{N}$  and $\classStruc{P}_{\zigzag}$ be the property of $d$ $\sigma$-structures of bounded-degree from~\cite{testingFO}. 

\paragraph{Brief Description of the property $\classStruc{P}_{\zigzag}$.} $\classStruc{P}_{\zigzag}$ is the property of all bounded-degree $d$ $\sigma$-structures, which satisfy some first-order logic formula $\varphi_{\zigzag}$. On a high level, each structure $\struc{A}$ in the  property $\classStruc{P}_{\zigzag}$ is a hybridization of a sequence of expander graphs and a tree structure, where the expander graphs are constructed by the zig-zag product that was introduced in \cite{ZigZagProductIntroduction}.  
Slightly more precisely, each model of $\varphi_{\zigzag}$ is a rooted $k$-ary complete tree for some constant $k$, where the vertices on each level form an expander. In terms of logic language, for some constant $D>1$, we considered  
\begin{displaymath}
\sigma:=\{ \{E_{i,j}\}_{i,j \in \indexSetRotation},\{F_{k}\}_{k \in \indexSetH},R,\{L_k\}_{k\in \indexSetH}\},
\end{displaymath} 
where $E_{i,j}$, $F_{k}$, $R$ and $L_k$ are binary relation symbols for $i,j\in [D]^2$ and $k\in ([D]^2)^2$. We further use $F$ and  $E$ as an abbreviation to denote  $\bigcup_{i,j\in \indexSetRotation}E_{i,j}$ and $\bigcup_{k\in \indexSetH}F_{k}$. We   
defined an FO formula $\varphi_{\zigzag}$ such that  
\[
\varphi_{\zigzag}:=\varphi_{\operatorname{tree}}\land \varphi_{\operatorname{rotationMap}}\land \varphi_{\operatorname{base}}\land \varphi_{\operatorname{recursion}}, \text{ and } \classStruc{P}_{\zigzag}:=\{\mathcal{A}\in \classStruc{C}_{\sigma,d}\mid \mathcal{A} \models\varphi_{\zigzag} \},
\]
where  $\varphi_{\operatorname{tree}}, \varphi_{\operatorname{rotationMap}}, \varphi_{\operatorname{base}}, \varphi_{\operatorname{recursion}}$ are FO formulas which encode the tree structure (and degree regularity), rotation maps, base graph (with constant size) and recursive construction of expander graphs (via the zig-zag product). Note that for the construction we use some base graph $H$ which is given by its rotation map $\rot_H: \indexSetH \times [D]\rightarrow \indexSetH \times [D]$, which is  a special type of an encoding of a graph. 

 The precise formula is given in Appendix~\ref{app:A}. We will restate parts of the formula, whenever they are relevant in the proofs below. 

\subsection{Characterisation by neighbourhood profiles}\label{sec:charBy0Profiles}
Our aim in this section is to prove that  a minor variation of property $\classStruc{P}_{\zigzag}$ of relational structures can be written as a finite union of properties defined by $0$-profiles
of radius $2$. As the existence of a certain vertex cannot be expressed with a $0$-profile (see Observation~\ref{obs:expressingExOfType}) and $\varphi_{\zigzag}$ demands the existence of a certain vertex (the root vertex), the property $\classStruc{P}_{\zigzag}$ cannot be expressed  
in terms of $0$-profiles. However we 
 define a slight variation of the formula $\varphi_{\zigzag}$ which, as we will see later, can be expressed by $0$-profiles. Let 
\begin{displaymath}
	\varphi_{\zigzag}':=\varphi_{\operatorname{tree}}'\land \varphi_{\operatorname{rotationMap}}\land \varphi_{\operatorname{base}}\land \varphi_{\operatorname{recursion}},
\end{displaymath} 
where we obtain $\varphi_{\operatorname{tree}}'$ from $\varphi_{\operatorname{tree}}$ by replacing the subformula $\exists^{=1}x\varphi_{\operatorname{root}}(x)$ by $\exists^{\leq 1}x\varphi_{\operatorname{root}}(x)$, where $\varphi_{\operatorname{root}}(x):= \forall y \lnot F(y,x)$. We define the property 
\begin{displaymath}
\classStruc{P}_{\zigzag}':=\{\struc{A}\in \mathcal{C}_{\sigma,d}\mid \struc{A}\models \varphi_{\zigzag}'\}.
\end{displaymath} 
We denote the empty structure by $\struc{A}_{\emptyset}$ (\ie $\univ{A_{\emptyset}}=\emptyset$).

\begin{lemma}\label{lem:relaxedFormula}
	The properties $\classStruc{P}_{\zigzag}'$ and $\classStruc{P}_{\zigzag}\cup \{\struc{A}_{\emptyset}\}$ are equal.	 
\end{lemma}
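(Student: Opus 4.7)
The plan is to establish set equality by two inclusions, isolating the role played by the single modification: replacing $\exists^{=1}x\,\varphi_{\operatorname{root}}(x)$ in $\varphi_{\operatorname{tree}}$ by $\exists^{\leq 1}x\,\varphi_{\operatorname{root}}(x)$ in $\varphi_{\operatorname{tree}}'$. Since all other conjuncts of $\varphi_{\zigzag}$ and $\varphi_{\zigzag}'$ coincide, the difference between the two properties is solely about whether a root vertex must exist.

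For the inclusion $\classStruc{P}_{\zigzag}\cup\{\struc{A}_{\emptyset}\}\subseteq\classStruc{P}_{\zigzag}'$, I would handle the two kinds of members separately. First, any $\struc{A}\in\classStruc{P}_{\zigzag}$ satisfies $\exists^{=1}x\,\varphi_{\operatorname{root}}(x)$, which trivially entails $\exists^{\leq 1}x\,\varphi_{\operatorname{root}}(x)$, and all other conjuncts of $\varphi_{\zigzag}$ are conjuncts of $\varphi_{\zigzag}'$ as well, so $\struc{A}\models\varphi_{\zigzag}'$. Second, for $\struc{A}_{\emptyset}$, I would argue conjunct-by-conjunct: $\exists^{\leq 1}x\,\varphi_{\operatorname{root}}(x)$ is vacuously true on the empty universe, and an inspection of the formulas $\varphi_{\operatorname{rotationMap}}$, $\varphi_{\operatorname{base}}$, $\varphi_{\operatorname{recursion}}$ and of the remaining conjuncts of $\varphi_{\operatorname{tree}}'$ (which, by the construction in \cite{testingFO}, are universally quantified closure conditions describing the local shape of the tree, the rotation maps and the recursive zig-zag structure) shows that all are vacuously satisfied by $\struc{A}_{\emptyset}$.

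For the reverse inclusion $\classStruc{P}_{\zigzag}'\subseteq\classStruc{P}_{\zigzag}\cup\{\struc{A}_{\emptyset}\}$, let $\struc{A}\models\varphi_{\zigzag}'$. If $\univ{A}=\emptyset$ then $\struc{A}=\struc{A}_{\emptyset}$ and we are done. Otherwise, the goal is to promote ``at most one root'' into ``exactly one root''. Since $\struc{A}\models\varphi_{\operatorname{tree}}'$, the $F$-relation forms the parent edges of a tree structure on $\univ{A}$; in particular (by the encoding of $\varphi_{\operatorname{tree}}$) every non-root vertex has exactly one $F$-predecessor and the $F$-relation is acyclic on the finite universe $\univ{A}$. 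Starting at any element of the non-empty $\univ{A}$ and iterating the (unique) predecessor map yields, by finiteness and acyclicity, a vertex with no $F$-predecessor, i.e. a witness of $\varphi_{\operatorname{root}}$. Combined with $\exists^{\leq 1}x\,\varphi_{\operatorname{root}}(x)$, this gives $\exists^{=1}x\,\varphi_{\operatorname{root}}(x)$, hence $\struc{A}\models\varphi_{\zigzag}$ and $\struc{A}\in\classStruc{P}_{\zigzag}$.

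The main obstacle is purely formal bookkeeping rather than any new idea: one must verify that no other conjunct of $\varphi_{\zigzag}$ implicitly asserts the existence of an element (so that $\struc{A}_{\emptyset}$ really satisfies $\varphi_{\zigzag}'$), and that $\varphi_{\operatorname{tree}}'$ still enforces enough tree-like structure on any non-empty model that a root must exist. Both checks reduce to inspecting the explicit definitions of $\varphi_{\operatorname{tree}}$, $\varphi_{\operatorname{rotationMap}}$, $\varphi_{\operatorname{base}}$, $\varphi_{\operatorname{recursion}}$ recalled from \cite{testingFO} in Appendix~\ref{app:A}.
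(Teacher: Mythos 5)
Your decomposition into two inclusions, and your handling of the easy direction $\classStruc{P}_{\zigzag}\cup\{\struc{A}_{\emptyset}\}\subseteq\classStruc{P}_{\zigzag}'$, match the paper. The gap is in the reverse inclusion, in the step where you claim that $\varphi_{\operatorname{tree}}'$ forces the $F$-relation to be ``acyclic on the finite universe'' so that iterating the predecessor map must terminate at a root. Acyclicity is not first-order expressible, and $\varphi_{\operatorname{tree}}'$ (indeed any conjunct of $\varphi_{\zigzag}'$, being FO) cannot directly enforce it: a priori a model of $\varphi_{\operatorname{tree}}'$ could consist of $F$-cycles in which every element has exactly one predecessor and none is a root, satisfying $\exists^{\leq 1}x\,\varphi_{\operatorname{root}}(x)$ with zero roots. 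So ``every non-empty model has a root'' is exactly the nontrivial content of the lemma, and it cannot be read off the syntax of $\varphi_{\operatorname{tree}}'$ as you suggest.

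The paper's proof avoids this by a semantic, indirect argument: supposing a non-empty rootless model $\struc{A}\models\varphi_{\zigzag}'$ exists, it takes the disjoint union $\struc{A}\cup\struc{A}'$ with any model $\struc{A}'$ of $\varphi_{\zigzag}$. By Claim~\ref{claim:closedUnderUnion} (the root-free part $\tilde\varphi_{\zigzag}$ is preserved under disjoint unions) the union still satisfies $\tilde\varphi_{\zigzag}$, and since $\struc{A}$ contributes no roots while $\struc{A}'$ contributes exactly one, the union satisfies $\exists^{=1}x\,\varphi_{\operatorname{root}}(x)$, hence $\varphi_{\zigzag}$. But $G_F$ of the union is disconnected, contradicting Lemma~\ref{lem:connectedComponentContainRoots} from \cite{testingFO}. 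To repair your proof you would need to replace the acyclicity assertion with either this disjoint-union argument or another genuine structural fact about models of $\varphi_{\zigzag}'$ that rules out $F$-cycles; the connectivity lemma is the nontrivial ingredient your sketch is missing.
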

To prove this we use the  following lemma \cite[Lemma 3.5]{testingFO}. 

\begin{lemma}[\cite{testingFO}]\label{lem:connectedComponentContainRoots}
	For  $\struc{A}\in \classStruc{C}_{\sigma,d}$ let $G_F^{\struc{A}}$ be the graph with vertex set $\univ{A}$ and edge set $\{\{a,b\}\mid (a,b)\in \rel{F}{A} \}$. If $\struc{A}\models \varphi_{\zigzag}$ then  $G_F^{\struc{A}}$ is connected.
\end{lemma}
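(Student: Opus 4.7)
The plan is to exploit the fact that $\varphi_{\operatorname{tree}}$ (a conjunct of $\varphi_{\zigzag}$) encodes the skeleton of a rooted tree with $F$ playing the role of the parent-to-child edge relation. Once I extract this combinatorial structure from $\struc{A}\models\varphi_{\zigzag}$, connectivity of the undirected graph $G_F^{\struc{A}}$ is immediate: every vertex is joined in $G_F^{\struc{A}}$ to the unique root by chasing parent pointers, so the root lies in the connected component of every element.

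First, I would unpack the content of $\varphi_{\operatorname{tree}}$ as spelled out in Appendix~\ref{app:A}. The relevant clauses are: (i) exactly one element $r$ satisfies $\varphi_{\operatorname{root}}(x):=\forall y\,\lnot F(y,x)$, (ii) every non-root element has exactly one $F$-predecessor, and (iii) each element carries a level assignment compatible with $\varphi_{\operatorname{recursion}}$. From (i) and (ii) I obtain a partial parent function $\operatorname{par}:\univ{A}\setminus\{r\}\to \univ{A}$, defined by letting $\operatorname{par}(a)$ be the unique $b\in\univ{A}$ with $(b,a)\in \rel{F}{\struc{A}}$.

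Next, for an arbitrary $a\in\univ{A}$, I would iterate $\operatorname{par}$ to produce the chain $a=a_0,a_1,a_2,\ldots$ with $a_{i+1}=\operatorname{par}(a_i)$, well defined as long as $a_i\neq r$. Since $\univ{A}$ is finite this sequence either reaches $r$ or enters a cycle, and the heart of the argument is to rule out cycles. This is where the recursive structure encoded by $\varphi_{\operatorname{recursion}}$ is used: because each level of the tree is decorated with a distinct expander constructed by one more application of the zig-zag product, the neighbourhood of $a_i$ locally witnesses which level it inhabits, and the formula forces $\operatorname{par}(a_i)$ to lie one level closer to the root. Hence the level strictly decreases along the chain, and so the chain must terminate at $r$.

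Finally, the chain yields a path $a=a_0\text{-}a_1\text{-}\cdots\text{-}a_k=r$ in $G_F^{\struc{A}}$, showing that every element lies in the connected component of $r$, and therefore $G_F^{\struc{A}}$ is connected. I expect the main obstacle to be a careful inspection of the appendix: verifying from the explicit subformulas of $\varphi_{\operatorname{tree}}\land\varphi_{\operatorname{recursion}}$ that the level is locally determined and that it strictly decreases under $\operatorname{par}$. This is a routine check given the zig-zag construction (each level's rotation map uses its own relation symbols from $\{E_{i,j}\}$ and $\{F_k\}$), but it is the only place where the specific construction of~\cite{testingFO} is used; absent that, (i) and (ii) alone would not suffice to rule out parent-cycles.
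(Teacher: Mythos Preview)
The paper does not prove this lemma; it is quoted verbatim from~\cite{testingFO} (as Lemma~3.5 there) and used as a black box. So there is no ``paper's own proof'' to compare against, and the question is simply whether your sketch stands on its own.

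It does not. Steps (i)--(iii) are fine: $\varphi_{\operatorname{tree}}$ does give a unique root $r$ and a well-defined parent map $\operatorname{par}$ on $\univ{A}\setminus\{r\}$, and iterating $\operatorname{par}$ from any $a$ must either reach $r$ or enter a cycle. The gap is entirely in your proposed mechanism for ruling out cycles. You assert that ``each level's rotation map uses its own relation symbols from $\{E_{i,j}\}$ and $\{F_k\}$'', and hence that the level of a vertex is witnessed by its bounded-radius neighbourhood. This is false: the signature $\sigma$ contains a \emph{single} fixed family $\{E_{i,j}\}_{i,j\in\indexSetRotation}$ and a single family $\{F_k\}_{k\in\indexSetH}$, and the \emph{same} relation symbols are used at every level of the tree. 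An internal vertex at depth $5$ and one at depth $50$ have the same local pattern of incident relations (one $F$-parent, $D^4$ $F$-children, $D^2$ $E$-neighbours), so no first-order neighbourhood formula can read off the level. Indeed, if the level \emph{were} locally determined, then so would be the size of the model, and $\classStruc{P}_{\zigzag}$ would be trivially testable---contradicting the whole point of the construction.

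So you have correctly identified the crux (excluding parent-cycles in a rootless component), but your proposed tool does not exist. The actual argument in~\cite{testingFO} is more global: it uses the interaction of $\varphi_{\operatorname{base}}$ and $\varphi_{\operatorname{recursion}}$ with the $E$-structure to show that every model of $\varphi_{\zigzag}$ is (isomorphic to) one of the canonical structures built from the zig-zag recursion, and connectivity of $G_F^{\struc{A}}$ falls out of that characterisation rather than from a local level-counting argument.
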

\begin{proof}[Proof of Lemma~\ref{lem:relaxedFormula}]
	 We fist prove that $\classStruc{P}_{\zigzag}'\subseteq \classStruc{P}_{\zigzag}\cup \{\struc{A}_{\emptyset}\}$. Consider the formula $\tilde{\varphi}_{\zigzag}$ which is obtained from $\varphi_{\zigzag}$ by removing the subformula $\exists^{=1}x\varphi_{\operatorname{root}}(x)$. We use the following simple observation, which we will prove in Appendix~\ref{app:B}.
	 \begin{claim}\label{claim:closedUnderUnion}
		 Satisfying $\tilde{\varphi}_{\zigzag}$ is closed under disjoint unions on $\classStruc{C}_{\sigma,d}$. 
	 \end{claim}
	 Since $\struc{A}_{\emptyset}\in \classStruc{P}_{\zigzag}\cup \{\struc{A}_{\emptyset}\}$ it is sufficient to consider only non-empty structures in the following.  Therefore assume that there exists $\struc{A}\in \classStruc{C}_{\sigma,d}$ with $\univ{A}\not=\emptyset$ such that $\struc{A}\models \varphi_{\zigzag}'$ and $\struc{A}$ contains no element $a$ for which $\struc{A}\models \varphi_{\operatorname{root}}(a)$. Let $\struc{A}'\in \classStruc{C}_{\sigma,d}$ be any model of $\varphi_{\zigzag}$ with $\univ{A}\cap \univ{A'}=\emptyset$. Then 
	 $\struc{A}\cup \struc{A}'\models \tilde{\varphi}_{\zigzag}$ by Claim~\ref{claim:closedUnderUnion}. Furthermore, $\struc{A}\cup\struc{A}'\models \exists^{=1}x\varphi_{\operatorname{root}}(x)$, which implies $\struc{A}\cup\struc{A}'\models \varphi_{\zigzag}$. By construction $G_F^{\struc{A}\cup \struc{A}'}$ has more than one connected component as both $\univ{A}\not=\emptyset$ and $\univ{A'}\not=\emptyset$ and $\struc{A}\cup\struc{A}'$ is a disjoint union of $\struc{A}$ and $\struc{A}'$. Hence we obtain a contradiction to Lemma~\ref{lem:connectedComponentContainRoots}. 
	 Therefore every non-empty structure satisfying $\varphi_{\zigzag}'$ must satisfy $\exists^{=1}x\varphi_{\operatorname{root}}(x)$, and hence also $\varphi_{\zigzag}$. 
	 
	 Conversely, if $\struc{A}\in \classStruc{C}_{\sigma,d}$ is a model of $\varphi_{\zigzag}$ then $\struc{A}\models \exists^{=1}x\varphi_{\operatorname{root}}(x)$. This implies directly that $\struc{A}\models\exists^{\leq 1}x\varphi_{\operatorname{root}}(x)$ and hence $\struc{A}\models \varphi_{\zigzag}'$. Furthermore, $\struc{A}_{\emptyset}\in \classStruc{P}_{\zigzag}'$ as $\struc{A}\models \exists^{\leq 1}x\varphi_{\operatorname{root}}(x)$ and $\struc{A}\models \tilde{\varphi}_{\zigzag}$ as $\tilde{\varphi}_{\zigzag}$ is a conjunction of universally quantified formulas. Hence $ \classStruc{P}_{\zigzag}\cup \{\struc{A}_{\emptyset}\}\subseteq \classStruc{P}_{\zigzag}'$.
\end{proof}

We now define the $0$-profiles which express the property $\classStruc{P}_{\zigzag}'$.
For all $\sigma$-structures in $\classStruc{P}_{\zigzag}$ (all $\sigma$-structure in $\classStruc{P}_{\zigzag}'$ but $\struc{A}_{\emptyset}$) it is crucial that they are allowed to contain precisely one root element. Hence the neighbourhood profile describing $\classStruc{P}_{\zigzag}'$ must restrict the number of occurrences of the $2$-type of the root element. But since in $\classStruc{P}_{\zigzag}$, the root elements in different structures may have different $2$-types, 
we partition $\classStruc{P}_{\zigzag}$ into parts 
 $\classStruc{P}_1,\dots,\classStruc{P}_m$  by the $2$-type of the root element. 
 Note that the number $m$ of parts is constant as there are at most $n_{d,2,\sigma}$ $2$-types in total. For each of these parts we then define a neighbourhood profile  $\rho_k$ such that $\classStruc{P}_k\cup\{\struc{A}_{\emptyset}\}=\classStruc{P}_{\rho_k}$. 
 We would like to remark here that the roots of all but one structure in  $\classStruc{P}_{\zigzag}$ actually have the same $2$-types. 
However, proving this requires a detailed insight into the construction of $\mathcal{P}_{\zigzag}$, so we avoid this here and use the partition into finitely many parts instead. We now define the parts and corresponding profiles formally.

Assume without loss of generality that the $2$-types $\tau_{d,2,\sigma}^1,\dots, \tau_{d,2,\sigma}^{n_{d,2,\sigma}}$ of degree $d$ are ordered in such a way that for  $(\struc{B},b)\in \tau_{d,2,\sigma}^{k}$, it holds that $\struc{B}\models \varphi_{\operatorname{root}}(b)$  if and only if $k\in \{1,\dots,m\}$ for some $m\leq n_{d,2,\sigma}$.
For $k\in \{1,\dots,m\}$, let 
\[
\classStruc{P}_k:=\{\struc{A}\in \classStruc{P}_{\zigzag}\mid \text{ there is }a\in \univ{A}\text{ such that }(\mathcal{N}_2^{\struc{A}}(a),a)\in \tau_{d,2,\sigma}^k\}.
\] 
Since every $\struc{A}\in \classStruc{P}_{\zigzag}$ satisfies $ \exists^{=1}x\varphi_{\operatorname{root}}(x)$  we get that 
\[
\classStruc{P}_{\zigzag}'=\bigcup_{1\leq k\leq m}\classStruc{P}_k\cup \{\struc{A}_{\emptyset}\}
\]
 and this union is disjoint.   
Furthermore, for $k\in \{1,\dots,m\}$, let $I_k\subseteq \{1,\dots,n_{d,2,\sigma}\}$ be the set of indices $j$ such that there is a structure $\struc{A}\in \classStruc{P}_k$ and $a\in \univ{A}$ with $(\mathcal{N}_2^{\struc{A}}(a),a)\in\tau_{d,2,\sigma}^j$. 
For every $k\in \{1,\dots,m\}$ we define the $2$-neighbourhood profile   $\rho_k:\{1,\dots,n_{d,2,\sigma}\}\rightarrow \mathfrak{I}_{0}$ by
\begin{align*}
\rho_k(i):=
\begin{cases}
[0,1] &  \text{ if }i=k, \\
[0,\infty) & \text{ if }i\in I_k\setminus\{k\},\\
[0,0] & \text{ otherwise}.
\end{cases}
\end{align*}
To prove that these $0$-profiles of radius $2$ define the property $\classStruc{P}_{\zigzag}'$, the crucial observation is that for every element $a$ of some structure in $\classStruc{C}_{\sigma,d}$, the FO-formula $\varphi_{\zigzag}'$ only talks about elements of distance at most $2$ to $a$ (\ie $\varphi'_{\zigzag}$ is $2$-local). Hence 
the $2$-histogram vector of a structure already captures whether the structure satisfies $\varphi_{\zigzag}'$. We will now formally prove this.
\begin{lemma}\label{lem:neighbouhoodProfilOfPZigZag} 
It holds that $\classStruc{P}_{\zigzag}'=\bigcup_{1\leq k \leq m}\classStruc{P}_{\rho_k}$. 
\end{lemma}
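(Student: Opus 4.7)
The plan is to prove the two set-inclusions separately, relying on the fact that $\varphi_{\zigzag}'$ is essentially a $2$-local sentence.

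First, for $\classStruc{P}_{\zigzag}' \subseteq \bigcup_{1 \leq k \leq m} \classStruc{P}_{\rho_k}$: Lemma~\ref{lem:relaxedFormula} reduces this to showing that $\struc{A}_\emptyset$ and every $\struc{A} \in \classStruc{P}_{\zigzag}$ obey some $\rho_k$. The empty structure has the zero histogram vector and so trivially obeys every $\rho_k$. For a non-empty $\struc{A} \models \varphi_{\zigzag}$, the conjunct $\exists^{=1} x\, \varphi_{\operatorname{root}}(x)$ guarantees a unique root, whose $2$-type is $\tau_{d,2,\sigma}^k$ for some $k \in \{1,\dots,m\}$; thus $\struc{A}\in \classStruc{P}_k$. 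By definition of $I_k$ we then have $(\vet{v}_{d,2,\sigma}(\struc{A}))_j = 0$ for every $j \notin I_k$, the entry at index $k$ equals $1 \in [0,1]$, and every other entry lies in $[0,\infty)$. Hence $\struc{A}\sim\rho_k$.

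For the reverse inclusion, fix $\struc{A}\in\classStruc{P}_{\rho_k}$; if $\struc{A}=\struc{A}_\emptyset$ we are done, so suppose $\struc{A}$ is non-empty. The key observation I would make is that $\varphi_{\zigzag}'$ is $2$-local in the following sense: inspecting the four conjuncts from Appendix~\ref{app:A}, each of $\varphi_{\operatorname{rotationMap}}, \varphi_{\operatorname{base}}, \varphi_{\operatorname{recursion}}$, together with the universal portion of $\varphi_{\operatorname{tree}}'$, can be written in the form $\forall x\, \psi(x)$ where $\psi(x)$ refers only to elements at distance at most $2$ from $x$; the only remaining conjunct is the count $\exists^{\leq 1} x\, \varphi_{\operatorname{root}}(x)$ on a $1$-local predicate. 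Consequently, the truth of $\varphi_{\zigzag}'$ on $\struc{A}$ is determined by the multiset of $2$-types realised by elements of $\struc{A}$.

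Given this, since $\struc{A}\sim\rho_k$, every $2$-type occurring in $\struc{A}$ has index in $I_k$, and by definition of $I_k$ each such type is witnessed by some element of some structure in $\classStruc{P}_k\subseteq\classStruc{P}_{\zigzag}$; thus all universal $2$-local conjuncts of $\varphi_{\zigzag}'$ hold at every element of $\struc{A}$. For $\exists^{\leq 1} x\, \varphi_{\operatorname{root}}(x)$, I would argue $I_k \cap \{1,\dots,m\} = \{k\}$: any structure in $\classStruc{P}_k$ satisfies $\varphi_{\zigzag}$ and therefore has a unique root, which by choice of $k$ has $2$-type $\tau_{d,2,\sigma}^k$, so no other root $2$-type can occur in $I_k$. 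Hence the only $2$-type realised in $\struc{A}$ that can carry a root is $\tau_{d,2,\sigma}^k$, and $\rho_k$ bounds its count by $1$; this yields $\struc{A}\models \exists^{\leq 1} x\,\varphi_{\operatorname{root}}(x)$, and therefore $\struc{A}\models\varphi_{\zigzag}'$.

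The main obstacle is the $2$-locality step: we must verify, by inspecting the concrete subformulas from Appendix~\ref{app:A}, that each universal conjunct of $\varphi_{\zigzag}'$ really is determined by the $2$-neighbourhood of its bound variable, and that $\varphi_{\operatorname{root}}$ is $1$-local. Once this is confirmed, the remainder of the proof is bookkeeping about which $2$-types lie in $I_k$ and how $\rho_k$ encodes the uniqueness of the root.
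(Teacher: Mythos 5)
Your first direction and your treatment of $\exists^{\leq 1}x\,\varphi_{\operatorname{root}}(x)$ both match the paper (the paper establishes your observation that $I_k \cap \{1,\dots,m\} = \{k\}$ implicitly, via the fact that any root $2$-type realised in a structure of $\classStruc{P}_k$ must equal $\tau_{d,2,\sigma}^k$ because $\varphi_{\zigzag}$ forces a unique root). Your overall strategy for the reverse inclusion --- transferring satisfaction from witness structures $\other{\struc{A}}\models\varphi_{\zigzag}$ via $2$-ball isomorphisms --- is also exactly the paper's. But there is a genuine gap in your key locality claim.

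You assert that each universal conjunct of $\varphi_{\zigzag}'$ can be written as $\forall x\,\psi(x)$ with $\psi(x)$ determined by the $2$-ball of $x$. This is fine for $\varphi_{\operatorname{tree}}'$, $\varphi_{\operatorname{rotationMap}}$ and $\varphi_{\operatorname{base}}$, but it is false for $\varphi_{\operatorname{recursion}}$. That conjunct has the shape $\forall x\forall z\bigl(\varphi(x,z)\lor\psi(x,z)\bigr)$ with $\varphi(x,z)=\lnot\exists y\,F(x,y)\land\lnot\exists y\,F(z,y)$, and the inner predicate $\theta(x):=\forall z\,(\varphi(x,z)\lor\psi(x,z))$ is not $2$-local in $x$: for an element $c$ at distance exactly $2$ from $a$ (reachable via the $E$-relations of the premise), whether $\varphi(a,c)$ holds depends on whether $c$ has an $F$-out-edge, and such an edge can lead to an element at distance $3$ from $a$, outside $N_2^{\struc{A}}(a)$. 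So two structures with isomorphic $2$-balls around $a$ can disagree on $\theta(a)$, and your "all universal $2$-local conjuncts hold at every element" step does not go through for $\varphi_{\operatorname{recursion}}$. The paper handles this by a case split on the pair $(a,c)$: if $a$ has an $F$-out-edge it transfers through the $2$-ball of $a$; if instead $c$ has one it transfers through the $2$-ball of $c$; if neither does then $\varphi(a,c)$ is satisfied outright. Each branch is $2$-local around the \emph{relevant} quantified variable, which is a weaker (and correct) locality statement than the one you rely on. Your proposal would need to replace the blanket $\forall x$-locality claim with this pairwise case analysis to close the argument.
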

\begin{proof}
We first prove that $\classStruc{P}_{\zigzag}'\subseteq \bigcup_{1\leq k \leq m}\classStruc{P}_{\rho_k}$. First note that trivially $\struc{A}_{\emptyset}\in \bigcup_{1\leq k \leq m}\classStruc{P}_{\rho_k}$. Now assume $\struc{A}\in \classStruc{P}_{\zigzag}$. This implies that there is $k\in \{1,\dots,m\}$ such that $\struc{A}\in \classStruc{P}_k$.  By construction we have that for every $a\in \struc{A}$, there is  $i\in I_k$  such that $(\mathcal{N}_2^{\struc{A}}(a),a)\in \tau_{d,2,\sigma}^i$. Furthermore, since $\struc{A}\models \varphi_{\zigzag}$, we have that $\struc{A}\models \exists^{=1}x\varphi_{\operatorname{root}}(x)$, and that there can be at most one $a\in \univ{A}$   such that $(\mathcal{N}_2^{\struc{A}}(a),a)\in \tau_{d,2,\sigma}^k$.   Therefore $\struc{A}\in \classStruc{P}_{\rho_k}$. \\ 
		
To prove $\bigcup_{1\leq k \leq m}\classStruc{P}_{\rho_k}\subseteq \classStruc{P}_{\zigzag}'$, we  prove that  
	every structure in $\bigcup_{1\leq k \leq m}\classStruc{P}_{\rho_k}$  
	must satisfy $\varphi_{\zigzag}'$. We will prove that every $\struc{A}\in \bigcup_{1\leq k \leq m}\classStruc{P}_{\rho_k}$ satisfies $\varphi_{\operatorname{recursion}}$, and refer for the proof that $\struc{A}$ satisfies $\varphi_{\operatorname{tree}}'\land \varphi_{\operatorname{rotationMap}}\land \varphi_{\operatorname{base}}$ to  Claim~\ref{claim:satisfyingTree}, Claim~\ref{claim:satisfyingRotationMap} and Claim~\ref{claim:satisfyingBase} in Appendix~\ref{app:B}. Note that $\struc{A}_{\emptyset}\models \varphi_{\zigzag}'$ by Lemma~\ref{lem:relaxedFormula} and hence we exclude $\struc{A}_{\emptyset}$ in the following.
	
	\begin{claim}\label{claim:satisfyingRecursion}
		Every structure $\struc{A}\in \bigcup_{1\leq k \leq m}\classStruc{P}_{\rho_k}\setminus \{\struc{A}_{\emptyset}\}$ satisfies $\varphi_{\operatorname{recursion}}$.
	\end{claim}
	\begin{proof}
		Let $\struc{A}\in \bigcup_{1\leq k \leq m}\classStruc{P}_{\rho_k}\setminus \{\struc{A}_{\emptyset}\}$. Then there is a $k\in \{1,\dots,m\}$ such that $\struc{A}\in \classStruc{P}_{\rho_k}$. 
		
		By definition, $\varphi_{\operatorname{recursion}}:= \forall x \forall z\big(\varphi(x,z)\lor \psi(x,z)\big)$ (see Appendix~\ref{app:A}),  where 
		\begin{align*}
		\varphi(x,z):=&\lnot \exists y F(x,y)\land \lnot \exists y F(z,y) \text{ and }\\
		\psi(x,z):=&\bigwedge_{\substack{k_1',k_2'\in \indexSetRotation\\\ell_1',\ell_2'\in \indexSetRotation}}\bigg(\exists y \big[E_{k_1',\ell_1'}(x,y)\land E_{k_2',\ell_2'}(y,z)\big]\rightarrow \\&
		\bigwedge_{\substack{i,j,i',j'\in [D], k,\ell\in \indexSetH\\\rot_H(k,i)=((k_1', k_2'),i')\\ \rot_H((\ell_2', \ell_1'),j)=(\ell,j')}}\exists x'\exists z'\big[ F_k(x,x')\land F_\ell(z,z')\land
		E_{(i,j),(j',i')}(x',z')\big]\bigg).
		\end{align*}
		Let $a,c\in \univ{A}$. Assume first that there is $b\in \univ{A}$ with $(a,b)\in \rel{F}{\struc{A}}$.  Hence $\struc{A}\not\models \varphi(a,c)$. Since $\varphi_{\operatorname{recursion}}:= \forall x \forall z\big(\varphi(x,z)\lor \psi(x,z)\big)$ we aim to prove $\struc{A}\models \psi(a,c)$.  
		By construction of $\rho_k$, there is an $i\in I_{k}$ such that $(\mathcal{N}_2^{\struc{A}}(a),a)\in \tau_{d,2,\sigma}^i$. Therefore there is a structure $\other{\struc{A}}\models \varphi_{\zigzag}$ and $\other{a}\in \univ{\other{A}}$ such that $(\mathcal{N}_2^{\struc{A}}(a),a)\cong (\mathcal{N}_2^{\other{\struc{A}}}(\other{a}),\other{a})$.  Let $f$ be an isomorphism from $(\mathcal{N}_2^{\struc{A}}(a),a)$ to $(\mathcal{N}_2^{\other{\struc{A}}}(\other{a}),\other{a})$. Since $b\in N_2^{\struc{A}}(a)$, we get that $f(b)$ is defined. Since $f$ is an isomorphism mapping $a$ onto $\other{a}$, we have that $(a,b)\in \rel{F}{\struc{A}}$ implies that $(\other{a},f(b))\in \rel{F}{\other{\struc{A}}}$. Hence $\other{\struc{A}}\not\models \varphi(\other{a},\other{c})$, for every $\other{c}\in \univ{\other{A}}$. But since $\other{\struc{A}}\models \varphi_{\operatorname{recursion}}$, as $\other{\struc{A}}\models \varphi_{\zigzag}$, this shows that $\other{\struc{A}}\models \psi(\other{a},\other{c})$ for every $\other{c}\in \univ{\other{A}}$. 
		
		Let $k_1',k_2'\in \indexSetRotation$ and $\ell_1',\ell_2'\in \indexSetRotation$ be indices such that there is  $b'\in \univ{A}$ with $(a,b')\in \rel{E_{k_1',\ell_1'}}{\struc{A}}$ and $(b',c)\in \rel{E_{k_2',\ell_2'}}{\struc{A}}$. 
		Since $b',c\in N_2^{\struc{A}}(a)$, by assumption we get that $f(b')$ and $f(c)$ are defined. Furthermore, $(a,b')\in \rel{E_{k_1',\ell_1'}}{\struc{A}}$ and $(b',c)\in \rel{E_{k_2',\ell_2'}}{\struc{A}}$ imply that $(\other{a},f(b'))\in \rel{E_{k_1',\ell_1'}}{\other{\struc{A}}}$ and $(f(b'),f(c))\in \rel{E_{k_2',\ell_2'}}{\other{\struc{A}}}$, since $f$ is an isomorphism mapping $a$ onto $\other{a}$. We proved in the previous paragraph that $\other{\struc{A}}\models \psi(\other{a},f(c))$. Hence we can conclude that for all indices $i,j,i',j'\in [D]$,  $k,\ell\in \indexSetH$ for which $\rot_H(k,i)=((k_1', k_2'),i')$ and  $\rot_H((\ell_2', \ell_1'),j)=(\ell,j')$, there are elements $\other{a}',\other{c}'\in \univ{\other{A}}$ such that $ (\other{a},\other{a}')\in \rel{F_k}{\other{\struc{A}}}$, $(f(c),\other{c}')\in \rel{F_\ell}{\other{\struc{A}}}$, and $(\other{a}',\other{c}')\in \rel{E_{(i,j),(j',i')}}{\other{\struc{A}}}$. Since  $\other{a}',\other{c}'\in N_2^{\other{\struc{A}}}(\other{a})$, we get that $a':=f^{-1}(\other{a}')$ and $c':=f^{-1}(\other{c}')$ are defined. Furthermore, we get that $ (a,a')\in \rel{F_k}{\struc{A}}$, $(c,c')\in \rel{F_\ell}{\struc{A}}$ and $(a',c')\in \rel{E_{(i,j),(j',i')}}{\struc{A}}$. This proves that $\struc{A}\models \psi(a,c)$.\\
		
		In the case that there is  $b\in \univ{A}$ with $(c,b)\in \rel{F}{\struc{A}}$, we can prove similarly that $\struc{A}\models \psi(a,c)$, by considering that there exist $\other{\struc{A}}\models \varphi_{\zigzag}$ and $\other{c}\in \univ{\other{A}}$ such that $(\mathcal{N}_2^{\struc{A}}(a),c)\cong (\mathcal{N}_2^{\other{\struc{A}}}(\other{c}),\other{c})$ by construction of $\rho_k$. Finally if there is no $b\in \univ{A}$ such that $(a,b)\in \rel{F}{\struc{A}}$ or $(c,b)\in \rel{F}{\struc{A}}$ then $\struc{A}\models \varphi(a,c)$. Since this covers every case we get that $\struc{A}\models \varphi_{\operatorname{recursion}}$.
	\end{proof}
	Assume $\struc{A}\in \bigcup_{1\leq k \leq m}\classStruc{P}_{\rho_k}$. As proved in Claims~\ref{claim:satisfyingTree}, \ref{claim:satisfyingRotationMap}, \ref{claim:satisfyingBase} and \ref{claim:satisfyingRecursion} this implies that $\struc{A}\models \varphi_{\operatorname{tree}}'$, $\struc{A}\models \varphi_{\operatorname{rotationMap}}$, $\struc{A}\models \varphi_{\operatorname{base}}$ and $\struc{A}\models \varphi_{\operatorname{recursion}}$. Since $\varphi_{\zigzag}'$ is a conjunction of these formulas, we get $\struc{A}\models \varphi_{\zigzag}'$ and hence $\struc{A}\in \classStruc{P}_{\zigzag}'$.
\end{proof}
\subsection{A local reduction from relational structures to graphs}\label{sec:localreduction}
In this section we will define our graph property $\graphProp$ by giving a reduction from the property $\classStruc{P}_{\zigzag}'$ and argue that $\graphProp$ is GSF-local while not testable. To do so, we show that this reduction is `local' which preserves the testability of these two properties. 
\paragraph{Local reduction} We first introduce the following notion of local reduction between two property testing models. In the following, when the context is clear, we will use $\mathcal{C}$ to denote both a class of structure and the corresponding property testing model, which can be either the bounded-degree model for graphs or bounded-degree model for relational structures.   
\begin{definition}[Local reduction]
	Let $\mathcal{C},\mathcal{C'}$ be two property testing models 
	and let $\mathcal{P}\subseteq\mathcal{C}$, $\mathcal{P}'\subseteq\mathcal{C'}$ be two properties. We say that a function $f:\mathcal{C}\rightarrow \mathcal{C'}$ is a local reduction from $\mathcal{P}$ to $\mathcal{P}'$ if there are constants $c_1,c_2\in \mathbb{N}_{\geq 1}$ such that for every $X\in \mathcal{C}$ the following properties hold.
	\begin{enumerate}
		\item If $X\in \mathcal{P}$ then $f(X)\in \mathcal{P}'$.
		\item If $X$ is $\epsilon$-far from $\mathcal{P}$ then $f(X)$ is $(\epsilon/c_1)$-far from $\mathcal{P}'$.
		\item For every query to $f(X)$ we can adaptively\footnote{By adaptively computing queries we mean that the selection of the next query may depend on the answer to the previous query.} compute $c_2$ queries such that the answer to the query to $f(X)$ can be computed  from the answers to the $c_2$ queries to $X$.
	\end{enumerate}
\end{definition}
The following lemma is known. 
\begin{lemma}[Theorem 7.14 in \cite{goldreich2017introduction}]\label{lem:localReduction}
	Let $\mathcal{C},\mathcal{C'}$ be two property testing models, $\mathcal{P}\subseteq\mathcal{C}$, $\mathcal{P}'\subseteq\mathcal{C'}$ be two properties	and $f$ a local reduction from $\mathcal{P}$ to $\mathcal{P}'$. If $\mathcal{P}'$ is testable then so is $\mathcal{P}$.
\end{lemma}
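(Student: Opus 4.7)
The approach is the standard reduction argument: simulate a tester for $\mathcal{P}'$ on the image $f(X)$ using the local query-translation provided by $f$. Concretely, suppose $T'$ is a tester for $\mathcal{P}'$ with (constant) query complexity $q'(\cdot)$. Given an input structure $X\in\mathcal{C}$ and proximity parameter $\varepsilon\in(0,1)$, I would define a tester $T$ for $\mathcal{P}$ as follows: run $T'$ with proximity parameter $\varepsilon/c_1$ on the ``virtual'' input $f(X)$; whenever $T'$ issues a query to $f(X)$, answer it by issuing at most $c_2$ queries to $X$ (adaptively, as allowed by property 3 of the definition) and computing the answer from the $c_2$ responses. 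Output whatever $T'$ outputs.

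Correctness then splits into the two usual cases. For completeness, if $X\in\mathcal{P}$ then by property 1 of local reductions $f(X)\in\mathcal{P}'$, so $T'$ (run with any proximity parameter) accepts with probability at least $2/3$, and hence so does $T$. For soundness, if $X$ is $\varepsilon$-far from $\mathcal{P}$ then by property 2 the image $f(X)$ is $(\varepsilon/c_1)$-far from $\mathcal{P}'$, so $T'$ with proximity parameter $\varepsilon/c_1$ rejects with probability at least $2/3$, and $T$ inherits this. The query complexity of $T$ is at most $c_2\cdot q'(\varepsilon/c_1)$, which is a constant depending only on $\varepsilon$ (and the constants of the reduction), so $T$ is a valid tester for $\mathcal{P}$.

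The only point that requires care is making sure that the simulation of $T'$ on $f(X)$ is faithful: each oracle response returned to $T'$ must be exactly what $T'$ would have received from a genuine oracle for $f(X)$. This is precisely what property 3 guarantees, and it is the reason adaptivity is permitted there (the second of the $c_2$ queries to $X$ may depend on the answer to the first). Beyond this bookkeeping there is no real obstacle; the proof is a straightforward composition of $T'$ with the query translator supplied by $f$, and the only quantitative cost is the factor $c_1$ in the proximity parameter and the factor $c_2$ in the query complexity, both of which are absorbed since $c_1,c_2$ are constants.
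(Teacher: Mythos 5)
Your proof is correct, and it is the standard simulation argument. Note that the paper does not prove this lemma itself -- it is stated as a known fact and attributed to Theorem 7.14 of Goldreich's textbook -- so there is no in-paper proof to compare against; your writeup supplies exactly the routine composition that the citation is standing in for. One small point worth being explicit about, which you implicitly assume, is that the simulating tester $T$ must know the size of $f(X)$ in order to invoke $T'$ on the correct size class $\mathcal{P}'_{n'}$; in the construction used in this paper $|V(f(X))|$ is a fixed affine function of $|U(X)|$ (namely $(1+d\ell)\,|U(X)|$), so this is immediate, but in general it is part of what a local reduction must make computable.
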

\paragraph{Construction of the graph property} Now we construct a property $\graphProp$ from the property $\classStruc{P}_{\zigzag}'$. We obtain this graph property as $f(\classStruc{P}_{\zigzag}')$ by defining a map $f:\classStruc{C}_{\sigma,d}\rightarrow \mathcal{C}_d$. To define $f$ we  introduce a distinct arrow-graph gadget for every relation in $\sigma$ (\ie for every edge colour). The map $f$ then replaces every tuple in a certain relation (every coloured edge) by the respective arrow-graph gadget. We further prove that this replacement operation defines a local reduction $f$ from $\classStruc{P}_{\zigzag}'$ to $\graphProp$. Recall that a local reduction is a function maintaining distance that can be simulated locally by queries. Since by Lemma~\ref{lem:localReduction} local reductions preserve testability, we use the local reduction from $\classStruc{P}_{\zigzag}'$ to $\graphProp$ to obtain non-testability of the property  $\graphProp$ from the non-testability of $\classStruc{P}_{\zigzag}'$. We will now define $f$ formally.

Let $\ell$ be the number of relations (the number of edge colours) in $\sigma$. 
We first introduce the different types of arrow-graph gadgets we need to define the local reduction.
	For $1\leq k\leq \ell$, we let $H_k$ 
	 be the graph with vertex set $V(H_k):=\{a_1,\dots,a_{2\ell+2},b_1,b_2\}$ and edge set $E(H_k):=\{\{a_i,a_{i+1}\}\mid 1\leq i\leq 2\ell+1 \}\cup\{\{a_{\ell+1+k},b_j\}\mid j\in\{1,2\}\}$.	We call $H_k$ a \emph{$k$-arrow}. For any graph $G$ and vertices $v,w\in V(G)$, we say that there is a $k$-arrow from $v$ to $w$, denoted $v\xrightarrow{k}w$, if there are $2\ell+2$ vertices $v_2,\dots,v_{2\ell+1},w_1,w_2\in V(G)$ and an isomorphism  $g:H_k\rightarrow \mathcal{N}_1^G(v_2,\dots,v_{2\ell+1},w_1,w_2)$ such that $g(a_1)=v$ and $g(a_{2\ell+2})=w$. We now define a second arrow gadget. For $1\leq k\leq \ell$, we let $L_k$ be the graph with vertex set $V(L_k):=\{a_1,\dots,a_{\ell+1},b\}$ and edge set $E(L_k):=\{\{a_i,a_{i+1}\}\mid 1\leq i\leq \ell \}\cup\{\{a_{k},b\}\}$.	We call $L_k$ a \emph{$k$-loop}. For any graph $G$ and vertex $v\in V(G)$, we say that there is a $k$-loop at $v$, denoted $v\xrightarrow{k}v$, if there are $\ell+1$ vertices $v_1,\dots,v_{\ell},w\in V(G)$ and an isomorphism  $g:L_k\rightarrow \mathcal{N}_1^G(v_1,\dots,v_{\ell},w)$ such that $g(a_{\ell+1})=v$. Finally we let $H_{\bot}$ be the graph with vertex set $V(H_{\bot}):=\{a_1,\dots,a_{\ell+1},b\}$ and edge set $E(H_{\bot}):=\{\{a_i,a_{i+1}\}\mid 1\leq i\leq \ell \}\cup\{\{a_{i},b\}\mid i\in\{1,2\}\}$.	We call $H_{\bot}$ a \emph{non-arrow}. For any graph $G$ and vertex $v\in V(G)$, we say that there is a non-arrow at $v$, denoted $v\not\rightarrow$, if there are $\ell+1$ vertices $v_1,\dots,v_{\ell},w\in V(G)$ and an isomorphism  $g:H_{\bot}\rightarrow N_1^G(v_1,\dots,v_{\ell},w)$ such that $g(a_{\ell+1})=v$.

We now define a function $f:\classStruc{C}_{\sigma,d}\rightarrow \mathcal{C}_d$ by	$f(\struc{A}):=G_\struc{A}$, where $G_{\struc{A}}$ is the graph on vertex set $V(G_{\struc{A}}):=\univ{A}\cup \{v^k_{a,i},w_{a,i}\mid  1\leq i\leq d, a\in \univ{A},1\leq k\leq \ell\}$ and edge set 
\begin{align*}
E(G_{\struc{A}}):=&\Big\{\{a,v^\ell_{a,i}\}\mid a\in \univ{A}, 1\leq i\leq d\Big\}\cup \Big\{\{v^k_{a,i},v^{k+1}_{a,i}\}\mid 1\leq k\leq \ell-1,a\in \univ{A}, 1\leq i\leq d\Big\}\\
\cup &\Big\{\{v^{k}_{b,j},w_{b,j}\},\{v^{k}_{b,j},w_{a,i}\},\{v^\ell_{a,i},v^\ell_{b,j}\}\mid a\not=b, \operatorname{ans}(a,i)=\operatorname{ans}(b,j)=(k,a,b) \Big\}\\
\cup & \Big\{ \{v^{k}_{a,i},w_{a,i}\}\mid \operatorname{ans}(a,i)=(k,a,a)  \Big\}\cup \Big\{ \{v^{1}_{a,i},w_{a,i}\},\{v^{2}_{a,i},w_{a,i}\}\mid \operatorname{ans}(a,i)=\bot  \Big\},
\end{align*}
where $\operatorname{ans}(a,i)=(k,a,b)$ denotes that the $i$-th tuple of $a$ is $(a,b)$ and is in the $k$-th relation. 
Hence $G_{\struc{A}}$ is defined in such a way that if $(a,b)$ is a tuple in the $k$-th relation of $\sigma$ in $\struc{A}$, then $a\xrightarrow{k}b$ in $G_{\struc{A}}$,  and $a$ has a non-arrow for every $i$ satisfying that $\operatorname{ans}(a,i)=\bot$ for every $k$. For illustration see Figure~\ref{fig:path}.

Now we  define property  $\graphProp:=\{f(\struc{A})\mid \struc{A}\in \classStruc{P}_{\zigzag}'\}\subseteq \mathcal{C}_d$.
\begin{figure*}
	\centering
	\begin{minipage}{0.47\linewidth}
		\centering
		\begin{tikzpicture}
		\tikzstyle{ns1}=[line width=1.2]
		\tikzstyle{ns2}=[line width=1.2]
		\definecolor{C1}{RGB}{1,1,1}
		\definecolor{C2}{RGB}{0,0,170}
		\definecolor{C3}{RGB}{251,86,4}
		\definecolor{C4}{RGB}{50,180,110}
		\def \dist {0.7}
		\def \heightWriting {0.4}
		\node[circle,fill=white,inner sep=0pt, minimum width=0pt] (15) at (0,-0.3) {};
		\node[draw,circle,fill=black,inner sep=0pt, minimum width=8pt] (0) at (0,0) {};
		\node[draw,circle,fill=black,inner sep=0pt, minimum width=5pt] (1) at (\dist,0) {};	
		\node[draw,circle,fill=black,inner sep=0pt, minimum width=5pt] (3) at (5-1.5*\dist,0) {};
		\node[draw,circle,fill=black,inner sep=0pt, minimum width=5pt] (4) at (5-0.5*\dist,0) {};
		\node[draw,circle,fill=black,inner sep=0pt, minimum width=5pt] (5) at (5+0.5*\dist,-1) {};	
		\node[draw,circle,fill=black,inner sep=0pt, minimum width=5pt] (13) at (5+0.5*\dist,1) {};
		\draw[ns1] (0)--(1);
		\draw[ns1] (13)--(5);
		\draw[ns1] (4)--(13);
		\draw[ns1] (3)--(4)--(5);
		\draw[ns1,dotted] (1)--(3);

		\node[minimum height=10pt,inner sep=0] at (0,\heightWriting+0.05) {$a$};
		\node[minimum height=10pt,inner sep=0] at (\dist,\heightWriting) {$v_{a,i}^\ell$};
		\node[minimum height=10pt,inner sep=0] at (5-0.5*\dist-0.2,\heightWriting) {$v_{a,i}^2$};
		\node[minimum height=10pt,inner sep=0] at (5+0.5*\dist+0.4,-1) {$v_{a,i}^1$};
		\node[minimum height=10pt,inner sep=0] at (5+0.5*\dist+0.4,1) {$w_{a,i}$};				
		
		\end{tikzpicture}
		\subcaption{Case $\operatorname{ans}(a,i)=\bot$} 	
	\end{minipage}
	\hfill
	\begin{minipage}{0.47\linewidth}
		\centering
		\begin{tikzpicture}
		\tikzstyle{ns1}=[line width=1.2]
		\tikzstyle{ns2}=[line width=1.2]
		\definecolor{C1}{RGB}{1,1,1}
		\definecolor{C2}{RGB}{0,0,170}
		\definecolor{C3}{RGB}{251,86,4}
		\definecolor{C4}{RGB}{50,180,110}
		\def \dist {0.7}
		\def \heightWriting {0.4}
		\node[circle,fill=white,inner sep=0pt, minimum width=0pt] (15) at (0,-0.3) {};
		\node[draw,circle,fill=black,inner sep=0pt, minimum width=8pt] (0) at (0,0) {};
		\node[draw,circle,fill=black,inner sep=0pt, minimum width=5pt] (1) at (\dist,0) {};	
		\node[draw,circle,fill=black,inner sep=0pt, minimum width=5pt] (2) at (2.7,0) {};
		\node[draw,circle,fill=black,inner sep=0pt, minimum width=5pt] (3) at (2.7+\dist,0) {};
		\node[draw,circle,fill=black,inner sep=0pt, minimum width=5pt] (4) at (2.7+2*\dist,0) {};
		\node[draw,circle,fill=black,inner sep=0pt, minimum width=5pt] (5) at (5+0.5*\dist,0) {};	
		\node[draw,circle,fill=black,inner sep=0pt, minimum width=5pt] (13) at (2.7+2*\dist,1.3) {};
		\draw[ns1] (0)--(1);
		\draw[ns1] (2)--(3);
		\draw[ns1] (3)--(4);
		\draw[ns1] (3)--(13);
		\draw[ns1,dotted] (1)--(2);	
		\draw[ns1,dotted] (4)--(5);			
		
		\node[minimum height=10pt,inner sep=0] at (0,\heightWriting+0.05) {$a$};
		\node[minimum height=10pt,inner sep=0] at (\dist,\heightWriting) {$v_{a,i}^\ell$};
		\node[minimum height=10pt,inner sep=0] at (2.7+\dist-0.17,\heightWriting) {$v_{a,i}^k$};
		\node[minimum height=10pt,inner sep=0] at (5+0.5*\dist-0.1,\heightWriting) {$v_{a,i}^1$};
		\node[minimum height=10pt,inner sep=0] at (2.7+2*\dist,1.3+\heightWriting) {$w_{a,i}$};				
		
		\end{tikzpicture} 
		\subcaption{Case $\operatorname{ans}(a,i)=(k,a,a)$}
	\end{minipage}
	\begin{minipage}{1\linewidth}
		\centering
		\begin{tikzpicture}
		\tikzstyle{ns1}=[line width=1.2]
		\tikzstyle{ns2}=[line width=1.2]
		\definecolor{C1}{RGB}{1,1,1}
		\definecolor{C2}{RGB}{0,0,170}
		\definecolor{C3}{RGB}{251,86,4}
		\definecolor{C4}{RGB}{50,180,110}
		\def \dist {0.7}
		\def \heightWriting {0.4}
		\node[circle,fill=white,inner sep=0pt, minimum width=0pt] (15) at (0,3) {};
		\node[draw,circle,fill=black,inner sep=0pt, minimum width=8pt] (0) at (0,0) {};
		\node[draw,circle,fill=black,inner sep=0pt, minimum width=5pt] (1) at (\dist,0) {};	
		\node[draw,circle,fill=black,inner sep=0pt, minimum width=5pt] (4) at (5-0.5*\dist,0) {};
		\node[draw,circle,fill=black,inner sep=0pt, minimum width=5pt] (5) at (5+0.5*\dist,0) {};
		\node[draw,circle,fill=black,inner sep=0pt, minimum width=5pt] (7) at (7,0) {};
		\node[draw,circle,fill=black,inner sep=0pt, minimum width=5pt] (8) at (7+\dist,0) {};
		\node[draw,circle,fill=black,inner sep=0pt, minimum width=5pt] (9) at (7+2*\dist,0) {};		
		\node[draw,circle,fill=black,inner sep=0pt, minimum width=5pt] (11) at (10-\dist,0) {};
		\node[draw,circle,fill=black,inner sep=0pt, minimum width=8pt] (12) at (10,0) {};	
		\node[draw,circle,fill=black,inner sep=0pt, minimum width=5pt] (13) at (7,1.3) {};
		\node[draw,circle,fill=black,inner sep=0pt, minimum width=5pt] (14) at (7,-1.3) {};
		\draw[ns1] (0)--(1);
		
		\draw[ns1] (4)--(5);
		\draw[ns1] (7)--(8);
		\draw[ns1] (8)--(13);
		\draw[ns1] (8)--(14);
		\draw[ns1] (8)--(9);
		\draw[ns1] (11)--(12);
		\draw[ns1,dotted] (1)--(4);	
		\draw[ns1,dotted] (5)--(7);
		\draw[ns1,dotted] (9)--(11);

		\node[minimum height=10pt,inner sep=0] at (0,\heightWriting+0.05) {$a$};
		\node[minimum height=10pt,inner sep=0] at (\dist,\heightWriting) {$v_{a,i}^\ell$};
		\node[minimum height=10pt,inner sep=0] at (5-0.5*\dist-0.1,\heightWriting) {$v_{a,i}^1$};
		\node[minimum height=10pt,inner sep=0] at (7,1.3+\heightWriting) {$w_{a,i}$};		
		\node[minimum height=10pt,inner sep=0] at (10,\heightWriting+0.07) {$b$};
		\node[minimum height=10pt,inner sep=0] at (5+0.5*\dist+0.1,\heightWriting) {$v_{b,j}^1$};
		\node[minimum height=10pt,inner sep=0] at (7+\dist+0.2,\heightWriting) {$v_{b,j}^k$};
		\node[minimum height=10pt,inner sep=0] at (10-\dist,\heightWriting) {$v_{b,j}^\ell$};
		\node[minimum height=10pt,inner sep=0] at (7,-1.3-\heightWriting) {$w_{b,j}$};			
		
		\end{tikzpicture} 
		\subcaption{Case $\operatorname{ans}(a,i)=\operatorname{ans}(b,j)=(k,a,b)$}
	\end{minipage}
	\caption{Different types of arrows in $G_{\struc{A}}$.}\label{fig:path}
	
\end{figure*}
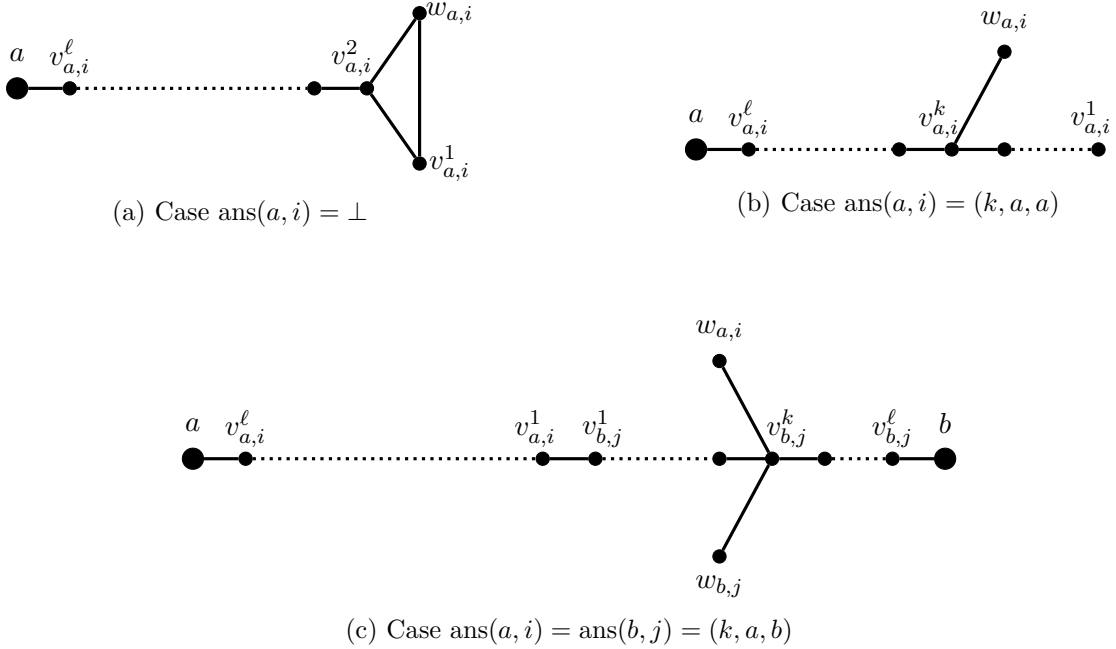
\begin{lemma}\label{lem:local_reduction}
The map	$f$ is a local reduction from $\classStruc{P}_{\zigzag}'$ to $\graphProp$.
\end{lemma}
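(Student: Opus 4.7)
The plan is to verify the three conditions defining a local reduction in turn. Condition~1 is immediate, since $\graphProp$ is defined as the image $f(\classStruc{P}_{\zigzag}')$. For Condition~3, each vertex of $G_{\struc{A}}$ is one of three easily identifiable types: a main vertex $a \in \univ{A}$, an intermediate vertex $v^k_{a,i}$, or a tip vertex $w_{a,i}$. In every case the identifier encodes the origin $a$ and the index $i$, so all neighbours are determined by inspecting $\operatorname{ans}(a,i)$ and, in the inter-element case, one further query at the matching index on the other endpoint. Thus each neighbour query to $G_{\struc{A}}$ can be simulated with $c_2 = 2$ queries to $\struc{A}$.

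Condition~2 is the heart of the proof and I would argue it by contrapositive. Set $N := |V(G_{\struc{A}})| = (1 + d(\ell+1))|\univ{A}|$, so graph and structure sizes differ only by a constant factor, and choose $c_1 := K(1 + d(\ell+1))$ for a large constant $K$ absorbing the number of edges in one arrow gadget. Suppose $G_{\struc{A}}$ is $(\epsilon/c_1)$-close to $\graphProp$, witnessed by some $G' \in \graphProp$ with $|E(G_{\struc{A}}) \triangle E(G')| \leq (\epsilon/c_1)\, d\, N$. By definition $G' \cong f(\struc{B})$ for some $\struc{B} \in \classStruc{P}_{\zigzag}'$. The task is to produce from $\struc{B}$ an isomorphic copy $\other{\struc{A}} \in \classStruc{P}_{\zigzag}'$ on universe $\univ{A}$ that differs from $\struc{A}$ in at most $\epsilon\, d\, |\univ{A}|$ tuples. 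The plan is to pick a renaming $\univ{B} \to \univ{A}$ that matches the main-vertex sets of $f(\struc{B})$ and $G_{\struc{A}}$ as faithfully as possible and then to call a main vertex $a$ \emph{clean} if all $\ell$ arrow gadgets at $a$ in $G'$ are structurally intact; non-clean vertices are \emph{dirty}. Each dirty main vertex absorbs at least one edge in $E(G_{\struc{A}}) \triangle E(G')$, bounding the dirty count by $(\epsilon/c_1)\, d\, N$. Tuples incident only to clean vertices are preserved verbatim in $\other{\struc{A}}$, so the tuple distance is bounded by a constant multiple of the dirty-vertex count, and $K$ large enough forces this to be at most $\epsilon\, d\, |\univ{A}|$, as required.

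The main obstacle is precisely the identification of $\univ{B}$ with $\univ{A}$: the abstract isomorphism $\phi : V(f(\struc{B})) \to V(G')$ need not send main vertices of $f(\struc{B})$ to main vertices of $G_{\struc{A}}$ under the identification $V(G') = V(G_{\struc{A}})$. I expect to sidestep this by arguing that a small amount of local surgery on the dirty region — rerouting $\phi$ so that main vertices end up paired with main vertices — costs at most a constant number of extra edge modifications per dirty vertex, which stays within our budget. A careful case analysis of how each edge modification inside a single arrow gadget translates into tuple differences (a tuple deletion, a tuple insertion, or a re-routing of the target index) then completes the bookkeeping needed to obtain the promised constants $c_1, c_2$.
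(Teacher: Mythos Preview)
Your overall strategy—Condition~1 by definition, Condition~3 by direct simulation, Condition~2 by contrapositive with a count of ``dirty'' gadgets—matches the paper's. Two points deserve comment.

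For Condition~3, your $c_2 = 2$ is too optimistic: from $\operatorname{ans}(a,i) = (k,a,b)$ you learn the other endpoint $b$, but you do \emph{not} learn the index $j$ with $\operatorname{ans}(b,j) = (k,a,b)$, and the adjacencies of the $b$-side gadget vertices $v^{k'}_{b,j}, w_{b,j}$ depend on $j$. Recovering $j$ may require the queries $(b,1),\ldots,(b,d)$, so the paper takes $c_2 = d+1$. This is only a constant adjustment.

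For Condition~2, you correctly flag the alignment of $\univ{B}$ with $\univ{A}$ as the delicate step, but the paper's resolution is simpler than the surgery you sketch. The paper declares a tuple $(a,b)$ \emph{untouched} when no modified edge $e\in E$ meets the set $\{a,b\}\cup\{v^{k}_{a,i}, w_{a,i}, v^{k}_{b,i}, w_{b,i} : 1\le i\le d,\ 1\le k\le \ell\}$. Crucially this set contains $a$ and $b$ themselves, so for an untouched tuple one has $\deg_G(a)=\deg_G(b)=d$; since the main vertices of any graph in $\graphProp$ are exactly its degree-$d$ vertices, $a$ and $b$ are automatically main vertices of $G=f(\struc{A}_G)$ as well, and the surviving $k$-arrow $a\xrightarrow{k}b$ in $G$ then witnesses that $(a,b)$ is a tuple of $\struc{A}_G$. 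Thus no rerouting of the isomorphism $\phi$ is needed: the identification is forced on the untouched part, and each $e\in E$ can spoil at most $2d$ tuples (each of its two endpoints is associated with a single element of $\univ{A}$, which lies in at most $d$ tuples). This gives $|R|\le 2d|E|$ directly and $c_1 = 2d(1+d\ell)$. Your surgery approach would also succeed, but it is heavier machinery than the problem requires.
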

\begin{proof}
	First note that for any $\struc{A}\in \classStruc{P}_{\zigzag}'$, we have that $f(\struc{A})\in \graphProp$ by definition. 
	
	Now let $c_1=2d+2d^2\ell$. We prove that  if $\struc{A}\in \classStruc{C}_{\sigma,d}$ is $\epsilon$-far from $\classStruc{P}_{\zigzag}'$ then $f(\struc{A})$ is $\epsilon/c_1$-far from $\graphProp$ by contraposition. Therefore assume that $f(\struc{A})=:G_{\struc{A}}$ is not $\epsilon/c_1$-far from $\graphProp$ for some $\struc{A}\in \classStruc{C}_{\sigma,d}$. Then there is a set $E\subseteq \{e\subseteq V(G_{\struc{A}})\mid |e|=2 \}$ of size at most $\epsilon d |V(G_{\struc{A}})|/c_1$,  and a graph $G\in \graphProp$ such that  $G$ is obtained from $G_{\struc{A}}$ by modifying the tuples in $E$. By definition of $\graphProp$, there is a structure $\struc{A}_G\in \classStruc{P}_{\zigzag}'$ such that $f(\struc{A}_G)=G$. First note that $|\univ{A_G}|=|\univ{A}|$, as $(1+d\ell)|\univ{A}|=|V(G_{\struc{A}})|=|V(G)|=(1+d\ell)|\univ{A_G}|$. Hence there must be a set $R$ of tuples that need to be modified to make $\struc{A}$ isomorphic to $\struc{A}_G$. First note that $R$ cannot contain a tuple $(a,b)$ where $\{a,v^k_{a,i},w_{a,i},b,v_{b,i}^k,w_{b,i}\mid  1\leq i\leq d, 1\leq k\leq \ell\}\cap e=\emptyset$ for every $e\in E$. This is because if $(a,b)$ is a tuple in $\struc{A}$, then $a\xrightarrow{k}b$ for some $k$ in $G_{\struc{A}}$. But since $\{a,v^k_{a,i},w_{a,i},b,v_{b,i}^k,w_{b,i}\mid  1\leq i\leq d, 1\leq k\leq \ell\}\cap e=\emptyset$ for every $e\in E$, we have that $a\xrightarrow{k}b$ in $G$. But then $(a,b)$ must be a tuple in $\struc{A}_G$, and hence $(a,b)$ cannot be in $R$. The same argument works when assuming that $(a,b)$ is a tuple in $\struc{A}_G$. Since for every $e\in E$, there are at most $2d$ tuples $(a,b)$ such that  $\{a,v^k_{a,i},w_{a,i},b,v_{b,i}^k,w_{b,i}\mid  1\leq i\leq d, 1\leq k\leq \ell\}\cap e\not=\emptyset$, we get that 
	\begin{displaymath}
	|R|\leq 2d \epsilon d |V(G_{\struc{A}})|/c_1=2d(1+d\ell)\epsilon d|\univ{A}|/c_1=\epsilon d |\univ{A}|.
	\end{displaymath}
	Hence $\struc{A}$ is not $\epsilon$-far to being in $\classStruc{P}_{\zigzag}'$.

	Let $c_2:= d+1$. Let $\struc{A}\in \classStruc{C}_{\sigma,d}$ and $G_{\struc{A}}:=f(\struc{A})$. Note that any $a\in \univ{A}$  is adjacent in $G_{\struc{A}}$ to $v_{a,i}^\ell$, for every $1\leq i\leq d$. Hence any neighbour query in $G_{\struc{A}}$ to $a$ 
	can be answered without querying $\struc{A}$.  Assume $v\in \{v_{a,i}^k,w_{a,i}\mid 1\leq k\leq \ell \}$ for some $a\in \univ{A}$ and some $1\leq i\leq d$. Then we can determine all neighbours of $v$ by querying $(a,i)$ and further if  $\operatorname{ans}(a,i)\not=\bot$ and $\operatorname{ans}(a,i)=(k,a,b)$, then we need to query $(b,j)$ for every $1\leq j\leq d$. Hence we can determine the answer to any query to $G_{\struc{A}}$ by making $c_2$ queries to $\struc{A}$. This proves that $f$ is a local reduction  from $\classStruc{P}_{\zigzag}'$ to $\graphProp$.
\end{proof}
We remark that $\graphProp$ is a simpler version of the  simple graph property defined in \cite{testingFO} where extra care had to be taken to define degree-regular graphs.

\subsection{The graph property is GSF-local}  
Let $\graphProp$ be the graph property as defined in Section~\ref{sec:localreduction}. We now show that $\graphProp$ is GSF-local. 
\begin{lemma}\label{lemma:graphproperty_gsf_local}
The graph property $\graphProp$ is GSF-local.
\end{lemma}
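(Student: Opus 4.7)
The plan is to exhibit $\graphProp$ as a finite union of properties definable by $0$-profiles on $\mathcal{C}_d$; GSF-locality then follows immediately from Theorem~\ref{thm:subsetOfFOIsLocal}. By Lemma~\ref{lem:neighbouhoodProfilOfPZigZag} we have $\classStruc{P}_{\zigzag}' = \bigcup_{1 \le k \le m} \classStruc{P}_{\rho_k}$ with each $\rho_k$ a $2$-neighbourhood $0$-profile on $\classStruc{C}_{\sigma,d}$, so $\graphProp = f(\classStruc{P}_{\zigzag}') = \bigcup_{1 \le k \le m} f(\classStruc{P}_{\rho_k})$, and it suffices to show that each $f(\classStruc{P}_{\rho_k})$ is $0$-profile definable for some common radius $r$.

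The pivotal observation is that $f$ is local in a very strong sense: there is a constant $r = O(\ell)$ (depending only on $\sigma$ and $d$) such that for every $2$-type $\tau$ of $\classStruc{C}_{\sigma,d}$ there is a unique $r$-type $\tau^\star$ in $\mathcal{C}_d$ with the property that, whenever $a \in \univ{A}$ has $2$-type $\tau$ in $\struc{A}$, the vertex $a$ in $G_\struc{A}$ has $r$-type $\tau^\star$; moreover different $2$-types yield different $r$-types. One chooses $r$ large enough so that (i) every arrow/loop/non-arrow gadget incident to $a$, together with every gadget incident to a $\struc{A}$-neighbour of $a$, lies inside the $r$-ball around $a$ in $G_\struc{A}$, and (ii) from any vertex $v \in V(G_\struc{A})$ the $r$-ball around $v$ contains enough gadget structure to unambiguously identify whether $v$ is an original vertex, an internal arrow vertex $v^k_{a,i}$, or a terminal $w_{a,i}$, and to read off the relation index $k$ attached to each arrow.

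Let $\mathcal{I} \subseteq \mathcal{C}_d$ denote the image of $f$. A local-to-global reconstruction shows that $G \in \mathcal{I}$ iff every $r$-ball of $G$ is \emph{good}, i.e.\ appears as the $r$-neighbourhood of some vertex in some $f(\struc{A})$: the local $r$-types label each vertex of $G$ with its role and, on original vertices, with the $2$-type of the preimage; since gadgets overlap only at their endpoints (the original vertices), these labels assemble uniquely into a $\sigma$-structure $\struc{A}$ with $f(\struc{A}) = G$. As there are only finitely many bad $r$-types, forbidding them yields a $0$-profile defining $\mathcal{I}$. For each $k$ we now define a graph $0$-profile $\rho_k^\star$ by setting $\rho_k^\star(\tau') := [0,0]$ on every bad $r$-type $\tau'$, $\rho_k^\star(\tau^\star) := \rho_k(\tau) \in \mathfrak{I}_0$ for every $2$-type $\tau$ of $\classStruc{C}_{\sigma,d}$, and $\rho_k^\star(\tau') := [0,\infty)$ for all remaining (good, non-original-vertex) $r$-types. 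By construction $G \sim \rho_k^\star$ iff $G \in \mathcal{I}$ and the unique preimage $\struc{A}$ satisfies $\struc{A} \sim \rho_k$, i.e.\ $\classStruc{P}_{\rho_k^\star} = f(\classStruc{P}_{\rho_k})$. Hence $\graphProp = \bigcup_{1\le k\le m} \classStruc{P}_{\rho_k^\star}$, and Theorem~\ref{thm:subsetOfFOIsLocal} delivers GSF-locality.

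The main technical obstacle is the local-to-global reconstruction and the precise choice of $r$: one must check that the gadgets $H_k$, $L_k$, and $H_{\bot}$ are pairwise non-isomorphic as local neighbourhoods and that their overlap pattern (sharing only the endpoints $a,b \in \univ{A}$) forces any locally good graph to be a genuine image of $f$, together with verifying that the $r$-ball around an original vertex really does encode the full $2$-type of the corresponding element in $\struc{A}$. Once this is set up, the translation between $2$-types on structures and $r$-types on graphs is bookkeeping, and the $\mathfrak{I}_0$-valued intervals are preserved automatically because the translation only renames coordinates.
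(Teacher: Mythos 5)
Your proposal takes essentially the same route as the paper: it recasts $\graphProp$ as a finite union of graph $0$-profiles by pushing the structure profiles $\rho_k$ from Lemma~\ref{lem:neighbouhoodProfilOfPZigZag} through the gadget map $f$, relies on a correspondence between $2$-types of elements and $r$-types of their images in $G_{\struc{A}}$, and then invokes Theorem~\ref{thm:subsetOfFOIsLocal}. The two claims you flag as the main technical obstacle --- that this type correspondence is well-defined and injective, and that a graph all of whose $r$-balls are ``good'' is in fact in the image of $f$ --- are precisely what Claim~\ref{claim:profilOfGraphProperty} establishes (with $r=4\ell+2$) via the degree dichotomy ($\leq 4$ versus $d$) and the structure of the arrow/loop/non-arrow gadgets, so the proposal matches the paper's argument with those details still to be filled in.
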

\begin{proof}
For this we will prove that $\graphProp$ is equal to a finite union of properties defined by $0$-profiles, and then use Theorem~\ref{thm:subsetOfFOIsLocal} to prove that $\graphProp$ is GSF-local. We define the $0$-profiles for $\graphProp$ in a very similar way to the relational structure case, and then use the description of $\classStruc{P}_{\zigzag}'$ by $0$-profiles shown in Lemma~\ref{lem:neighbouhoodProfilOfPZigZag}. To this end, assume that the $4\ell+2$-types $\tau_{d,4\ell+2}^1,\dots,\tau_{d,4\ell+2}^{n_{d,4\ell+2}}$ are ordered in such a way that $(\mathcal{N}_{4\ell+2}^{f(\struc{B})}(b),b)\in \tau_{d,4\ell+2}^{k}$, for every $k\in \{1,\dots,m\}$ and $(\struc{B},b)\in \tau_{d,2,\sigma}^{k}$, where $m$ is the number of parts of the partition of $\classStruc{P}_{\zigzag}$ defined in Subsection~\ref{sec:charBy0Profiles}. For $k\in \{1,\dots,m\}$, let  $\hat{I}_{k}$ be the set of indices $i$ such that there is $\struc{A}\in \classStruc{P}_k$, and $v\in V(f(\struc{A}))$ for which $(\mathcal{N}_{4\ell+2}^{f(\struc{A})}(v),v)\in \tau_{d,4\ell+2}^i$. Let $\hat{\rho_k}:\{1,\dots,n_{d,4\ell+2}\}\rightarrow \mathfrak{I}_{0}$ be defined by
\begin{align*}
\hat{\rho_k}(i):=
\begin{cases}
[0,1] &  \text{ if }i=k, \\
[0,\infty) & \text{ if }i\in \hat{I}_{k}\setminus\{k\},\\
[0,0] & \text{ otherwise}.
\end{cases}
\end{align*}
\begin{claim}\label{claim:profilOfGraphProperty}
	It holds that $\graphProp=\bigcup_{1\leq k \leq m}\classStruc{P}_{\hat{\rho}_k}$.
\end{claim}
\begin{proof}
	First we prove $\graphProp\subseteq \bigcup_{1\leq k \leq m}\classStruc{P}_{\hat{\rho}_k}$. Assume $G\in \graphProp$ and let $\struc{A}\in \classStruc{P}_{\zigzag}'$ be a structure such that $G=f(\struc{A})$. If $\struc{A}=\struc{A}_{\emptyset}$ then clearly $G\in \bigcup_{1\leq k \leq m}\classStruc{P}_{\hat{\rho}_k}$. Hence assume $\struc{A}\not=\struc{A}_{\emptyset}$. Then $\struc{A}\in \classStruc{P}_k$ for some $k\in \{1,\dots,m\}$. By the construction of $\hat{I}_{k}$ we know that for every $v\in V(G)$ we have  $(\mathcal{N}_{4\ell+2}^G(v),v)\in \tau_{d,4\ell+2}^i$ for some $i\in \hat{I}_{k}$. Furthermore, since $\struc{A}\in \classStruc{P}_k$ there is at most  one $a\in \univ{A}$ with $(\mathcal{N}_{2}^{\struc{A}}(a),a)\in \tau_{d,2,\sigma}^{k}$. This implies directly that there can be at most one vertex $v\in V(G)$ with  $(\mathcal{N}_{4\ell+2}^G(v),v)\in \tau_{d,4\ell+2}^{k}$ and hence $G\in \classStruc{P}_{\hat{\rho}}$. \\
	
	Now we prove that $\bigcup_{1\leq k \leq m}\classStruc{P}_{\hat{\rho}_k}\subseteq \graphProp$. Let $G\in \bigcup_{1\leq k \leq m}\classStruc{P}_{\hat{\rho}_k}$ and let $k\in \{1,\dots,m\}$ be an index such that $G\in \classStruc{P}_{\hat{\rho}_k}$. 
	
	First note that every model of $\varphi_{\zigzag}$ is $d$ regular for some large $d$. Then for any $\struc{A}\models \varphi_{\zigzag}$, every vertex in $f(\struc{A})$ has either degree $\leq 4$ or degree $d$ . Since every structure in $\classStruc{P}_{\zigzag}'$ apart from the empty structure $\struc{A}_{\emptyset}$ is a model of $\varphi_{\zigzag}$, this implies that every vertex in any graph $G'\in \graphProp$ has degree $\leq 4$ or degree $d$. Since for every $i$ for which $\hat{\rho}(i)\not=[0,0]$, there is a graph $G'\in \graphProp$ and $v\in V(G')$ such that $(\mathcal{N}_{4\ell+2}^{G'}(v),v)\in \tau_{d,4\ell+2}^i$, we get that every vertex in $G$ has to have degree $\leq 4 $ or degree $d$. Using this argument further, we get that every vertex $v\in V(G)$ of degree $\leq 4$ has to be contained in the $(\ell+1)$-neighbourhood of a vertex of degree $d$, and that the $(2\ell+1)$-neighbourhood of every vertex $v\in V(G)$ of degree $d$ is the union of $k$-arrows, $k$-loops and non-arrows which are disjoint apart from their endpoints. Hence  there is a $\sigma$-structure $\struc{A}$ such that $f(\struc{A})\cong G$. Let $g$ be an isomorphism from $f(\struc{A})$ to  $G$.  
	
	Now we argue that $\struc{A}\in \classStruc{P}_{\rho_k}$. First assume that there are two elements $a,b$ with $(\mathcal{N}_2^{\struc{A}}(a),a)\in \tau_{d,2,\sigma}^{k}$ and $(\mathcal{N}_2^{\struc{A}}(b),b)\in \tau_{d,2,\sigma}^{k}$. By definition, we get that $(\mathcal{N}_{4\ell+2}^{f(\struc{A})}(a),a)\in\tau_{d,4\ell+2}^{k}$ and $(\mathcal{N}_{4\ell+2}^{f(\struc{A})}(b),b)\in\tau_{d,4\ell+2}^{k}$. Since $g$ is an isomorphism, the restriction of $g$ to $N_{4\ell+2}^{f(\struc{A})}(a)$ must be an isomorphism from $\mathcal{N}_{4\ell+2}^{f(\struc{A})}(a)$ to $\mathcal{N}_{4\ell+2}^{G}(g(a))$, and hence $(\mathcal{N}_{4\ell+2}^{G}(g(a)),g(a))\cong (\mathcal{N}_{4\ell+2}^{f(\struc{A})}(a),a)\in \tau_{d,4\ell+2}^{k}$. But the same holds for the $(4\ell+2)$-ball of $g(b)$, and hence we contradict the assumption that $G\in \classStruc{P}_{\hat{\rho}_k}$ since $\hat{\rho}_k(k)=[0,1]$.
	Let us further assume that there is an $a\in \univ{A}$ such that $(\mathcal{N}_2^{\struc{A}}(a),a)\in \tau_{d,2,\sigma}^i$ for some $i\notin I_{k}$. Let $j$ be the index such that $(\mathcal{N}_{4\ell+2}^{f(\struc{A})}(a),a)\in \tau_{d,4\ell+2}^j$. Additionally note that $a$ must have degree $d$ in $f(\struc{A})$ by construction of $f$.  As $g$ is an isomorphism,  we get that $(\mathcal{N}_{4\ell+2}^G(g(a)),g(a))\in \tau_{d,4\ell+2}^j$, and $g(a)$ has degree $d$. But then by construction of $\hat{\rho}_k$, there must be $G'\in \graphProp$, and a vertex $v\in V(G')$ of degree $d$ such that $(\mathcal{N}_{4\ell+2}^{G'}(v),v)\in \tau_{d,4\ell+2}^j$. By construction of $\graphProp$, there is a structure  $\struc{A}\in \classStruc{P}_{\zigzag}'$ such that $f(\struc{A}')=G'$. Since $v$ has degree $d$, it must be an element in $\struc{A}'$. Furthermore $(\mathcal{N}_2^{\struc{A}'}(v),v)\in \tau_{d,2,\sigma}^i$  by choice of $i$ and $j$. Hence  $\struc{A}'\notin \classStruc{P}_{\rho_k}$. But this contradicts Lemma~\ref{lem:neighbouhoodProfilOfPZigZag}. 
	
	Hence we have shown that $\struc{A}\in P_{\rho_k}$. Then by Lemma~\ref{lem:neighbouhoodProfilOfPZigZag} $\struc{A}\in \classStruc{P}_{\zigzag}'$, and by construction $G\in \graphProp$. 
\end{proof}

Since by Claim~\ref{claim:profilOfGraphProperty} we can express $\graphProp$ as a finite union of properties each defined by a $0$-profile,  Theorem~\ref{thm:subsetOfFOIsLocal} implies that $\graphProp$ is GSF-local.
\end{proof}
\subsection{Putting everything together}
Now we prove our main theorem.

\begin{proof}[Proof of Theorem~\ref{thm:existenceLocalNonTestableProperty}]
Let the property $\classStruc{P}_{\zigzag}'$ of relational structures be as defined above. 
	Note that  $\classStruc{P}_{\zigzag}'$ is not testable, as $\classStruc{P}_{\zigzag}$ is not testable \cite[Theorem 4.4]{testingFO} and $\classStruc{P}_{\zigzag}'$ only differs from $\classStruc{P}_{\zigzag}$ by the empty structure. By Lemma \ref{lem:local_reduction} and Lemma \ref{lem:localReduction}, the graph property $\graphProp$ that is locally reduced from $\classStruc{P}_{\zigzag}'$ is not testable. Lemma \ref{lemma:graphproperty_gsf_local} shows that $\graphProp$ is also a GSF-local property. Hence there exists a GSF-local property of bounded-degree graphs which is not testable. 
	Furthermore, since having a POT implies being testable, this proves that there is a GSF-local property which has no POT. By Theorem~\ref{thm:charOfPOT} this implies that not all GSF-local 
	 properties are non-propagating.
\end{proof}

\bibliographystyle{alpha}
\bibliography{GSF-Locality}
\newpage
\appendix
\begin{center}\huge\bf Appendix \end{center}

\section{Formal definitions of property testers and POTs }\label{app:preliminaries}
Now we give the formal definitions of standard property testing and proximity-oblivious testing.
\begin{definition}[(Standard) property testing]
	Let $\mathcal{P}=\cup_{n\in \mathbb{N}}\mathcal{P}_{n}$ be a property. An \emph{$\epsilon$-tester} for $\mathcal{P}_n$ is a probabilistic algorithm which, given query access to a structure $\struc{A}\in \mathcal{C}$ with $n$ vertices/elements, 
	\begin{itemize}
		\item accepts $\struc{A}$ with probability $2/3$ if $\struc{A}\in \mathcal{P}_n$.
		\item rejects $\struc{A}$ with probability $2/3$ if $\struc{A}$ is $\epsilon$-far from $\mathcal{P}_n$.
	\end{itemize} 
	We say that a property $\mathcal{P}$ is \emph{testable} if for every $n\in \mathbb{N}$ and $\epsilon\in (0,1)$, there exists an $\epsilon$-tester for $\mathcal{P}_n$ that makes at most $q=q(\epsilon,d)$ queries. We say the property $\mathcal{P}$ is testable with \emph{one-sided error} if the $\epsilon$-tester always accepts $\struc{A}$ if $\struc{A}\in \mathcal{P}$. 
\end{definition}
We introduce below the formal definition of proximity-oblivious testers. 
\begin{definition}[(One-sided error) proximity-oblivious testing]
	Let $\mathcal{P}=\cup_{n\in \mathbb{N}}\mathcal{P}_{n}$ be a property. Let $\eta:(0,1]\to (0,1]$ 
	be a monotone function. A \emph{proximity-oblivious tester (POT)} with detection probability $\eta$ for $\mathcal{P}_n$ is a probabilistic algorithm which, given query access to a structure $\struc{A}\in \mathcal{C}$ with $n$ vertices/elements,
	\begin{itemize}
		\item accepts $\struc{A}$ with probability $1$ if $\struc{A}\in \mathcal{P}_n$.
		\item rejects $\struc{A}$ with probability at least $\eta(\dist(\struc{A},\mathcal{P}_n))$ if $\struc{A}\notin\mathcal{P}_n$, where $\dist(\struc{A},\mathcal{P}_n)$ is the minimum fraction of different edges between $\struc{A}$ and any other $\struc{A}'\in\mathcal{P}_n$. 
	\end{itemize} 
	We say that a property $\mathcal{P}$ is \emph{proximity-oblivious testable} if for every $n\in \mathbb{N}$, there exists a POT for $\mathcal{P}_n$ of constant query complexity with detection probability $\eta$. 
\end{definition}

\section{The FO formula}\label{app:A}
For the construction of the formula $\varphi_{\zigzag}$ we use a recursively defined sequence $(G_m)_{m\in \mathbb{N}_{>0}}$ of edge expanders \cite[Proposition 9.2]{Hoory06expandergraphs}. Using this sequence we define the formula $\varphi_{\zigzag}$ in such a way that any model restricted to relation $F$ forms a rooted complete $D^4$-ary tree. Furthermore, the formula enforces that restricted to the vertices of level $i$ of the tree the relation $E$ encodes the rotation map  of the expander $G_i$. The formula $\varphi_{\zigzag}$ is the conjunction of the following formulas.  For a more detailed explanation and a proof of the precise form of the models of $\varphi_{\zigzag}$ see \cite{testingFO}.\\ 

We use the following formula 
\begin{displaymath}
	\varphi_{\operatorname{root}}(x):= \forall y \lnot F(y,x),
\end{displaymath}
 which expresses that vertex $x$ is a root vertex, \ie has no incoming $F$-edges. We then define the formula $\varphi_{\operatorname{tree}}$ which expresses that the structure restricted to the relation $F$ locally looks like a tree. More precisely, the formula expresses that there is precisely one root vertex, that every other vertex has one incoming $F$-edge and every vertex either has no $F$-children or has precisely $D^4$ $F$-children. We furthermore attach a $R$-self-loop to the root and $D^4$ $L$-self-loops to the leaves. This was important in \cite{testingFO} to make structures degree regular, but is of no relevance to this proof.  
\begin{align*}
\varphi_{\operatorname{tree}}&:= \exists^{=1} x \varphi_{\operatorname{root}}(x)\land \forall x \Big(\big(\varphi_{\operatorname{root}}(x)\land R(x,x)\big)\lor 
\big(\exists^{=1} y F(y,x)\land \lnot \exists y R(x,y)\land \lnot \exists y R(y,x)\big)\Big)\land
\\&\forall x \bigg(\Big[\neg\exists y F(x,y)\land \bigwedge_{k\in \indexSetH} L_k(x,x)\land \forall y \big(y\not= x \rightarrow \bigwedge_{k\in \indexSetH}\lnot L_k(x,y) \wedge \bigwedge_{k\in \indexSetH}\lnot L_k(y,x)\big)\Big]
\\&\lor \Big[\lnot\exists y \bigvee_{k\in \indexSetH}\big(L_k(x,y)\lor L_k(y,x)\big)\land 
\\&
\bigwedge_{k\in \indexSetH}\exists y_{k} \Big(x\not=y_{k}\land F_{k}(x,y_{k})\land 
(\bigwedge_{k'\in \indexSetH,k'\not=k}\lnot F_{k'}(x,y_k))\land \forall y(y\not=y_k\rightarrow \lnot F_{k}(x,y))\Big)\Big]\bigg).
\end{align*}
We define formula $\varphi_{\operatorname{rotationMap}}$ which expresses that the edge relations restricted to the relations $E$ encode a rotation map.
\begin{align*}
&\varphi_{\operatorname{rotationMap}}:= \forall x \forall y \Big(\bigwedge_{i,j\in \indexSetRotation}(E_{i,j}(x,y)\rightarrow E_{j,i}(y,x))\Big)
\land
\\&\forall x \Big(\bigwedge_{i\in \indexSetRotation}\Big(\bigvee_{j\in \indexSetRotation}\big(\exists^{=1}y E_{i,j}(x,y)\land \bigwedge_{\substack{j'\in \indexSetRotation\\ j'\not=j}}\lnot \exists y E_{i,j'}(x,y)\big)\Big)\Big)
\end{align*}
The formula $\varphi_{\operatorname{base}}$ expresses that the children of the root vertex form the basis of the recursive construction of expanders. The basis of the recursive construction is the square of some
$D$ regular graph $H$ on $D^4$ vertices with edge expansion ratio $1/4$. Explicit constructions of graphs with such properties are given in \cite{ZigZagProductIntroduction}. We assume that this graph is given by a rotation map $\rot_H$, which is an encoding of $H$.
\begin{align*}
\varphi_{\operatorname{base}}:=&\forall x  \Big(\varphi_{\operatorname{root}}(x)\rightarrow\Big[\bigwedge_{i,j\in \indexSetRotation}\Big(E_{i,j}(x,x)\land \forall y \Big(x\not= y\rightarrow \big(\lnot E_{i,j}(x,y)\land \lnot E_{i,j}(y,x)\big)\Big)\Big)\land \\& \bigwedge_{\substack{ \rot_{H^2}(k,i)=(k',i')\\k,k'\in \indexSetH\\i,i'\in \indexSetRotation }}\exists y \exists y'\big(F_k(x,y)\land F_{k'}(x,y')\land E_{i,i'}(y,y')\big)\Big]\Big)
\end{align*}
We define the formula $\varphi_{\operatorname{recursion}}$ which expresses the recursive construction of the sequence $(G_m)_{m\in \mathbb{N}_{>0}}$. This formula also depends on the base graph $H$.
\begin{align*}
&\varphi_{\operatorname{recursion}}:= \forall x \forall z\bigg[\Big(\lnot \exists y F(x,y)\land \lnot \exists y F(z,y)\Big)\lor
\bigwedge_{\substack{k_1',k_2'\in \indexSetRotation\\\ell_1',\ell_2'\in \indexSetRotation}}\Big(\exists y \big[E_{k_1',\ell_1'}(x,y)\land E_{k_2',\ell_2'}(y,z)\big]\rightarrow \\&
\bigwedge_{\substack{i,j,i',j'\in [D], k,\ell\in \indexSetH\\\rot_H(k,i)=((k_1', k_2'),i')\\ \rot_H((\ell_2', \ell_1'),j)=(\ell,j')}}\exists x'\exists z'\big[ F_k(x,x')\land F_\ell(z,z')\land
E_{(i,j),(j',i')}(x',z')\big]\Big)\bigg]
\end{align*}
\section{Deferred proofs from Section~\ref{sec:relatingNotionsOfLocality}}\label{app:C}
\begin{proof}
	For the first direction assume $\varphi$ is an FO-sentence. Then by Hanf's Theorem (Theorem~\ref{thm:HNF}) there is a sentence $\psi$ in Hanf normal form such that $\classStruc{P}_\varphi=\classStruc{P}_\psi$. 
	
	We will first convert $\psi$ into a sentence in Hanf normal form where every Hanf sentence appearing has the same locality radius. Let $r\in \mathbb{N}$ be the maximum locality radius appearing in $\psi$, and let $\varphi^{\geq m}_\tau:=\exists ^{\geq m} x \phi_{\tau}(x)$ be a Hanf sentence, where $\tau$ is an $r'$-type for some $r'\leq r$. Let $\tau_1,\dots,\tau_k$ be a list of all  $r$-types of bounded degree $d$ for which $(\mathcal{N}_{r'}^{\struc{B}}(b),b)\in \tau$ for  $(\struc{B},b)\in \tau_i$, for every $1\leq i\leq k$. Let $\Pi$ be the set of all partitions of $m$ into $k$ parts. Let 
	\begin{displaymath}
	\tilde{\varphi}^{\geq m}_\tau :=\bigvee_{(m_1,\dots,m_k)\in \Pi}\phantom{ii}\bigwedge_{i=1}^k \exists ^{\geq m_i} x \phi_{\tau_i}(x).
	\end{displaymath} 
	\begin{claim}\label{claim:increasingRadius}
		$\varphi^{\geq m}_\tau$ is $d$-equivalent to $\tilde{\varphi}^{\geq m}_\tau$.
	\end{claim}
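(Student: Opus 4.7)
The plan is to exploit the fact that on bounded-degree $d$ structures every element $a$ has a unique $r$-type, and this $r$-type uniquely determines the $r'$-type of $a$ for any $r'\leq r$ (because the $r'$-neighbourhood of $a$ is the substructure of the $r$-neighbourhood induced by the vertices at distance at most $r'$ from $a$, with the same centre). By construction, the types $\tau_1,\dots,\tau_k$ enumerate exactly those $r$-types whose restriction to radius $r'$ at the centre equals $\tau$, and they are pairwise disjoint as isomorphism classes. Hence for any $\struc{A}\in\classStruc{C}_{\sigma,d}$, writing $n_\tau(\struc{A})$ and $n_{\tau_i}(\struc{A})$ for the number of elements of $\struc{A}$ having $r'$-type $\tau$ and $r$-type $\tau_i$ respectively, the key counting identity
\[
n_\tau(\struc{A}) \;=\; \sum_{i=1}^{k} n_{\tau_i}(\struc{A})
\]
holds in every $\struc{A}\in\classStruc{C}_{\sigma,d}$.

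From here the two implications are immediate. For the direction $\struc{A}\models \varphi^{\geq m}_\tau \Rightarrow \struc{A}\models \tilde{\varphi}^{\geq m}_\tau$, I would use that $n_\tau(\struc{A})\geq m$ combined with the counting identity lets us pick nonnegative integers $m_1,\dots,m_k$ with $m_i\leq n_{\tau_i}(\struc{A})$ for each $i$ and $\sum_i m_i = m$; this tuple lies in $\Pi$ and witnesses $\struc{A}\models \bigwedge_{i=1}^k \exists^{\geq m_i} x\, \phi_{\tau_i}(x)$, hence $\struc{A}\models \tilde{\varphi}^{\geq m}_\tau$. For the converse, if some $(m_1,\dots,m_k)\in\Pi$ satisfies $\struc{A}\models \bigwedge_i \exists^{\geq m_i} x\, \phi_{\tau_i}(x)$, then $n_{\tau_i}(\struc{A})\geq m_i$ for every $i$, and the counting identity yields $n_\tau(\struc{A}) = \sum_i n_{\tau_i}(\struc{A}) \geq \sum_i m_i = m$, i.e.\ $\struc{A}\models \varphi^{\geq m}_\tau$.

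There is no substantive obstacle here; the argument is essentially a pigeonhole-style partitioning. The one point deserving care is the justification that the list $\tau_1,\dots,\tau_k$ of $r$-types refining $\tau$ is finite, which is precisely where the bounded-degree hypothesis enters (via the finiteness of $n_{d,r,\sigma}$): without it, $\tilde{\varphi}^{\geq m}_\tau$ would be an infinitary disjunction of infinitary conjunctions rather than a genuine first-order sentence, and the claimed $d$-equivalence would not even be well-posed.
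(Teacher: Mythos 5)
Your proof is correct and is essentially the same argument as the paper's: the paper chooses $m$ witnesses of $r'$-type $\tau$ and groups them by their $r$-type to form the tuple $(m_1,\dots,m_k)\in\Pi$, which is exactly your counting identity $n_\tau(\struc{A})=\sum_{i=1}^k n_{\tau_i}(\struc{A})$ applied at the level of witnesses rather than stated globally. Your extra remark that finiteness of $\{\tau_1,\dots,\tau_k\}$ (hence of $\tilde{\varphi}^{\geq m}_\tau$ as an FO sentence) depends on the bounded-degree hypothesis is accurate and worth noting, though the paper leaves it implicit.
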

	\begin{proof}
		Assume that $\struc{A}\in \classStruc{C}_{d}$ satisfies $\varphi^{\geq m}_\tau$, and assume that $a_1,\dots,a_m$ are $m$ distinct elements with $(\mathcal{N}_{r'}^{\struc{A}}(a_j),a_j)\in \tau$, for every $1\leq j\leq m$. Let $\tilde{\tau}_j$ be the $r$-type for which $(\mathcal{N}_r^{\struc{A}}(a_j),a_j)\in \tilde{\tau}_j$. By choice of $\tau_1,\dots,\tau_k$, we get that there are indices $i_1,\dots,i_m$ such that $\tilde{\tau}_j=\tau_{i_j}$. For $i\in \{1,\dots,k\}$ let $m_i=|\{j\in \{1,\dots,m\}\mid i_j=i \}|$. Hence $\struc{A}\models \bigwedge_{i=1}^k \exists ^{\geq m_i} x \phi_{\tau_i}(x)$ and since additionally $(m_1,\dots,m_k)\in\Pi$ this implies $\struc{A}\models \tilde{\varphi}^{\geq m}_\tau$.
		
		On the other hand, let $\struc{A}\in \classStruc{C}_{d}$ satisfy $\tilde{\varphi}^{\geq m}_\tau$, and let $(m_1,\dots,m_k)\in \Pi$ be a partition of $m$ such that $\struc{A}\models \bigwedge_{i=1}^k \exists ^{\geq m_i} x \phi_{\tau_i}(x)$. For every $1\leq i\leq k$, let  $a_{1}^i,\dots, a_{m_i}^i$  be $m_i$ distinct elements such that $(\mathcal{N}_r^{\struc{A}}(a_{j}^i),a_{j}^i)\in \tau_i$, for every $1\leq j\leq m_i$. By choice of $\tau_1,\dots,\tau_k$, we get that $(\mathcal{N}_{r'}^{\struc{A}}(a_{j}^i),a_{j}^i)\in \tau$, for every pair $1\leq i\leq k$, $1\leq j\leq m_i$. But since $m_1+\dots+m_k=m$ this implies that $\struc{A}\models\varphi^{\geq m}_\tau$. This proves that $\varphi^{\geq m}_\tau$ and $\tilde{\varphi}^{\geq m}_\tau$ are $d$-equivalent. 
	\end{proof}
	Let $\psi'$ be the formula in which every Hanf-sentence $\varphi^{\geq m}_\tau$ for which $\tau$ is an $r'$-type for some $r'<r$ gets replaced by $\tilde{\varphi}^{\geq m}_\tau$. By a simple inductive argument using Claim~\ref{claim:increasingRadius}, we get that $\psi$ is $d$-equivalent to $\psi'$, and hence $\classStruc{P}_\varphi=\classStruc{P}_{\psi}=\classStruc{P}_{\psi'}$. Furthermore since $\tilde{\varphi}^{\geq m}_\tau$ is a Boolean combination of Hanf-sentences for every $\varphi^{\geq m}_\tau$, and any Boolean combination of Boolean combinations is a Boolean combination itself, $\psi'$ is in Hanf normal form.  
	Furthermore, every Hanf-sentence appearing in $\psi'$ has locality radius $r$ by construction.
	
	Since any Boolean combination can be converted into disjunctive normal form (Lemma~\ref{lem:DNF}), we can assume that $\psi'$ is a disjunction of sentences $\xi$ of the form
	\begin{displaymath}
	\xi=\bigwedge_{j=1}^k \exists ^{\geq m_j} x \phi_{\tau_j}(x)\land \bigwedge_{j=k+1}^\ell \lnot \exists ^{\geq m_j} x \phi_{\tau_j}(x),
	\end{displaymath}
	where $\ell\in \mathbb{N}_{\geq 1}$, $1\leq k \leq \ell$, $m_i\in \mathbb{N}_{\geq 1}$ and $\tau_i$ is an $r$-type for every $1\leq i\leq \ell$. We can further assume that every sentence in the disjunction $\psi'$ is satisfiable by some $\struc{A}\in \classStruc{C}_{d}$, as any sentence with no  bounded degree $d$ model can be removed from $\psi'$.
	
	Let $\tilde{\tau}_1,\dots,\tilde{\tau}_t$ be a list of all $r$-types of bounded degree $d$ in the order we fixed.
	Let $k_i:=\max(\{m_j \mid 1\leq j\leq k, \tau_j=\tilde{\tau}_i \}\cup\{0\})$ and $\ell_i:=\min(\{m_j \mid k+1\leq j\leq \ell, \tau_j=\tilde{\tau}_i \}\cup\{\infty\})$ for every $i\in \{1,\dots,t\}$. Since $\xi$ has at least one bounded degree model $k_i\leq \ell_i$ for every $i\in \{1,\dots,t\}$.
	Let $\rho: \{1,\dots,t\}\rightarrow \mathfrak{I}$ be the neighbourhood profile defined by
	$\rho(i):=[k_i,\ell_i]$ if $\ell_i<\infty$ and $\rho(i):=[k_i,\ell_i)$ otherwise. Then by construction, we get that $\classStruc{P}_\rho=\classStruc{P}_\xi$. Since $\psi'$ is a disjunction of formulas, each of which defines a property which can be defined by some neighbourhood profile, we get that $\classStruc{P}_{\psi'}$ must be a finite  union of properties defined by some neighbourhood profile. \\
	
	On the other hand, for every $r$-neighbourhood profile $\rho$ of degree $d$, $\tau_1,\dots,\tau_t$ a list of all $r$-types of bounded degree $d$ in the order fixed  and the formula
	\begin{displaymath}
	\varphi_\rho:=\bigwedge_{i\in \{1,\dots,t\},\atop{\rho(i)=[k_i,\ell_i]}}\Big(\exists ^{\geq k_i} x \phi_{\tau_i}(x)\land \lnot \exists ^{\geq \ell_i+1} x \phi_{\tau_i}(x)\Big)\land \bigwedge_{i\in \{1,\dots,t\},\atop{\rho(i)=[k_i,\infty)}}\exists ^{\geq k_i} x \phi_{\tau_i}(x)
	\end{displaymath}
	it clearly holds that $\classStruc{P}_\rho=\classStruc{P}_{\varphi_\rho}$. Hence every finite union of properties defined by neighbourhood profiles can be defined by the disjunction of the formulas $\varphi_\rho$ of all $\rho$ in the finite union.
\end{proof}
\section{Deferred proofs from Section~\ref{sec:proofMainThm}}\label{app:B}

\begin{proof}[Proof of Claim~\ref{claim:closedUnderUnion}]
	Let $\struc{A},\struc{A}'\in \classStruc{C}_{\sigma,d}$ such that $\struc{A}\models \tilde{\varphi}_{\zigzag}$ and $\struc{A}'\models\tilde{\varphi}_{\zigzag}$, where $\tilde{\varphi}_{\zigzag}$ was the formula obtained from $\varphi_{\zigzag}$ by removing the subformula $\exists^{=1} x \varphi_{\operatorname{root}}(x)$. Our aim is to prove that $\struc{A}\cup \struc{A}'\models \tilde{\varphi}_{\zigzag}$ where $\struc{A}\cup \struc{A}'$ denotes the disjoint union of $\struc{A}$ and $\struc{A}'$. For this we essentially prove that for any two elements $a\in \univ{A}$ and $b\in \univ{A'}$ the formula $\tilde{\varphi}_{\zigzag}$ does not require a tuple containing $a$ and $b$. \\
	
	Let us define formulas
	\begin{align*}
	&\varphi:= \forall x \Big(\big(\varphi_{\operatorname{root}}(x)\land R(x,x)\big)\lor  
	\big(\exists^{=1} y F(y,x)\land \lnot \exists y R(x,y)\land \lnot \exists y R(y,x)\big)\Big),
	\\&\psi(x):= \neg\exists y F(x,y)\land \bigwedge_{k\in \indexSetH} L_k(x,x)\land \forall y \Big(y\not= x \rightarrow \bigwedge_{k\in \indexSetH}\lnot L_k(x,y) \wedge \bigwedge_{k\in \indexSetH}\lnot L_k(y,x)\Big) \text{ and}
	\\& \chi(x):=\lnot\exists y \bigvee_{k\in \indexSetH}\big(L_k(x,y)\lor L_k(y,x)\big)\land 
	\\&
	\bigwedge_{k\in \indexSetH}\exists y_{k} \Big(x\not=y_{k}\land F_{k}(x,y_{k})\land 
	(\bigwedge_{k'\in \indexSetH,k'\not=k}\lnot F_{k'}(x,y_k))\land \forall y(y\not=y_k\rightarrow \lnot F_{k}(x,y))\Big).
	\end{align*}
	Then $\tilde{\varphi}_{\zigzag}:=  \varphi \land \forall x (\psi(x)\lor \chi(x))\land \varphi_{\operatorname{rotationMap}}\land \varphi_{\operatorname{base}}\land\varphi_{\operatorname{recursion}}$. Hence it is  sufficient to prove that  $\struc{A}\cup\struc{A}'\models \varphi$, $\struc{A}\cup\struc{A}'\models\forall x (\psi(x)\lor \chi(x))$, $\struc{A}\cup\struc{A}'\models\varphi_{\operatorname{rotationMap}}$, $\struc{A}\cup\struc{A}'\models\varphi_{\operatorname{base}}$ and $\struc{A}\cup\struc{A}'\models\varphi_{\operatorname{recursion}}$.\\
	
	We first argue that $\struc{A}\cup\struc{A}'\models \varphi$. Let $a\in \univ{A\cup A'}$ be arbitrary and assume without loss of generality that $a\in \univ{A}$. Assume that $\struc{A}\cup\struc{A}'\not\models \varphi_{\operatorname{root}}(a)\land R(a,a)$. Since  $\varphi_{\operatorname{root}}(x):= \forall y \lnot F(y,x)$ this implies that $\struc{A}\not\models \varphi_{\operatorname{root}}(a)\land R(a,a)$. Since $\struc{A}\models \varphi$ we get that $\struc{A}\models \exists^{=1} y F(y,a)\land \lnot \exists y R(a,y)\land \lnot \exists y R(y,a)$. Hence there is an element $b\in \univ{A}$ such that $(b,a)\in \rel{F}{\struc{A}}$. Furthermore, for every $b'\in \univ{A}$, $b'\not=b$ we have $(b',a)\notin \rel{F}{\struc{A}}$, $(a,b')\notin \rel{R}{\struc{A}}$ and $(b',a)\notin\rel{R}{\struc{A}}$. But because  $a$ cannot be in a tuple with any element in $\univ{A'}$ we get that $\struc{A}\cup \struc{A}'\models \exists^{=1} y F(y,a)\land \lnot \exists y R(a,y)\land \lnot \exists y R(y,a)$. Hence $\struc{A}\cup\struc{A}'\models \varphi$.  \\
	
	Next we prove that $\struc{A}\cup\struc{A}'\models\forall x (\psi(x)\lor \chi(x))$. Let $a\in \univ{A\cup A'}$ be arbitrary and assume without loss of generality that $a\in \univ{A}$. First assume that $(a,b)\notin \rel{F}{\struc{A}\cup\struc{A}'}$ for every $b\in \univ{A\cup A'}$. Since $\struc{A}$ is a substructure of $\struc{A}\cup \struc{A}'$ this means that $\struc{A}\models \neg\exists y F(a,y)$. But then $\struc{A}\not\models \bigwedge_{k\in \indexSetH}\exists y_{k} \Big(a\not=y_{k}\land F_{k}(a,y_{k})\Big)$ which implies $\struc{A}\not\models\chi(a)$. Since $\struc{A}\models \forall x (\psi(x)\lor \chi(x))$ this implies that $\struc{A}\models \psi(a)$. Hence for every $k\in \indexSetH$ we have $(a,a)\in \rel{L_k}{\struc{A}}$ and for every $b\in \univ{A}$, $b\not=a$ we have $(a,b),(b,a)\notin\rel{F_k}{\struc{A}}$. Since there are no tuples containing both elements from $\struc{A}$ and $\struc{A}'$ this directly implies that $\struc{A}\cup\struc{A}'\models \psi(a)$. 
	
	On the other hand, assume that there is $b\in \univ{A\cup A'}$ such that $(a,b)\in \rel{F}{\struc{A}\cup\struc{A}'}$. Since we are considering the disjoint union of $\struc{A}$ and $\struc{A}' $ this implies that $b$ must be an element from $\struc{A}$. Hence $\struc{A}\not\models\psi(a)$. Since $\struc{A}\models \forall x (\psi(x)\lor \chi(x))$ this implies that $\struc{A}\models \chi(a)$. Then for every $k\in \indexSetH$ there is an element $b\in \univ{A}$ such that $(a,b)\in \rel{F_k}{\struc{A}}$, $(a,b)\notin\rel{F_{k'}}{\struc{A}}$ for every $k'\in \indexSetH$, $k'\not=k$ and $(a,b')\notin\rel{F_k}{\struc{A}}$ for every $b'\in \univ{A}$, $b'\not=b$. But since in $\struc{A}\cup\struc{A}'$ there are no tuples containing both elements from $\struc{A}$ and $\struc{A}'$ this implies that $\struc{A}\cup\struc{A}'\models \chi(a)$. In conclusion we proved that $\struc{A}\cup\struc{A'}\models \forall x (\psi(x)\lor \chi(x))$.\\
	
	We now prove $\struc{A}\cup \struc{A}'\models\varphi_{\operatorname{rotationMap}}$. Hence assume $a,b\in \univ{A\cup A'}$ are arbitrary elements. First consider the case that $a,b$ are either both from $\univ{A}$ or both from $\univ{A'}$.  Then if for some $i,j\in \indexSetRotation$ we have that $(a,b)\in \rel{E_{i,j}}{\struc{A}\cup \struc{A}'}$ then $(b,a)\in \rel{E_{j,i}}{\struc{A}\cup \struc{A}'}$ because $\struc{A}\models \varphi_{\operatorname{rotationMap}}$ and $\struc{A}'\models \varphi_{\operatorname{rotationMap}}$. Now consider the case that $|\{a,b\}\cap \univ{A}|=1$. Then $(a,b)\notin \rel{E_{i,j}}{\struc{A}\cup \struc{A}'}$ and $(b,a)\notin \rel{E_{j,i}}{\struc{A}\cup \struc{A}'}$ and hence $\struc{A}\cup \struc{A'}\models \bigwedge_{i,j\in \indexSetRotation}(E_{i,j}(a,b)\rightarrow E_{j,i}(b,a))$. Therefore $\struc{A}\cup \struc{A}'\models  \forall x \forall y \Big(\bigwedge_{i,j\in \indexSetRotation}(E_{i,j}(x,y)\rightarrow E_{j,i}(y,x))\Big)$.
	
	Now consider an arbitrary element $a\in \univ{A\cup A'}$ and any $i\in \indexSetRotation$. Without loss of generality assume $a\in \univ{A}$. Since $\struc{A}\models \varphi_{\operatorname{rotationMap}}$ there must be an index $j\in \indexSetRotation$ and an element $b\in \univ{A}$ such that $(a,b)\in \rel{E_{i,j}}{\struc{A}}$. Furthermore, for every $b'\in \univ{A}$, $b'\not=b$ we have $(a,b')\notin \rel{E_{i,j}}{\struc{A}}$ and for every $j'\in \indexSetRotation$, $j'\not=j$ and every $\tilde{b}\in \univ{A}$ we have $(a,\tilde{b})\notin \rel{E_{i,j'}}{\struc{A}}$. But since $a\in \univ{A}$ it also holds that $(a,b')\notin \rel{E_{i,j'}}{\struc{A}}$ for every $b'\in\univ{A'}$ and every $j'\in \indexSetRotation$. Hence $\struc{A}\cup\struc{A'}\models \bigvee_{j\in \indexSetRotation}\big(\exists^{=1}y E_{i,j}(a,y)\land \bigwedge_{\substack{j'\in \indexSetRotation\\ j'\not=j}}\lnot \exists y E_{i,j'}(a,y)\big)$. This concludes the proof of $\struc{A}\cup \struc{A}'\models\varphi_{\operatorname{rotationMap}}$.\\
	
	We now prove $\struc{A}\cup \struc{A}'\models\varphi_{\operatorname{base}}$. Assume $a\in \univ{A\cup A'}$ is an arbitrary element such that $\struc{A}\cup \struc{A}'\models \varphi_{\operatorname{root}}(a)$. Without loss of generality assume $a\in \univ{A}$. Since $\varphi_{\operatorname{root}}(x):= \forall y \lnot F(y,x)$ and $\struc{A}\cup \struc{A}'\models \varphi_{\operatorname{root}}(a)$  we get that $\struc{A}\models \varphi_{\operatorname{root}}(a)$. Since $\struc{A}\models \varphi_{\operatorname{base}}$ this means that for every $i,j\in \indexSetRotation $ we have $(a,a)\in \rel{E_{i,j}}{\struc{A}}$ and $(a,b),(b,a)\notin \rel{E_{i,j}}{\struc{A}}$ for every $b\in \univ{A}$, $b\not=a$. Since further $(a,b),(b,a)\notin   \rel{E_{i,j}}{\struc{A}\cup\struc{A}'}$ for every $b\in \univ{A'}$ this implies that $\struc{A}\cup \struc{A}'\models \bigwedge_{i,j\in \indexSetRotation}\Big(E_{i,j}(a,a)\land \forall y \Big(a\not= y\rightarrow \big(\lnot E_{i,j}(a,y)\land \lnot E_{i,j}(y,a)\big)\Big)\Big)$. Furthermore, since $\struc{A}\models \varphi_{\operatorname{base}}$  and $\struc{A}\models \varphi_{\operatorname{root}}(a)$ for every $ k,k'\in \indexSetH, i,i'\in \indexSetRotation $ for which $\rot_{H^2}(k,i)=(k',i')$ there are $b,b'\in \univ{A}$ such that $(a,b)\in \rel{F_k}{\struc{A}}$, $(a,b')\in \rel{F_{k'}}{\struc{A}}$ and $(b,b')\in \rel{E_{i,i'}}{\struc{A}}$. Since $\struc{A}$ is a substructure of $\struc{A}\cup\struc{A}'$ this proves that $\struc{A}\cup\struc{A}\models \varphi_{\operatorname{base}}$.\\
	
	Finally we prove $\struc{A}\cup \struc{A}'\models \varphi_{\operatorname{recursion}}$. Hence assume $a,c\in \univ{A\cup A'}$ are arbitrary elements. Assume $\struc{A}\cup\struc{A}'\not\models \lnot \exists y F(a,y)\land \lnot \exists y F(c,y)$ and assume without loss of generality that there is $\tilde{a}\in \univ{A\cup A'}$ such that $(a,\tilde{a})\in \rel{F}{\struc{A}\cup\struc{A'}}$. Since there are no tuples containing both elements from $\struc{A}$ and $\struc{A}'$ we get that $a,\tilde{a}$ are from the same structure. Without loss of generality assume $a,\tilde{a}\in \univ{A}$. Assume that for indices $k_1',k_2'\in \indexSetRotation, \ell_1',\ell_2'\in \indexSetRotation$ and some element $b\in \univ{A\cup A'}$ we have $(a,b)\in \rel{E_{k_1',\ell_1'}}{\struc{A}\cup\struc{A}'}$ and $(b,c)\in  \rel{E_{k_2',\ell_2'}}{\struc{A}\cup\struc{A}'}$. As $b$ also has to be in $\univ{A}$ and  $\struc{A}\models \varphi_{\operatorname{recursion}}$ this implies that for every $i,j,i',j'\in [D], k,\ell\in \indexSetH$ for which $\rot_H(k,i)=((k_1', k_2'),i')$ and $\rot_H((\ell_2', \ell_1'),j)=(\ell,j')$ there are elements $a',c'\in \univ{A\cup A'}$ such that $(a,a')\in \rel{F_k}{\struc{A}\cup\struc{A}'}$, $(c,c')\in \rel{F_\ell}{\struc{A}\cup\struc{A}'}$ and $(a',c')\in \rel{E_{(i,j),(j',i')}}{\struc{A}\cup\struc{A}'}$. Hence $\struc{A}\cup \struc{A}'\models \varphi_{\operatorname{recursion}}$. 
\end{proof}
\begin{claim}\label{claim:satisfyingTree}
	Every structure $\struc{A}\in \bigcup_{1\leq k \leq m}\classStruc{P}_{\rho_k}\setminus \{\struc{A}_{\emptyset}\}$ satisfies $\varphi'_{\operatorname{tree}}$.
\end{claim}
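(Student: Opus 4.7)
The plan is to argue that $\varphi'_{\operatorname{tree}}$ is $2$-local on each element and that its single counting constraint $\exists^{\leq 1}x\,\varphi_{\operatorname{root}}(x)$ is enforced directly by the profile $\rho_k$. This mirrors the structure of Claim~\ref{claim:satisfyingRecursion} but with an extra counting argument for the root.

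First, I would decompose $\varphi'_{\operatorname{tree}}$ into its top-level conjuncts: the counting clause $\exists^{\leq 1}x\,\varphi_{\operatorname{root}}(x)$, and two clauses of the form $\forall x\,\psi_i(x)$. A quick inspection of $\psi_1$ and $\psi_2$ (as well as $\varphi_{\operatorname{root}}(y) := \forall z\,\lnot F(z,y)$ wherever it appears inside them) shows that satisfaction of $\psi_i(a)$ in any $\struc{A} \in \classStruc{C}_{\sigma,d}$ depends only on $(\mathcal{N}_2^{\struc{A}}(a),a)$: all subformulas only quantify over elements at distance at most $2$ from $x$. Hence for each $a\in \univ{A}$ with $\struc{A}\in \classStruc{P}_{\rho_k}\setminus\{\struc{A}_{\emptyset}\}$, the profile guarantees some $i\in I_k$ with $(\mathcal{N}_2^{\struc{A}}(a),a)\in \tau_{d,2,\sigma}^i$, so by construction of $I_k$ there exist $\other{\struc{A}}\in \classStruc{P}_k\subseteq \classStruc{P}_{\zigzag}$ and $\other{a}\in \univ{\other{A}}$ with $(\mathcal{N}_2^{\other{\struc{A}}}(\other{a}),\other{a})\in \tau_{d,2,\sigma}^i$. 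Fixing an isomorphism between these two $2$-balls and using that $\other{\struc{A}}\models \varphi_{\operatorname{tree}}$ implies $\other{\struc{A}}\models \psi_i(\other{a})$, the isomorphism transfers satisfaction back to $a$. This establishes $\struc{A}\models \forall x\,\psi_i(x)$ for $i=1,2$.

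It remains to verify $\struc{A}\models \exists^{\leq 1}x\,\varphi_{\operatorname{root}}(x)$. Since $\varphi_{\operatorname{root}}(a)$ is a $1$-local statement in $a$, whether $a$ is a root is determined by its $2$-type. By our ordering of $2$-types, the set of indices of root types is precisely $\{1,\dots,m\}$, and by definition $\struc{A}\in \classStruc{P}_k$ puts all root elements of structures in $\classStruc{P}_k$ into the single type $\tau_{d,2,\sigma}^k$ (because each model of $\varphi_{\operatorname{tree}}$ has a unique root). Consequently $I_k\cap\{1,\dots,m\}=\{k\}$: for any $k'\in\{1,\dots,m\}\setminus\{k\}$ the profile value is $\rho_k(k')=[0,0]$. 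Combined with $\rho_k(k)=[0,1]$, this immediately yields that at most one element of $\struc{A}$ has type $\tau_{d,2,\sigma}^k$, and no element has type $\tau_{d,2,\sigma}^{k'}$ for $k'\in \{1,\dots,m\}\setminus\{k\}$. Hence $\struc{A}$ contains at most one root.

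The main subtlety, and the only step beyond a direct transfer-via-isomorphism, is the last paragraph: one has to observe that the partition of $\classStruc{P}_{\zigzag}$ into $\classStruc{P}_1,\dots,\classStruc{P}_m$ together with the uniqueness of the root in $\varphi_{\zigzag}$ forces $I_k\cap \{1,\dots,m\}=\{k\}$, so that the profile's restriction $\rho_k(k)=[0,1]$ actually captures the global counting constraint $\exists^{\leq 1}x\,\varphi_{\operatorname{root}}(x)$ rather than only the local one. Once this is in place, combining the three conjuncts shows $\struc{A}\models \varphi'_{\operatorname{tree}}$, completing the proof.
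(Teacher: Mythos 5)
Your proof is correct and follows essentially the same approach as the paper's: decompose $\varphi'_{\operatorname{tree}}$ into the counting clause and the universally-quantified $1$-local conjuncts, transfer satisfaction of the latter through $2$-ball isomorphisms to a witness structure in $\classStruc{P}_k$, and derive the former from the $[0,1]$ bound that $\rho_k$ places on type $\tau_{d,2,\sigma}^k$. Your explicit verification that $I_k\cap\{1,\dots,m\}=\{k\}$ (via uniqueness of the root in models of $\varphi_{\zigzag}$) usefully unpacks a step the paper compresses into ``by construction of $\rho_k$,'' while you are correspondingly terser about the isomorphism transfers, which the paper spells out conjunct by conjunct.
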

\begin{proof}
	Let $\struc{A}\in \bigcup_{1\leq k \leq m}\classStruc{P}_{\rho_k}\setminus \{\struc{A}_{\emptyset}\}$. Then there is $k\in \{1,\dots,m\}$ such that $\struc{A}\in \classStruc{P}_{\rho_k}$.

	By definition, $\varphi'_{\operatorname{tree}}:= \exists^{\leq 1} x \varphi_{\operatorname{root}}(x)\land \varphi \land \forall x (\psi(x)\lor \chi(x))$,  where
	\begin{align*}
	&\varphi:= \forall x \Big(\big(\varphi_{\operatorname{root}}(x)\land R(x,x)\big)\lor  
	\big(\exists^{=1} y F(y,x)\land \lnot \exists y R(x,y)\land \lnot \exists y R(y,x)\big)\Big),
	\\&\psi(x):= \neg\exists y F(x,y)\land \bigwedge_{k\in \indexSetH} L_k(x,x)\land \forall y \Big(y\not= x \rightarrow \bigwedge_{k\in \indexSetH}\lnot L_k(x,y) \wedge \bigwedge_{k\in \indexSetH}\lnot L_k(y,x)\Big) \text{ and}
	\\& \chi(x):=\lnot\exists y \bigvee_{k\in \indexSetH}\big(L_k(x,y)\lor L_k(y,x)\big)\land 
	\\&
	\bigwedge_{k\in \indexSetH}\exists y_{k} \Big(x\not=y_{k}\land F_{k}(x,y_{k})\land 
	(\bigwedge_{k'\in \indexSetH,k'\not=k}\lnot F_{k'}(x,y_k))\land \forall y(y\not=y_k\rightarrow \lnot F_{k}(x,y))\Big).
	\end{align*}
	Thus, it is sufficient to prove that $\struc{A}\models \exists^{\leq 1} x \varphi_{\operatorname{root}}(x)$, $\struc{A}\models \varphi$ and $\struc{A}\models\forall x (\psi(x)\lor \chi(x))$. 
	
	To prove $\struc{A}\models \exists^{\leq 1} x \varphi_{\operatorname{root}}(x)$ we note that by construction of $\rho_k$ we have $\struc{A}\not\models \varphi_{\operatorname{root}}(a)$ 
	for any $a\in \univ{A}$ for which $(\mathcal{N}_2^{\struc{A}}(a),a)\notin \tau_{d,2,\sigma}^k$. Since $\rho_k$ restricts the number of occurrences of elements of neighbourhood type $\tau_{d,2,\sigma}^{k}$ to at most one, this proves that there is at most one $a\in \univ{A}$ with $\struc{A}\models \varphi_{\operatorname{tree}}(a)$ and hence $\struc{A}\models \exists^{\leq 1} x \varphi_{\operatorname{root}}(x)$. 
	
	To prove $\struc{A}\models \varphi$, let $a\in \univ{A}$ be an arbitrary element. 
	Since $\struc{A}\in \classStruc{P}_{\rho_k}$, there is an $i\in I_{k}$ such that $(\mathcal{N}_2^{\struc{A}}(a),a)\in \tau_{d,2,\sigma}^i$. 
	But then by definition, there exist $\other{\struc{A}}\models \varphi_{\zigzag}$ and $\other{a}\in \univ{\other{A}}$ such that $(\mathcal{N}_2^{\struc{A}}(a),a)\cong (\mathcal{N}_2^{\other{\struc{A}}}(\other{a}),\other{a})$. 
	Assume $f$ is an isomorphism from  $(\mathcal{N}_2^{\struc{A}}(a),a)$ to  $(\mathcal{N}_2^{\other{\struc{A}}}(\other{a}),\other{a})$. First consider the case that $\struc{A}\models \varphi_{\operatorname{root}}(a):=\forall y \lnot F(y,a)$. Assume there  is $\other{b}\in \univ{\other{A}}$ such that $(\other{b},\other{a})\in \rel{F}{\other{\struc{A}}}$. Since $\other{b}\in N_2^{\other{\struc{A}}}(\other{a})$, there must be an element $b\in N_2^{\struc{A}}(a)$ such that $f(b)=\other{b}$. Since $f$ is an isomorphism mapping $a$ to $\other{a}$, this implies $(b,a)\in \rel{F}{\struc{A}}$, which contradicts $\struc{A}\models \varphi_{\operatorname{root}}(a)$. 
	Hence $\other{\struc{A}}\models \varphi_{\operatorname{root}}(\other{a})$. Since $\other{\struc{A}}\models \varphi_{\operatorname{tree}}'$, it holds that $\other{\struc{A}}\models \varphi$, which means that $(\other{a},\other{a})\in \rel{R}{\other{\struc{A}}}$. But since $f$ is an isomorphism mapping $a$ onto $\other{a}$, this implies $(a,a)\in \rel{R}{\struc{A}}$. Now consider the case that  
	$\struc{A}\not\models \varphi_{\operatorname{root}}(a)$. Then there is $b\in \univ{A}$ with $(b,a)\in \rel{F}{\struc{A}}$. Since $f$ is an isomorphism, this implies $(f(b),\other{a})\in \rel{F}{\other{\struc{A}}}$. Hence $\other{\struc{A}}\models \exists^{=1} y F(y,\other{a})\land \lnot \exists y R(\other{a},y)\land \lnot \exists y R(y,\other{a})$, as $\other{\struc{A}}\models \varphi$. Now assume that  there is $b'\not=b$ such that $(b',a)\in \rel{F}{\struc{A}}$. Then $f(b)\not=f(b')$ and $(f(b),\other{a}),(f(b'),\other{a})\in \rel{F}{\other{\struc{A}}}$. Since this contradicts $\other{\struc{A}}\models \exists^{=1} y F(y,\other{a})$ we have $\struc{A}\models \exists^{=1} y F(y,a)$. Furthermore, assume that there is $b'\in \univ{A}$ such that either $(a,b')\in \rel{R}{\struc{A}}$ or $(b',a)\in \rel{R}{\struc{A}}$. Then either $(\other{a},f(b'))\in \rel{R}{\other{\struc{A}'}}$ or $(f(b'),\other{a})\in \rel{R}{\other{\struc{A}}}$, which contradicts $\other{\struc{A}}\models \lnot \exists R(\other{a},y)\land \lnot \exists y R(y,\other{a})$. Therefore $\struc{A}\models \lnot \exists R(a,y)\land \lnot \exists y R(y,a)$ which completes the proof of $\struc{A}\models\varphi$.  
	
	We prove $\struc{A}\models \forall x (\psi(x)\lor \chi(x))$ by considering the two cases $\struc{A}\models \neg\exists y F(a,y)$ and $\struc{A}\models \exists yF(a,y)$ for each element $a\in \univ{A}$. For this, let $a\in \univ{A}$ be any element. By the construction of $\rho_k$ there is $\other{\struc{A}}\models \varphi_{\zigzag}$ and $\other{a}\in \univ{\other{A}}$ such that $(\mathcal{N}_2^{\struc{A}}(a),a)\cong (\mathcal{N}_2^{\other{\struc{A}}}(\other{a}),\other{a})$. Let $f$ be an isomorphism from  $(\mathcal{N}_2^{\struc{A}}(a),a)$ to $(\mathcal{N}_2^{\other{\struc{A}}}(\other{a}),\other{a})$. First consider the case that $\struc{A}\models \neg\exists y F(a,y)$. If there was $\other{b}\in \univ{\other{A}}$ with $(\other{a},\other{b})\in \rel{F}{\other{\struc{A}}}$ then $(a,f^{-1}(\other{b}))\in \rel{F}{\struc{A}}$ contradicting our assumption. Hence $\other{\struc{A}}\models \neg\exists y F(\other{a},y)$  which implies that $\other{\struc{A}}\not\models \chi(\other{a})$. But since $\other{\struc{A}}\models \varphi_{\zigzag}$, it holds that $\other{\struc{A}}\models \forall x (\psi(x)\lor \chi(x))$, which implies that $\other{\struc{A}}\models \psi(\other{a})$. Hence $(\other{a},\other{a})\in \rel{L_k}{\other{\struc{A}}}$ for every $k\in \indexSetH$. Since $f$ is an isomorphism and $f(a)=\other{a}$, it holds that $(a,a)\in \rel{L_k}{\struc{A}}$ for every $k\in \indexSetH$, and hence $\struc{A}\models \bigwedge_{k\in \indexSetH} L_k(a,a)$. Furthermore, assume that there is $b\in \univ{A}$, $b\not=a$ and $k\in \indexSetH$ such that either $(a,b)\in \rel{L_k}{\struc{A}}$ or $(b,a)\in \rel{L_k}{\struc{A}}$. Since $f$ is an isomorphism this implies that either $(\other{a},f(b))\in \rel{L_k}{\other{\struc{A}}}$ or $(f(b),\other{a})\in \rel{L_k}{\other{\struc{A}}}$ which contradicts $\other{\struc{A}}\models \chi(\other{a})$. Hence $\struc{A}\models\forall y \Big(y\not= a \rightarrow \bigwedge_{k\in \indexSetH}\lnot L_k(a,y) \wedge \bigwedge_{k\in \indexSetH}\lnot L_k(y,a)\Big)$ proving that $\struc{A}\models \psi(a)$. 
	
	 Now consider the case that there is an element $b\in \univ{A}$ such that $(a,b)\in \rel{F}{\struc{A}}$. Since this implies that $(\other{a},f(b))\in \rel{F}{\other{\struc{A}}}$, we get that $\other{\struc{A}}\not\models \psi(\other{a})$, and hence $\other{\struc{A}}\models \chi(\other{a})$. Now assume that there is  $b\in \univ{A}$ and $k\in \indexSetH$ such that either $(a,b)\in \rel{L_k}{\struc{A}}$ or $(b,a)\in \rel{L_k}{\struc{A}}$. But then either $(\other{a},f(b))\in \rel{L_k}{\other{\struc{A}}}$ or $(f(b),\other{a})\in \rel{L_k}{\other{\struc{A}}}$, which contradicts $\other{\struc{A}}\models \chi(\other{a})$. Hence $\struc{A}\models  \lnot\exists y \bigvee_{k\in \indexSetH}\big(L_k(a,y)\lor L_k(y,a)\big)$. For each $k\in \indexSetH$, let $\other{b}_k\in \univ{\other{A}}$ be an element such that $\other{\struc{A}}\models \other{a}\not=\other{b}_{k}\land F_{k}(\other{a},\other{b}_{k})\land 
	(\bigwedge_{k'\in \indexSetH,k'\not=k}\lnot F_{k'}(\other{a},\other{b}_k))\land \forall y(y\not=\other{b}_k\rightarrow \lnot F_{k}(\other{a},y))$. Since $f$ is an isomorphism,  this implies that $a\not= b_k:=f^{-1}(\other{b}_k)$, $(a,b_k)\in \rel{F_k}{\struc{A}}$ and $(a,b_k)\notin \rel{F_{k'}}{\struc{A}}$, for each $k'\in \indexSetH,k'\not=k$. Furthermore, assume there is  $b\in \univ{A}$, $b\not=b_k$ such that $(a,b)\in \rel{F_k}{\struc{A}}$. Since $f$ is an isomorphism, this implies $f(b)\not=f(b_k)=\other{b}_k$ and $(\other{a},\other{b})\in \rel{F_k}{\other{\struc{A}}}$, which contradicts $\other{\struc{A}}\models \forall y(y\not=\other{b}_k\rightarrow \lnot F_{k}(\other{a},y))$. Hence $\struc{A}\models \forall y(y\not=b_k\rightarrow \lnot F_{k}(a,y))$ and therefore concluding that $\struc{A}\models \chi(a)$. This proves that in either case $\struc{A}\models \psi(a)\lor\chi(a)$ and therefore $\struc{A}\models \forall x(\psi(x)\lor \chi(x))$. 
\end{proof}
\begin{claim}\label{claim:satisfyingRotationMap}
	Every structure $\struc{A}\in \bigcup_{1\leq k \leq m}\classStruc{P}_{\rho_k}\setminus \{\struc
	{A}_{\emptyset}\}$ satisfies $\varphi_{\operatorname{rotationMap}}$.
\end{claim}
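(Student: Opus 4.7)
The plan is to adapt verbatim the transfer-of-satisfaction technique executed for $\varphi'_{\operatorname{tree}}$ in Claim~\ref{claim:satisfyingTree}. Fix any $\struc{A}\in \bigcup_{1\leq k\leq m}\classStruc{P}_{\rho_k}\setminus\{\struc{A}_{\emptyset}\}$, so $\struc{A}\in\classStruc{P}_{\rho_k}$ for some $k$. For each $a\in\univ{A}$, the definition of $\rho_k$ guarantees a structure $\other{\struc{A}}\models\varphi_{\zigzag}$ and $\other{a}\in\univ{\other{A}}$ with $(\mathcal{N}_2^{\struc{A}}(a),a)\cong (\mathcal{N}_2^{\other{\struc{A}}}(\other{a}),\other{a})$, realised by some isomorphism $f$. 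The crucial observation that makes the transport possible is that every subformula of $\varphi_{\operatorname{rotationMap}}$ constrains only elements linked to the currently quantified element by an atom $E_{i,j}(x,y)$, so all relevant witnesses live in $N_1^{\struc{A}}(a)\subseteq N_2^{\struc{A}}(a)$, squarely inside the domain of $f$.

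First I would handle the symmetry conjunct $\forall x\forall y\bigwedge_{i,j\in\indexSetRotation}\bigl(E_{i,j}(x,y)\to E_{j,i}(y,x)\bigr)$. Take $a,b\in\univ{A}$ and $i,j\in\indexSetRotation$ with $(a,b)\in\rel{E_{i,j}}{\struc{A}}$. Then $b\in N_1^{\struc{A}}(a)$, so $f(b)$ is defined and $(\other{a},f(b))\in\rel{E_{i,j}}{\other{\struc{A}}}$. Since $\other{\struc{A}}\models\varphi_{\operatorname{rotationMap}}$, this gives $(f(b),\other{a})\in\rel{E_{j,i}}{\other{\struc{A}}}$, and applying $f^{-1}$ (which is an isomorphism of the induced $2$-neighbourhoods mapping $\other{a}$ to $a$) yields $(b,a)\in\rel{E_{j,i}}{\struc{A}}$ as required.

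Next I would treat the rotation-map conjunct $\forall x\bigwedge_{i\in\indexSetRotation}\bigvee_{j\in\indexSetRotation}\bigl(\exists^{=1}y\,E_{i,j}(x,y)\land\bigwedge_{j'\ne j}\lnot\exists y\,E_{i,j'}(x,y)\bigr)$. For $a\in\univ{A}$ and $i\in\indexSetRotation$, use the same $\other{a}$ and $f$: since $\other{\struc{A}}\models\varphi_{\operatorname{rotationMap}}$, there is a unique $j\in\indexSetRotation$ and a unique $\other{b}\in\univ{\other{A}}$ with $(\other{a},\other{b})\in\rel{E_{i,j}}{\other{\struc{A}}}$, and no $j'\ne j$ admits any $E_{i,j'}$-neighbour of $\other{a}$. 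Because $\other{b}\in N_1^{\other{\struc{A}}}(\other{a})$, the element $b:=f^{-1}(\other{b})$ exists, and $f$ being an isomorphism on the whole $2$-neighbourhood transports both the existence of $b$ as an $E_{i,j}$-neighbour of $a$ and the non-existence of any $E_{i,j'}$-neighbour of $a$ for $j'\ne j$.

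The only real work is the uniqueness bookkeeping, which will be routine: a putative second $b'\in\univ{A}$ with $(a,b')\in\rel{E_{i,j}}{\struc{A}}$ or a putative $b''\in\univ{A}$ with $(a,b'')\in\rel{E_{i,j'}}{\struc{A}}$ for $j'\ne j$ would itself lie in $N_1^{\struc{A}}(a)$, whence $f(b')$ or $f(b'')$ would contradict the corresponding uniqueness/non-existence statement at $\other{a}$ in $\other{\struc{A}}$. Collecting both conjuncts yields $\struc{A}\models\varphi_{\operatorname{rotationMap}}$, completing the claim. No genuine obstacle arises beyond the $1$-locality check for each subformula of $\varphi_{\operatorname{rotationMap}}$, which is the same template already applied to $\varphi'_{\operatorname{tree}}$.
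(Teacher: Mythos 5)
Your proposal is correct and follows essentially the same transfer-of-satisfaction strategy as the paper: fix $a$, pull in a model $\other{\struc{A}}\models\varphi_{\zigzag}$ with an isomorphism $f$ of $2$-balls supplied by the definition of $\rho_k$, and push atoms back and forth through $f$ for each of the two conjuncts of $\varphi_{\operatorname{rotationMap}}$. The only (cosmetic) difference is that you argue the symmetry conjunct by direct forward transport where the paper phrases it as a contradiction; the rotation-map conjunct and its uniqueness bookkeeping are handled the same way in both.
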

\begin{proof}
	Let $\struc{A}\in \bigcup_{1\leq k \leq m}\classStruc{P}_{\rho_k}\setminus \{\struc{A}_{\emptyset}\}$. Then there is a $k\in \{1,\dots,m\}$ such that $\struc{A}\in \classStruc{P}_{\rho_k}$.
	
	By definition, $\varphi_{\operatorname{rotationMap}}=\varphi\land \psi$, where 
	\begin{align*}
	&\varphi:= \forall x \forall y \Big(\bigwedge_{i,j\in \indexSetRotation}(E_{i,j}(x,y)\rightarrow E_{j,i}(y,x))\Big) \text{ and }\\
	&\psi:= \forall x \Big(\bigwedge_{i\in \indexSetRotation}\Big(\bigvee_{j\in \indexSetRotation}\big(\exists^{=1}y E_{i,j}(x,y)\land \bigwedge_{\substack{j'\in \indexSetRotation\\ j'\not=j}}\lnot \exists y E_{i,j'}(x,y)\big)\Big)\Big).
	\end{align*}
	Thus, it is sufficient to prove that $\struc{A}\models\varphi$ and $\struc{A}\models \psi$. 
	
	To prove $\struc{A}\models\varphi$, assume towards a contradiction that there are $a,b\in \univ{A}$ such that for some pair $i,j\in \indexSetRotation$, we have that $(a,b)\in \rel{E_{i,j}}{\struc{A}}$, but $(b,a)\notin \rel{E_{j,i}}{\struc{A}}$. By construction of  $ \classStruc{P}_{\rho_k}$, there is a structure $\other{\struc{A}}\models \varphi_{\zigzag}$ and $\other{a}\in \univ{\other{A}}$ such that $(\mathcal{N}_2^{\struc{A}}(a),a)\cong (\mathcal{N}_2^{\other{\struc{A}}}(\other{a}),\other{a})$. Assume $f$ is an isomorphism from $(\mathcal{N}_2^{\struc{A}}(a),a)$ to  $(\mathcal{N}_2^{\other{\struc{A}}}(\other{a}),\other{a})$. Note that $f(b)$ is defined since $b$ is in the $2$-neighbourhood of $a$. Furthermore since $f$ is an isomorphism, $(a,b)\in \rel{E_{i,j}}{\struc{A}}$ implies $(\other{a},f(b))\in \rel{E_{i,j}}{\other{\struc{A}}}$, and $(b,a)\notin \rel{E_{j,i}}{\struc{A}}$ implies $(f(b),\other{a})\notin \rel{E_{j,i}}{\other{\struc{A}}}$. Hence $\other{\struc{A}}\not\models \varphi$, which contradicts $\other{\struc{A}}\models \varphi_{\operatorname{rotationMap}}$. 
	
	To prove $\struc{A}\models\psi$, assume towards a contradiction that there is an $a\in \univ{A}$ and $i\in \indexSetRotation$ such that  $\struc{A}\not\models \exists^{=1}y E_{i,j}(a,y)\land \bigwedge_{\substack{j'\in \indexSetRotation\\ j'\not=j}}\lnot \exists y E_{i,j'}(a,y)$ for every  $j\in \indexSetRotation$. We know that there is a structure $\other{\struc{A}}\models \varphi_{\zigzag}$ and $\other{a}\in \univ{\other{A}}$ such that $(\mathcal{N}_2^{\struc{A}}(a),a)\cong (\mathcal{N}_2^{\other{\struc{A}}}(\other{a}),\other{a})$. Let $f$ be an isomorphism from $(\mathcal{N}_2^{\struc{A}}(a),a)$ to $(\mathcal{N}_2^{\other{\struc{A}}}(\other{a}),\other{a})$. Since $\other{\struc{A}}\models \psi$, there must be $j\in \indexSetRotation$ such that $\other{\struc{A}}\models \exists^{=1}y E_{i,j}(\other{a},y)\land \bigwedge_{\substack{j'\in \indexSetRotation\\ j'\not=j}}\lnot \exists y E_{i,j'}(\other{a},y)$. Hence there must be $\other{b}\in \univ{\other{A}}$ such that $(\other{a},\other{b})\in \rel{E_{i,j}}{\other{\struc{A}}}$, which implies that $(a,f^{-1}(\other{b}))\in \rel{E_{i,j}}{\struc{A}}$. Since we assumed that $\struc{A}\not\models \exists^{=1}y E_{i,j}(a,y)\land \bigwedge_{\substack{j'\in \indexSetRotation\\ j'\not=j}}\lnot \exists y E_{i,j'}(a,y)$, there must be either $b\not=f^{-1}(\other{b})$ with $(a,b)\in \rel{E_{i,j}}{\struc{A}}$, or there must be $j'\in \indexSetRotation$, $j'\not=j$ and $b'\in \univ{A}$ such that $(a,b')\in \rel{E_{i,j'}}{\struc{A}}$.  In the first case  $(\other{a},f(b))\in \rel{E_{i,j}}{\other{\struc{A}}}$, since $f$ is an isomorphism. But then $\other{\struc{A}}\not\models \exists^{=1}y E_{i,j}(\other{a},y)$, which is a contradiction. In the second case, we get that $(\other{a},f(b'))\in \rel{E_{i,j'}}{\other{\struc{A}}}$. But then $\other{\struc{A}}\not\models \bigwedge_{\substack{j'\in \indexSetRotation\\ j'\not=j}}\lnot \exists y E_{i,j'}(\other{a},y)$, which is a contradiction. Hence $\struc{A}\models \varphi \land \psi$.
\end{proof}
\begin{claim}\label{claim:satisfyingBase}
	Every structure $\struc{A}\in \bigcup_{1\leq k \leq m}\classStruc{P}_{\rho_k}\setminus \{\struc{A}_{\emptyset}\}$ satisfies $\varphi_{\operatorname{base}}$.
\end{claim}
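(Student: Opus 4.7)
The plan is to follow the same template as the proofs of Claim~\ref{claim:satisfyingTree}, Claim~\ref{claim:satisfyingRotationMap} and Claim~\ref{claim:satisfyingRecursion}: exploit that the statement inside the outer universal quantifier of $\varphi_{\operatorname{base}}$ is $2$-local around its free variable, and transfer its satisfaction along the $2$-neighbourhood isomorphism guaranteed by the construction of $\rho_k$. First I would fix $\struc{A}\in\bigcup_{1\leq k\leq m}\classStruc{P}_{\rho_k}\setminus\{\struc{A}_{\emptyset}\}$ together with an index $k\in\{1,\dots,m\}$ such that $\struc{A}\in\classStruc{P}_{\rho_k}$, and pick an arbitrary $a\in\univ{A}$ with $\struc{A}\models\varphi_{\operatorname{root}}(a)$; if no such $a$ exists the outer implication is vacuously satisfied at every element, and there is nothing to do. By the construction of $\rho_k$ there is an $i\in I_k$ with $(\mathcal{N}_2^{\struc{A}}(a),a)\in\tau_{d,2,\sigma}^i$, and hence a structure $\other{\struc{A}}\models\varphi_{\zigzag}$, an element $\other{a}\in\univ{\other{A}}$ and an isomorphism $f:(\mathcal{N}_2^{\struc{A}}(a),a)\to(\mathcal{N}_2^{\other{\struc{A}}}(\other{a}),\other{a})$.

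Next I would transfer $\varphi_{\operatorname{root}}$ across $f$: any $\other{b}\in\univ{\other{A}}$ with $(\other{b},\other{a})\in\rel{F}{\other{\struc{A}}}$ would lie in $N_2^{\other{\struc{A}}}(\other{a})$, so $f^{-1}(\other{b})$ would be an $F$-predecessor of $a$ in $\struc{A}$, contradicting $\struc{A}\models\varphi_{\operatorname{root}}(a)$. Hence $\other{\struc{A}}\models\varphi_{\operatorname{root}}(\other{a})$, and since $\other{\struc{A}}\models\varphi_{\operatorname{base}}$ the entire conjunction in the conclusion of $\varphi_{\operatorname{base}}$ holds at $\other{a}$.

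Finally, I would pull each conjunct back to $\struc{A}$ via $f^{-1}$. For every $i,j\in\indexSetRotation$ the tuple $(\other{a},\other{a})\in\rel{E_{i,j}}{\other{\struc{A}}}$ yields $(a,a)\in\rel{E_{i,j}}{\struc{A}}$; and any hypothetical $b\neq a$ with $(a,b)\in\rel{E_{i,j}}{\struc{A}}$ or $(b,a)\in\rel{E_{i,j}}{\struc{A}}$ would satisfy $b\in N_1^{\struc{A}}(a)\subseteq N_2^{\struc{A}}(a)$, so that $f(b)\neq\other{a}$ would violate the universal part at $\other{a}$. For the existential conjunct, given $k_0,k_0'\in\indexSetH$ and $i_0,i_0'\in\indexSetRotation$ with $\rot_{H^2}(k_0,i_0)=(k_0',i_0')$, the witnesses $\other{y},\other{y}'\in\univ{\other{A}}$ both lie in $N_1^{\other{\struc{A}}}(\other{a})$, so $y:=f^{-1}(\other{y})$ and $y':=f^{-1}(\other{y}')$ are defined and witness the corresponding existential at $a$ in $\struc{A}$.

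I do not foresee any genuine obstacle here; the only item that requires a moment of care is checking that every element quantified over in the scope of the outer $\forall x$ in $\varphi_{\operatorname{base}}$ really lies within the $2$-neighbourhood of $x$, which follows directly from the syntactic shape of the atomic subformulas ($E_{i,j}(x,y)$, $F_k(x,y)$ and $F_{k'}(x,y')$ place $y,y'$ at distance $1$ from $x$, and the additional atom $E_{i,i'}(y,y')$ keeps us within radius $2$). With this in place we conclude $\struc{A}\models\varphi_{\operatorname{base}}$, finishing the claim.
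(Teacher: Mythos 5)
Your proposal is correct and follows essentially the same argument as the paper's proof: fix $a$ with $\struc{A}\models\varphi_{\operatorname{root}}(a)$, use the definition of $\rho_k$ (via $I_k$) to find $\other{\struc{A}}\models\varphi_{\zigzag}$ with an isomorphic $2$-ball at $\other{a}$, transfer $\varphi_{\operatorname{root}}$ along the isomorphism to get $\other{\struc{A}}\models\varphi_{\operatorname{root}}(\other{a})$, invoke $\other{\struc{A}}\models\varphi_{\operatorname{base}}$, and pull each conjunct back via $f^{\pm1}$ using that every quantified element stays inside radius~$2$. Both the decomposition into the universal $E_{i,j}$-part and the existential $F_k,F_{k'},E_{i,i'}$-part and the locality bookkeeping match the paper.
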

\begin{proof}
	Let $\struc{A}\in \bigcup_{1\leq k \leq m}\classStruc{P}_{\rho_k}\setminus \{\struc{A}_{\emptyset}\}$. Then there is a $k\in \{1,\dots,m\}$ such that $\struc{A}\in \classStruc{P}_{\rho_k}$.
	
	By definition, $\varphi_{\operatorname{base}}:=\forall x  \big(\varphi_{\operatorname{root}}(x)\rightarrow (\varphi(x)\land\psi(x))\big)$, where 
	\begin{align*}
	\varphi(x):=&\bigwedge_{i,j\in \indexSetRotation}\Big(E_{i,j}(x,x)\land \forall y \Big(x\not= y\rightarrow \big(\lnot E_{i,j}(x,y)\land \lnot E_{i,j}(y,x)\big)\Big)\Big)\text{ and }\\
	\psi(x):=& \bigwedge_{\substack{ \rot_{H^2}(k,i)=(k',i')\\k,k'\in \indexSetH\\i,i'\in \indexSetRotation }}\exists y \exists y'\big(F_k(x,y)\land F_{k'}(x,y')\land E_{i,i'}(y,y')\big).
	\end{align*}
	Thus, it is sufficient to prove that $\struc{A}\models \varphi(a)$ and $\struc{A}\models \psi(a)$ for every $a\in \univ{A}$ for which $\struc{A}\models \varphi_{\operatorname{root}}(a)$. Therefore assume $a\in \univ{A}$ is any element such that $\struc{A}\models \varphi_{\operatorname{root}}(a)$. Because $\struc{A}\in \classStruc{P}_{\rho_k}$ there is an $i\in I_{k}$ such that $(\mathcal{N}_2^{\struc{A}}(a),a)\in \tau_{d,2,\sigma}^i$. Then by definition there is a structure $\other{\struc{A}}\models \varphi_{\zigzag}$ and $\other{a}\in \univ{\other{A}}$ such that $(\mathcal{N}_2^{\struc{A}}(a),a)\cong (\mathcal{N}_2^{\other{\struc{A}}}(\other{a}),\other{a})$.  Let $f$ be an isomorphism from $(\mathcal{N}_2^{\struc{A}}(a),a)$ to $(\mathcal{N}_2^{\other{\struc{A}}}(\other{a}),\other{a})$. Assume that there is an element $\other{b}\in \univ{\other{A}}$ such that $(\other{b},\other{a})\in \rel{F}{\other{\struc{A}}}$. Since $f$ is an isomorphism and $\other{b}\in N_2^{\other{\struc{A}}}(\other{a})$ we get that $(f^{-1}(\other{b}),a)\in \rel{F}{\struc{A}}$ which contradicts that $\struc{A}\models\varphi_{\operatorname{root}}(a)$ as $\varphi_{\operatorname{root}}(x):= \forall y \lnot F(y,x)$. Hence there is no element  $\other{b}\in \univ{\other{A}}$ such that $(\other{b},\other{a})\in \rel{F}{\other{\struc{A}}}$ which implies that $\other{\struc{A}}\models \varphi_{\operatorname{root}}(\other{a})$. But since $\other{\struc{A}}\models \varphi_{\zigzag}$ we have that $\other{\struc{A}}\models \varphi_{\operatorname{base}}$ and hence $\other{\struc{A}}\models \varphi(\other{a})$ and $\other{\struc{A}}\models \psi(\other{a})$.
	
	To prove $\struc{A}\models\varphi(a)$ first observe that $(a,a)\in \rel{E_{i,j}}{\struc{A}}$ for every $i,j\in \indexSetRotation$ since $\other{\struc{A}}\models \varphi(\other{a})$ implies that $(\other{a},\other{a})\in \rel{E_{i,j}}{\other{\struc{A}}}$ for every $i,j\in \indexSetRotation$ and $f$ is an isomorphism mapping $a$ onto $\other{a}$. Assume that there is an element $b\in \univ{A}$, $b\not=a$ and indices $i,j\in \indexSetRotation$ such that either $(a,b)\in \rel{E_{i,j}}{\struc{A}}$ or $(b,a)\in \rel{E_{i,j}}{\struc{A}}$. Since $b\in N_2^{\struc{A}}(a)$ and $f$ is an isomorphism we get that $f(b)\not=f(a)=\other{a}$ and either $(\other{a},f(b))\in \rel{E_{i,j}}{\other{\struc{A}}}$ or $(f(b),\other{a})\in \rel{E_{i,j}}{\other{\struc{A}}}$. But this contradicts $\other{\struc{A}}\models \varphi(\other{a})$ and hence $\struc{A}\models \varphi(a)$.
	
	We now prove $\struc{A}\models \psi(a)$. Let  $k,k'\in \indexSetH$ and $i,i'\in \indexSetRotation$ such that $\rot_{H^2}(k,i)=(k',i')$. Since $\other{\struc{A}}\models \psi(\other{a})$ there must be elements $\other{b},\other{b}'\in \univ{\other{A}}$ such that $(\other{a},\other{b})\in \rel{F_k}{\other{\struc{A}}}$, $(\other{a},\other{b}')\in \rel{F_{k'}}{\other{\struc{A}}}$ and $(\other{b},\other{b}')\in \rel{E_{i,i'}}{\other{\struc{A}}}$. But since $\other{b},\other{b}'\in N_2^{\other{\struc{A}}}(\other{a})$ we get that $f^{-1}(\other{b})$ and $f^{-1}(\other{b}')$ are defined  and since $f$ is an isomorphism we get that $(a,f^{-1}(\other{b}))\in \rel{F_k}{\struc{A}}$, $(a,f^{-1}(\other{b}'))\in \rel{F_{k'}}{\struc{A}}$ and $(f^{-1}(\other{b}),f^{-1}(\other{b}'))\in \rel{E_{i,i'}}{\struc{A}}$. Hence $\struc{A}\models \exists y \exists y'\big(F_k(a,y)\land F_{k'}(a,y')\land E_{i,i'}(y,y')$ for any $k,k'\in \indexSetH$ and $i,i'\in \indexSetRotation$ such that $\rot_{H^2}(k,i)=(k',i')$ which implies that $\struc{A}\models \psi(a)$.
\end{proof}
\end{document}